\documentclass[letterpaper, 11pt]{article}

\usepackage{mathtools}  
\usepackage{amssymb}
\usepackage{amsthm}  

\theoremstyle{plain}
\newtheorem{theorem}{Theorem}
\newtheorem{lemma}[theorem]{Lemma}
\newtheorem{corollary}{Corollary}[theorem]
\newtheorem{prop}[theorem]{Proposition}
\newtheorem{conj}{Conjecture}

\theoremstyle{remark}
\newtheorem*{remark}{Remark}

\usepackage[theoremfont, trueslanted, largesc]{newpxtext}  
\usepackage[vvarbb, noamssymbols]{newpxmath}  

\usepackage[T1]{fontenc}  
\usepackage[english]{babel}  
\AddBabelHook{frspacing}{afterextras}{\frenchspacing}  

\usepackage[hmargin=1.2in, marginparwidth=1in, marginparsep=0.1in, vmargin=1in, headheight=0.6in, headsep=0.3in, footskip=0.5in, heightrounded]{geometry}  

\usepackage{subcaption}
\usepackage[numbers, sort&compress]{natbib}  
\usepackage[tracking]{microtype}  
\usepackage{fnpct}  

\usepackage{graphicx}  
\graphicspath{{./images/}}  

\usepackage{booktabs}

\allowdisplaybreaks[1]  

\renewcommand*{\emptyset}{\varnothing}  

\newcommand*{\Reals}{\mathbb{R}}
\newcommand*{\Integers}{\mathbb{Z}}

\DeclareMathOperator{\E}{\mathbb{E}}

\DeclarePairedDelimiterX\set[1]\lbrace\rbrace{\mkern1.5mu\def\suchthat{\;\delimsize|\;}#1\mkern1.5mu}

\usepackage[colorlinks, allcolors=blue]{hyperref}  
\hypersetup{
	pdftitle=Splitting Planar Necklaces with Few Straight Cuts,
	pdfauthor=Rishi Advani and Abolfazl Asudeh,
	pdfkeywords=Necklace splitting; ham sandwich; fair division,
	pdflang=en-US
}

\usepackage{ellipsis}  

\title{Splitting Planar Necklaces with Few Straight Cuts}
\author{Rishi Advani \and Abolfazl Asudeh}
\date{}

\begin{document}

\maketitle

\begin{abstract}
In the classical necklace splitting problem, beads of different colors lie on a line, and the goal is to divide them among $k$ agents using as few cuts as possible so that every agent receives the same number of beads of each color. We study the planar analog in which a cut is a full straight line and the objective is minimizing the number of lines: $m$ points colored red and blue must be separated by straight-line cuts into regions that can be allocated to $k$ agents so that each agent receives the same number of points of each color. Projection to a line gives $2(k-1)$ cuts, and $k-1$ are sometimes necessary; we give algorithms, static and dynamic, that provably close part of this gap. We develop several approaches that capitalize on the extra degree of freedom: a ``star'' search that produces solutions with as few as $\bigl\lceil \frac{k}{2} \bigr\rceil$ cuts and never more than $k$ whenever a fair contiguous circular partition exists, with further guarantees for evenly spaced input, and a general algorithm that produces at most $k + \log_2(k+1) - 2$ cuts in $\Theta(m \log k)$ time, built on a one-dimensional necklace splitting subroutine that requires no presorting. We then adapt the general algorithm to a dynamic setting, maintaining exact fairness and the cut bound under arbitrary relocation of points, with expected update time that scales with the total distance moved.
\end{abstract}

\section{Introduction}
Fair division of discrete resources is a classical topic at the intersection of combinatorics, topology, and algorithms. A key result is the necklace splitting theorem: a necklace with beads of $n$ colors can always be divided among $k$ agents using at most $n(k-1)$ cuts so that every agent receives the same number of beads of each color, and this bound is tight~\cite{Goldberg1985,Alon1986,Alon1987}; the continuous version goes back to Hobby and Rice~\cite{Hobby1965}. Beyond its combinatorial appeal, necklace splitting has found algorithmic applications in fair resource allocation, including memory- and time-efficient fair data structures~\cite{Shahbazi2024}, and efficient and dynamic algorithms for the one-dimensional problem have recently been developed~\cite{Alon2021,Shahbazi2024,Advani2026a}.

Many resources, however, are not arranged along a line. When the beads are points in the plane --- voters in a territory, facilities in a service area, records embedded in a two-dimensional feature space --- cutting with straight lines is the natural analog; a line is what a cut \emph{is} when cutting is a physical action: a stroke of a knife, saw, or laser across the whole stock, or a splitline in a districting map, is one operation, and the user pays per stroke. The extra geometric freedom then raises the possibility of using far fewer cuts than the one-dimensional bound $2(k-1)$.

In this paper, we study the following planar version of necklace splitting for two colors: cuts are full straight lines, an agent may receive any union of the cells of the resulting arrangement, and the objective is the number of lines. This differs from the well-studied convex equipartition problem~\cite{Bespamyatnikh2000,Sakai2002,Blagojevic2014}, where the objective is convexity of the pieces, and from hyperplane-arrangement partition problems~\cite{Ramos1996,Blagojevic2023,Hubard2019}, where two agents share many colors rather than many agents sharing two. We chart the landscape between the trivial upper bound of $2(k-1)$ cuts obtained by projection and the $k-1$ geometric lower bound, and we give constructive algorithms --- both static and dynamic --- that provably beat the projection bound.

\subsection{Problem formulation}
There are $m$ points in the plane, of which $r$ are colored red and $b$ are colored blue. There are $k \geq 2$ agents. Both $r$ and $b$ are divisible by $k$. We want to cut the plane with straight lines\footnote{See Appendix~\ref{sec:jordan_curve} for a variant where cuts are instead Jordan curves.} as few times as possible so that the resulting regions can be fairly allocated among the agents: each agent should receive the same number of red points and the same number of blue points. If an input point is directly on a cut, an algorithm may choose which side it belongs to. Furthermore, the algorithm is allowed to make coincident cuts.

\subsection{Contributions}
We make the following contributions.
\begin{description}
\item[Landscape of the problem (Section~\ref{sec:initial_exploration})] We observe that projection to one dimension yields $2(k-1)$ cuts, that $k-1$ cuts are sometimes necessary, that recursive ham sandwich cuts give $k-1$ cuts when $k$ is a power of two, and that a natural one-agent-at-a-time approach is blocked by the nonexistence of $\alpha$--ham sandwich cuts.
\item[The star approach (Section~\ref{sec:star_approach})] We cut with concurrent lines through a common viewing point (a $2j$-fan~\cite{Barany2001} whose rays come in antipodal pairs) reducing the planar problem to splitting a circular necklace in which each line supplies two antipodal cut points. We show that the minimum number of lines needed is governed by a maximum matching in an auxiliary graph, and obtain a polynomial-time algorithm using at most $k$ cuts whenever some viewing point admits a solution. For evenly spaced input we prove a parity dichotomy and cut bounds for $m \in \set{2k, 3k, 4k}$ that beat the projection bound.
\item[Three agents (Section~\ref{sec:three_agents})] For $k=3$, three cuts suffice, as follows from existing convex 3-partition results~\cite{Ito2000,Bespamyatnikh2000,Sakai2002}; we give an elementary linear-time algorithm, and we prove that two parallel cuts suffice when $m=6$.
\item[A general algorithm (Section~\ref{sec:general_alg})] We design a one-dimensional necklace splitting subroutine that runs on unsorted input in $\Theta(M \log K)$ time and can split off allocations for groups of agents. Combining it with recursive ham sandwich cuts yields a solution with at most $k + \log_2(k+1) - 2$ cuts computed in $\Theta(m \log k)$ time. For input in strictly convex position, we achieve $k-1$ cuts.
\item[A dynamic algorithm (Section~\ref{sec:dynamic})] We maintain a fair solution under arbitrary relocations of points. Fairness and the cut bound are preserved deterministically for every update, and the worst-case update time matches one static run. Under a stochastic bounded-displacement model, the expected update time scales with the distance moved, for arbitrarily long update sequences; the same bound also holds for a single adversarial batch applied to a random instance. The repair-region selection problem, NP-complete in general~\cite{Advani2026a}, is solved exactly here via enumeration and a parameterized Steiner-forest algorithm.
\item[Structural results (Appendices)] We observe that $k-1$ cuts suffice when cuts are Jordan curves instead of straight lines, and we derive exact (non-asymptotic) counts of the angular orderings of planar point sets, refining the asymptotic bounds of D\'{i}az-B\'{a}\~{n}ez et al.~\cite{DiazBanez2014} and Kyn\v{c}l and Soukup~\cite{Kyncl2025}; this quantity governs the number of distinct necklaces the star approach must test for fairness.
\end{description}

\begin{table}[pt]
\centering
\caption{Summary of results. All cut counts are upper bounds unless otherwise stated. For the dynamic algorithm with random input, running time is in expectation and $\varepsilon_m = O\bigl(\sqrt{(\log m)/m}\bigr)$. \\
$^\dagger$Assumes a viewing point whose induced circular necklace admits a fair contiguous $k$-partition (Section~\ref{sec:star_approach}).}
\label{tab:results}
\small
\setlength{\tabcolsep}{5pt}
\begin{tabular}{lccr}
\toprule
Input & Cuts & Time & Where \\
\midrule
\multicolumn{4}{l}{\textit{Baselines and bounds}} \\
Arbitrary (projection) & $2(k-1)$ & $\Theta(m \log m)$ & \S{} \ref{sec:initial_exploration} \\
Worst case (lower bound) & $\geq k-1$ & --- & Prop.~\ref{prop:lower_bound} \\
\addlinespace
\multicolumn{4}{l}{\textit{Static algorithms}} \\
Arbitrary (star approach)$^\dagger$ & $\lceil k/2 \rceil$ to $k$ & $\Theta(m^5)$ & \S{} \ref{sec:star_approach} \\
Evenly spaced$^\dagger$, $k$ even or $m = 2k$ & $\lceil k/2 \rceil$ & $\Theta(m)$ & Thm.~\ref{thm:parity_dichotomy} \\
Evenly spaced$^\dagger$, $k$ odd and $m > 2k$ & $k$ & $\Theta(m)$ & Thm.~\ref{thm:parity_dichotomy} \\
Evenly spaced, $k$ even & $2(k-1) - 1$ & $\Theta(m)$ & \S{} \ref{sec:star_evenly_spaced} \\
Evenly spaced, $m = 2k$ & $k$ & $\Theta(m)$ & \S{} \ref{sec:star_evenly_spaced} \\
Evenly spaced, $m = 3k$ & $\lfloor (4k-2)/3 \rfloor$ & $\Theta(m)$ & Prop.~\ref{prop:m3k} \\
Evenly spaced, $m = 4k$, $r = 2k$ & $k$ & $\Theta(m)$ & \S{} \ref{sec:star_evenly_spaced} \\
Evenly spaced, $m = 4k$, $r = k$ & $\lfloor 7k/4 \rfloor$ & $\Theta(m)$ & Prop.~\ref{prop:m4k} \\
Arbitrary, $m = 6$ and $k=3$ & $2$ & $\Theta(1)$ & Thm.~\ref{thm:six_points} \\
Arbitrary & $k + \log_2(k+1) - 2$ & $\Theta(m \log k)$ & Thm.~\ref{thm:general_alg} \\
Convex & $k-1$ & $\Theta(m \log k)$ & Prop.~\ref{prop:convex} \\
\addlinespace
\multicolumn{4}{l}{\textit{Dynamic algorithm}} \\
Arbitrary & $k + \log_2(k+1) - 2$ & $\Theta(m \log k)$ & Prop.~\ref{prop:batch_correct} \\
Random, displacement $\leq \delta$ & $k + \log_2(k+1) - 2$ & $O\bigl((\delta + \varepsilon_m)\, m (\log k)^2 + \log k\bigr)$ & Thm.~\ref{thm:single} \\
\bottomrule
\end{tabular}
\end{table}

\subsection{Related work}
Throughout this section, $d$ denotes the dimension, $n$ the number of colors, $k$ the number of agents, and $m$ the number of points. Kano and Urrutia~\cite{Kano2020} survey problems on colored point sets in the plane, including splitting problems, and Rold\'{a}n-Pensado and Sober\'{o}n~\cite{RoldnPensado2021} survey mass partitions broadly.

\subsubsection{Necklaces in one dimension}
Several works refine the necklace splitting theorem. Asada et al.~\cite{Asada2018} study a ``binary'' version in which $k=2^j$ agents are placed at the vertices of a $j$-cube and adjacent pieces must go to adjacent agents; Joji\'{c} et al.~\cite{Joji2021} generalize their results. Gebhardt et al.~\cite{Gebhardt2022} extend the two-agent theorem to necklaces with \emph{signed} beads, where beads count positively or negatively toward an agent's share. These are existence results.

On the algorithmic side, Meunier~\cite{Meunier2008,Meunier2014} gives constructive, combinatorial proofs of the necklace splitting theorem. Alon and Graur~\cite{Alon2021} give polynomial-time approximation algorithms, trading extra cuts or an $\varepsilon$-approximate split for efficiency, in both offline and online models. Advani et al.~\cite{Advani2026a} maintain a fair necklace splitting under relocation, insertion, and deletion of beads, in linear time with the optimal $2(k-1)$ cuts for two colors, and in polylogarithmic time for approximate fairness.

Filos-Ratsikas and Goldberg~\cite{FilosRatsikas2019} show that the necklace splitting and discrete ham sandwich problems are PPA-complete; Filos-Ratsikas et al.~\cite{FilosRatsikas2021} show that necklace splitting with a prime number of thieves is in PPA-$p$. Tractable cases arise from well-separation: Borzechowski et al.~\cite{Borzechowski2025} and Schnider et al.~\cite{Schnider2025} solve the fair and unfair variants of necklace splitting for $k=2$ in polynomial time on $n$-separable necklaces, and Chiu et al.~\cite{Chiu2020} place the related $\alpha$--ham sandwich problem in the UEOPL complexity class.

\subsubsection{Multidimensional necklaces}
De Longueville and \v{Z}ivaljevi\'{c}~\cite{deLongueville2008} split $n$ measures on the $d$-cube among $k$ agents using $n(k-1)$ axis-aligned hyperplanes, with any prescribed number of cuts in each direction. Karasev et al.~\cite{Karasev2016} obtain analogous results for iterated cuts without the requirement that the hyperplanes are axis-aligned. Blagojevi\'{c} and Sober\'{o}n~\cite{Blagojevic2017a} split measures using convex, iterated convex, and iterated hyperplane partitions without fixed directions, and \v{Z}ivaljevi\'{c}~\cite{Zivaljevic2015} use polyhedral curtains. All of these are existence results.

Closest to our setting is the discrete multidimensional necklace, a colored point set in $\Integers^d$ cut with axis-aligned hyperplanes, posed by Grytczuk and Lubawski~\cite{Grytczuk2015}. Unlike in one dimension, it does not reduce to the continuous version: Laso\'{n}~\cite{Lason2015} shows that between $d(n-1)\frac{k}{2}$ and $(2d-1)n(k-1)$ cuts are needed. Our problem is exactly the planar discrete necklace with two colors, except that the cuts are arbitrary lines rather than axis-aligned ones.

\subsubsection{Partitioning using lines and fans}
Partitioning the plane with a single line is the ham sandwich cut, computable in linear time in the plane~\cite{Lo1994}. B\'{a}r\'{a}ny et al.~\cite{Barany2007} show that for well-separated convex bodies a single hyperplane can cut prescribed, possibly unequal, fractions of each; Steiger and Zhao~\cite{Steiger2009} prove the point-set version and compute the cut in $O\bigl(m(\log m)^{d-3}\bigr)$ time for $d \geq 3$, and Bereg~\cite{Bereg2012} in linear time for any fixed $d$. Abbott et al.~\cite{Abbott2009} maintain a ham sandwich cut under insertions and deletions; they also maintain an two-line equipartition of a single point set.

Partitions using \emph{$j$-fans}, $j$ rays from a common apex, were introduced by B\'{a}r\'{a}ny and Matou\v{s}ek~\cite{Barany2001}, who cut two measures into fractions $\frac{2}{5}$, $\frac{1}{5}$, $\frac{1}{5}$, and $\frac{1}{5}$ with a 4-fan; in a follow-up they equipartition two measures with a 4-fan~\cite{Barany2002}. Bereg~\cite{Bereg2005} proves a discrete version for three point sets, with the apex on any prescribed line, with an algorithm running in $O\bigl(m (\log m)^2\bigr)$ time, which is nearly optimal~\cite{Roy2007}. Our star approach (Section~\ref{sec:star_approach}) cuts with $j$ lines through a common point, i.e., a $2j$-fan whose rays come in antipodal pairs, and the two-cut discussion for $k=3$ in Section~\ref{sec:three_agents} concerns 4-fans of this restricted form.

There are also results on partitioning a single measure or point set. B\'{a}r\'{a}ny et al.~\cite{Barany2010} equipartition a convex body with a 3-fan such that the pieces have equal perimeter. Megiddo~\cite{Megiddo1985} finds two lines forming four equal sectors in linear time; Roy and Steiger~\cite{Roy2007} find the same with orthogonal lines in $\Theta(m \log m)$ time. They also find three concurrent lines forming six equal sectors, as guaranteed by Buck and Buck~\cite{Buck1949}, in the same amount of time.

Another line of work focuses on how many hyperplanes are needed to bisect multiple measures. Barba et al.~\cite{Barba2019} bisect four planar masses with two lines, Hubard and Karasev~\cite{Hubard2019} bisect $d\ell$ measures in $\Reals^d$ with $\ell$ hyperplanes for $d$ a power of two, and Blagojevi\'{c} et al.~\cite{Blagojevic2021} extend this to odd $d$ in further dimensions. For point sets, Schnider~\cite{Schnider2022} shows that finding the $\ell$ lines is PPA-complete, and Deligkas et al.~\cite{Deligkas2022} show PPA-hardness even for $\varepsilon$-approximate solutions. These problems share with ours the feature that an agent receives a union of cells of a line arrangement, but they divide many colors between two agents, whereas we divide two colors among many agents.

\subsubsection{Equitable convex and orthogonal partitions}
Equitable convex partitions ask for $k$ convex pieces, each containing a $\frac{1}{k}$ share of each color. For two colors in the plane, their existence was conjectured by Kaneko and Kano~\cite{Kaneko1999}; it was proved independently by Ito et al.~\cite{Ito2000}, Bespamyatnikh et al.~\cite{Bespamyatnikh2000}, and Sakai~\cite{Sakai2002}. In $\Reals^d$, Sober\'{o}n~\cite{Soberon2011} and Karasev et al.~\cite{Karasev2013} equipartition $d$ measures into $k$ convex pieces; Holmsen et al.~\cite{Holmsen2017} adapt the result to point sets when $m=k(d+1)$, and Blagojevi\'{c} et al.~\cite{Blagojevic2017b} do so for general $m$. Bereg~\cite{Bereg2009} equipartitions two point sets into $k$ regions using at most $k-1$ horizontal and $k-1$ vertical segments. Convex and orthogonal pieces are carved out by segments, so these results do not bound the number of full lines; conversely, our allocations need not be convex or even connected.

\section{Initial Bounds and Approaches}
\label{sec:initial_exploration}
In this section, we explore the landscape of the problem and obtain some initial results.
If we project all the input points onto a line, the resulting arrangement of points can be interpreted as an instance of the one-dimensional necklace splitting problem. Running the corresponding algorithm~\cite{Shahbazi2024,Advani2026a} gives us a fair solution with $2(k-1)$ cuts~\cite[Proposition~2]{Alon2021}. However, by doing this, we are not taking advantage of the full flexibility of the plane; we may be missing opportunities to reduce the number of cuts.
On the other extreme, the lazy caterer's sequence limits how efficient cuts can be; we need $\Omega\bigl(\sqrt{k}\bigr)$ cuts. This follows from the fact that an arrangement of $j$ lines can produce at most $O\bigl(j^2\bigr)$ regions, and at least $k$ regions are needed for fair division to be possible.
We can prove a tighter bound still.

\begin{prop}
\label{prop:lower_bound}
For every valid $k$, $r$, and $b$, there is a problem instance that requires $k-1$ cuts.
\end{prop}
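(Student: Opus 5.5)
Place all $m = r+b$ points in \emph{strictly convex position}, e.g.\ on a circle, and choose the colors adversarially. The natural choice is to alternate colors around the circle as far as the counts allow, or more robustly to let every maximal run around the circle contain few points. We then show that with $j$ lines, the fairness condition forces $j \geq k-1$.

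The key steps are as follows.

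First, for points in convex position, consider any arrangement of $j$ lines. Each line meets the circle in at most two points, so the $j$ lines cut the circular sequence of input points into at most $2j$ contiguous circular arcs. Each cell of the arrangement meets the convex curve in a union of such arcs, and the points in each arc lie in a single cell.

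Second, we bound the number of points an agent can receive. Say all points are red except for the blue ones, which are spread out. The simplest worst case is the coloring $r = m$, $b = 0$ when allowed. In that case each agent needs exactly $r/k$ points, and a counting argument on arcs alone only gives $j \geq \lceil k/2\rceil$. So arcs alone are too weak, and we must use more structure: a single line splits the points into just two contiguous arcs, namely the two sides.

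Third, the stronger argument counts \emph{distinct cells}, not arcs. Each agent must receive at least one nonempty cell. For points on a circle in convex position, the number of cells containing input points is at most $j+1$, because the arcs cut off by the lines interact like chords. More precisely, the nonempty cells correspond to regions of the circle's interior cut by $j$ chords. Chords may cross, giving up to $1 + j + \binom{j}{2}$ regions, but points lie only on the boundary. A boundary point lies in a region touching the circle, and the number of regions touching the circle is at most $2j$, not $j+1$. This again gives only $k/2$.

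Hence the real obstacle is that convex position with a naive coloring does not suffice. I would instead place the points \emph{collinearly}, or nearly so, on a line $L$. A generic line, distinct from $L$, meets $L$ in one point, so $j$ lines partition the points on $L$ into at most $j+1$ contiguous intervals. Points lying on a cut may be assigned to either side, which does not help beyond creating intervals. A line coincident with $L$ does not separate the points along $L$ at all. Now choose the coloring that requires the most intervals: the necklace consisting of $r$ reds followed by $b$ blues, when both are positive. Each agent needs $r/k$ reds and $b/k$ blues. The red block needs $k-1$ internal breakpoints if each agent is to get a contiguous red portion, though a rigorous count must allow agents to receive several intervals.

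The cleanest version is the following. If $b = 0$, each agent needs at least one interval, so $j+1 \geq k$. If $r, b > 0$, each agent must receive a red point, and the reds form a block, so within the red block there must be at least $k$ intervals, meaning at least $k-1$ breakpoints inside it. Since each line contributes at most one breakpoint, this gives $j \geq k-1$.

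The main thing to check is the degenerate case where a line contains $L$, which contributes no breakpoints, together with the rule for points lying on a cut. If strict general position is desired, one can perturb the points slightly onto a convex curve of tiny curvature. Then each line still crosses the near-line arc in a way that, after restricting to the relevant block, still gives at most one useful breakpoint per line. Handling this perturbation carefully is the delicate step; it is easier to simply allow collinear input, which the problem statement permits.
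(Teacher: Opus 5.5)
Your collinear argument has a genuine gap, and it sits in the degenerate case you set aside. A cut along $L$ does not ``contribute no breakpoints.'' With that cut, every input point lies on a cut, and by the problem's side-choice rule each point chooses its side independently. So with $L$ plus $j-1$ transversal lines, each interval along $L$ can be split arbitrarily between the cell above $L$ and the cell below it. Your step ``within the red block there must be at least $k$ intervals'' then fails. For $k=4$, the cut $L$ plus one transversal line through the middle of the red block already puts four red points into four distinct cells.

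In particular, your $b=0$ case is false. Four red points on $L$ are separated by those two cuts, and more generally, for $k=4$ a halving line followed by a ham sandwich cut of its two sides always splits a one-colored set into four equal parts. So the proposition needs both colors present, and any proof must use both colors. The perturbation you suggest does not restore ``one useful breakpoint per line'' either. A line nearly parallel to $L$ is a secant of the slightly curved arc and can cross it twice inside the red block.

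The missing idea is to count the two colors simultaneously, which is exactly the paper's proof. Take each color as one contiguous block, and call a maximal run of consecutive points given to one agent a piece. At least $k$ pieces must contain red and at least $k$ must contain blue, while at most two pieces contain both. Hence there are at least $2(k-1)$ pieces. The paper places the points on a circle with the two colors on complementary arcs, so this means at least $2(k-1)$ separation points, and each line supplies at most two; hence $j \ge k-1$.

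So convex position does suffice. You abandoned it because you tried alternating or one-colored inputs and counted one color at a time. The same two-color count also repairs your collinear instance:
\begin{itemize}
\item If $L$ is one of the $j$ cuts, there are at most $2j$ nonempty cells, of which at most two are bichromatic, so $2j \ge 2k-2$.
\item If no cut equals $L$, your interval argument goes through; the two-color count with at most $j+1$ cells even gives $j \ge 2k-2$.
\end{itemize}
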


\begin{proof}
Place points $q_1, \dots, q_m$ on a circle, in cyclic order, with $q_1, \dots, q_r$ colored red and $q_{r+1}, \dots, q_m$ colored blue; each color class occupies a contiguous arc.

Fix some fair solution. For the red points to be distributed among $k$ agents, there must be at least $k$ intervals (sets of consecutive input points assigned to the same agent) containing red points. Similarly, there must be at least $k$ intervals containing blue points. Due to the contiguous arcs, there at most two intervals that contain both red and blue points; thus, there are at least $2(k-1)$ distinct intervals. Separating $2(k-1)$ intervals requires $2(k-1)$ cut points. A line can intersect a circle in at most two places, so there must be at least $k-1$ cuts.
\end{proof}

If $k$ is a power of two, we can divide and conquer by recursively finding ham sandwich cuts~\cite{Bespamyatnikh2000}. In total, we make $k-1$ cuts, which is significantly better than the projection approach. However, this approach does not naively generalize to other values of $k$. There is no way to inductively solve the problem one agent at a time in $k-1$ cuts; this would imply the existence of an $\alpha$--ham sandwich cut~\cite{Barany2007}, which does not always exist~\cite[Figure~1]{Chiu2020}. Their counterexample can be generalized to any higher dimension for any $\alpha \in \bigl(0, \frac{1}{2}\bigr) \cup \bigl(\frac{1}{2}, 1\bigr)$: have the red points tightly clustered within a blue hypersphere.

\section{Star Approach}
\label{sec:star_approach}
In this section, we describe an approach that beats the $2(k-1)$ bound in certain cases.
Fix a \emph{viewing point} in the plane and project the input points radially onto a circle centered at that point, creating a circular necklace\footnote{We can also apply this approach to higher dimensions by first projecting the points onto a plane.}. We cut with \emph{generalized cuts}: full lines through the viewing point. Each generalized cut meets the circle in two antipodal points, inducing a \emph{primary} and a \emph{secondary} one-dimensional cut on the circular necklace.

The resulting algorithm is as follows. Two arrangements of the viewing point govern the search. Let $\mathcal{A}_{\mathrm{line}}$ be the arrangement of the $\binom{m}{2}$ lines spanned by pairs of input points, and let $\mathcal{A}_{\mathrm{ray}}$ be the coarser arrangement obtained by deleting the open segment between each pair, leaving only the two rays. The circular order of the projected points is constant as the viewing point varies within a cell of $\mathcal{A}_{\mathrm{ray}}$: crossing the open segment between two beads makes them antipodal without changing the cyclic order~\cite{Devillers2007,DiazBanez2014}. Which pairs of gaps can be served by a single line through the viewing point additionally depends on the antipodal structure, which is constant only within a cell of $\mathcal{A}_{\mathrm{line}}$. Both arrangements have $O(m^4)$ cells, and a representative viewing point for each cell is easy to compute. For every cell of $\mathcal{A}_{\mathrm{ray}}$, we (i) decide whether or not its necklace admits a fair contiguous $k$-partition, i.e., a set of cut points splitting the necklace into $k$ arcs, each containing exactly $\frac{r}{k}$ red and $\frac{b}{k}$ blue beads; and, if so, for every cell of $\mathcal{A}_{\mathrm{line}}$ it contains, we (ii) compute the minimum number of generalized cuts realizing such a partition. Section~\ref{sec:star_matching} reduces (i) to an elementary computation and (ii) to a maximum-matching problem. We output the best solution found over all cells (if one exists); the lines through the chosen viewing point then constitute a fair solution of the planar instance. In practice, one may instead sample a subset of cells at random, trading the guarantee of optimality over viewing points for speed.

Before analyzing the algorithm further, we give a baseline lower bound: $j$ concurrent lines create only $2j$ sectors, and $k$ agents require at least $k$ nonempty regions, so no viewing point admits a solution with fewer than $\bigl\lceil \frac{k}{2} \bigr\rceil$ generalized cuts.

\subsection{Maximum matching characterization}
\label{sec:star_matching}
The source of the approach's power is that a cut point is never harmful: an unneeded cut point merely subdivides one agent's arc into two arcs, both of which can be allocated to that agent. Cut points come in rigid antipodal pairs, but only some of them need to be useful; the question is how many generalized cuts can produce two useful cut points. We formalize this as a matching problem.

Fix a viewing point that does not lie on the line through any two input points; the projected beads then occupy $m$ distinct angles. Let the \emph{gaps} be the $m$ open arcs between consecutive beads. A \emph{fair contiguous $k$-partition} $\Pi$ of the necklace is a set of $k$ cut points, one in each of $k$ distinct gaps (the \emph{cut gaps} $G_1, \dots, G_k$ of $\Pi$), such that each of the $k$ arcs between consecutive cut points (the \emph{bead arcs of $\Pi$}) contains exactly $\frac{r}{k}$ red and $\frac{b}{k}$ blue beads. A cut point directly on a bead is assigned to an adjacent gap. Given a cut gap $G_i$, its closure $\overline{G_i}$ is the union of the cut gap with its two bounding beads. Since every bead arc of $\Pi$ contains $\frac{m}{k} \geq 2$ beads, no two cut gaps share a bounding bead; their closures are pairwise disjoint. A set of generalized cuts \emph{realizes} $\Pi$ if the regions they create can be allocated so that each agent receives exactly the beads of one bead arc of $\Pi$.

Define the \emph{antipodal compatibility graph} $H(\Pi)$: its vertices are the cut gaps of $\Pi$, and $G_i G_j$ ($i \neq j$) is an edge if there is an angle $\varphi$ with $\varphi \in \overline{G_i}$ and $\varphi + \pi \in \overline{G_j}$, i.e., a single line can supply cut points to both gaps. Let $M(\Pi)$ be the size of a maximum matching in $H(\Pi)$.

\begin{figure}[pt]
\centering
\includegraphics[page=1]{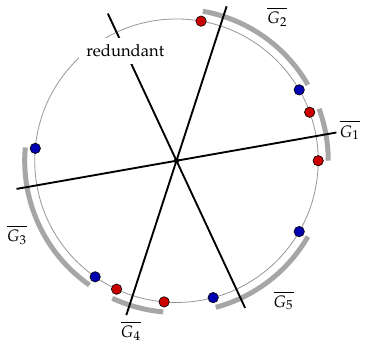}
\hfill
\includegraphics[page=2]{tikz}
\hfill
\includegraphics[page=3]{tikz}
\caption{A visualization of the star approach with $k=5$ and $m=10$. The thick gray arcs outside the circle are the closures $\overline{G_i}$ of the five cut gaps. In the quotient circle $S'$ (Lemma~\ref{lem:H_structure}), $A_1, \dots, A_4$ form a tree, and $A_5$ forms an isolated vertex. The maximum matching in $H(\Pi)$ has size $2$, so $k - M(\Pi) = 3$ lines realize $\Pi$: two lines each serve a matched pair, and the line through $G_5$ has a redundant antipodal cut point inside one agent's bead arc.}
\end{figure}

\begin{theorem}
\label{thm:matching}
The minimum number of generalized cuts realizing $\Pi$ is exactly $k - M(\Pi)$.
\end{theorem}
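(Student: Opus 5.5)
We prove the two inequalities separately: an explicit construction of $k - M(\Pi)$ lines (upper bound) and a counting argument showing no realizing family can be smaller (lower bound).

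\textbf{Upper bound.} Fix a maximum matching $\mathcal{M}$ in $H(\Pi)$ with $M(\Pi)$ edges. For each edge $G_iG_j \in \mathcal{M}$, the definition of $H(\Pi)$ gives an angle $\varphi$ with $\varphi \in \overline{G_i}$ and $\varphi+\pi \in \overline{G_j}$. We take the line through the viewing point at angle $\varphi$. If a cut point lands on a bounding bead, the convention that a cut point on a bead is assigned to an adjacent gap lets us treat it as lying in the intended gap, with that bead placed on the appropriate side. For each of the $k - 2M(\Pi)$ unmatched gaps we add one line whose primary cut point lies in that gap; its secondary cut point lands wherever it lands. In total this is $M(\Pi) + (k - 2M(\Pi)) = k - M(\Pi)$ lines, and every cut gap $G_1,\dots,G_k$ receives a cut point.

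The lines cut the circle into sectors, and every sector lies inside a single bead arc of $\Pi$, since every boundary between bead arcs is a cut point. Because an extra cut point only subdivides one agent's arc, we give each agent all sectors inside its bead arc. The regions of the line arrangement through a common point are exactly these sectors, and the beads are the radial projections of the input points. Hence the allocation gives each agent exactly the beads of one bead arc, and the family realizes $\Pi$.

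\textbf{Lower bound.} Take any family of $L$ generalized cuts realizing $\Pi$. We claim every cut gap $G_i$ must contain a cut point, or else be hit by one within $\overline{G_i}$ with the bead resolved correctly. Otherwise the two beads bounding $G_i$ lie in the same sector, hence the same region, and so go to the same agent. But they belong to different bead arcs, contradicting realization. The one delicate case is a line passing exactly through a bounding bead, where the assignment rule decides the side; this is why edges are defined using closures rather than open gaps. So for each $i$ we choose a line $\ell_i$ that serves $G_i$.

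Since the closures $\overline{G_i}$ are pairwise disjoint, one endpoint of a line can serve at most one gap, so each line serves at most two gaps. If a line serves two gaps $G_i$ and $G_j$ with its two antipodal endpoints, then $G_iG_j$ is an edge of $H(\Pi)$. If it served two gaps with the same endpoint, those closures would intersect, which is impossible. Fix any assignment $i \mapsto \ell_i$. The lines used twice yield pairwise vertex-disjoint edges, i.e.\ a matching of some size $t \le M(\Pi)$, and so $L \ge k - t \ge k - M(\Pi)$.

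\textbf{Main obstacle.} The step needing the most care is the boundary case in both directions, where a cut point sits exactly on a bead. This is the reason the edge relation uses closures. We must check that a line through a bead shared conceptually by $\overline{G_i}$ can still separate correctly, using the rule that the algorithm chooses the side for on-line points, and that this never lets one endpoint serve two gaps. Disjointness of the closures, which holds because $m/k \ge 2$, settles the latter. The general-position assumption that no two beads are collinear with the viewing point guarantees that the two endpoints of a line never both coincide with beads in conflicting ways.
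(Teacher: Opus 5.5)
Your proof is correct and follows the paper's argument essentially step for step. The upper bound places one line at a witnessing angle per matched edge and one line through each unmatched gap; the lower bound charges each cut gap to a line with a cut point in its closure, so that the doubly charged lines form a matching in $H(\Pi)$. One small correction to your closing remark: side choices never conflict because the closures $\overline{G_i}$ are pairwise disjoint (each bead bounds at most one cut gap) and sides are chosen per input point, not because of the general-position assumption, though this does not affect the argument.
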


\begin{proof}
(Upper bound.) Fix a maximum matching. For each matched pair $\set{G_i, G_j}$, place one line at a witnessing angle; its two cut points serve both gaps (resolving any on-bead cut point by the side-choice rule; the beads involved are pairwise distinct because the closures of the cut gaps are pairwise disjoint, so the side assignments do not conflict). For each unmatched cut gap, place one line through its interior; its antipodal cut point is redundant. Every cut gap is served, and we have a fair allocation using $k - M(\Pi)$ lines.

(Lower bound.) Consider any set of lines realizing $\Pi$. Each boundary between two consecutive bead arcs of $\Pi$ must be separated by some cut point lying in the closure of the corresponding cut gap (with side choices consistent with $\Pi$). Charge each cut gap to a line supplying such a point. A line has only two (antipodal) cut points, so it is charged by at most two cut gaps, and if it is charged by two, that pair is by definition an edge of $H(\Pi)$. Since each cut gap is charged to exactly one line, the doubly charged lines induce a matching in $H(\Pi)$, of some size $M' \leq M(\Pi)$, and the number of lines is at least $M' + (k - 2M') = k - M' \geq k - M(\Pi)$.
\end{proof}

\begin{corollary}
\label{cor:star_bounds}
If some viewing point induces a necklace admitting a fair contiguous $k$-partition $\Pi$, then the star approach produces a solution with $k - M(\Pi) \in \bigl[\bigl\lceil \frac{k}{2} \bigr\rceil,\, k\bigr]$ generalized cuts. In particular, it beats the $2(k-1)$ bound for every $k \geq 3$.
\end{corollary}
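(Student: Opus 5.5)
The plan is to read the corollary off Theorem~\ref{thm:matching}, using only the trivial bounds on the size of a matching.

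First, fix a viewing point in general position whose necklace admits a fair contiguous $k$-partition $\Pi$. We may assume general position: if the viewing point in the hypothesis lies on a spanning line, perturb it within its cell of $\mathcal{A}_{\mathrm{ray}}$. This keeps the cyclic order, so $\Pi$ survives. The star approach enumerates a representative of every cell of $\mathcal{A}_{\mathrm{line}}$, so such a viewing point is among the candidates examined. At that candidate, step (ii) computes the minimum number of generalized cuts realizing a fair partition. By Theorem~\ref{thm:matching}, this minimum equals $k - M(\Pi)$ for the partition it realizes. Because the algorithm returns the best value over all candidates, its output is at most $k - M(\Pi)$ for the $\Pi$ in hand.

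Second, bound $k - M(\Pi)$. The graph $H(\Pi)$ has exactly $k$ vertices, so every matching satisfies $0 \le M(\Pi) \le \lfloor k/2 \rfloor$. This gives
\[
\lceil k/2 \rceil = k - \lfloor k/2 \rfloor \le k - M(\Pi) \le k.
\]
The lower end agrees with the baseline sector-counting bound stated earlier: $j$ concurrent lines create only $2j$ sectors, so no star solution uses fewer than $\lceil k/2\rceil$ lines. Hence the number the algorithm reports also lies in $[\lceil k/2\rceil, k]$.

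Finally, compare with the projection bound: $k < 2(k-1)$ holds if and only if $k > 2$, so for every $k \ge 3$ the star solution uses strictly fewer cuts than $2(k-1)$. For $k=2$ the two bounds coincide, which is why the statement restricts to $k \ge 3$.

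The only delicate step is the first: justifying that the reported solution is governed by the same $\Pi$ as in the hypothesis, or by one at least as good. The way to phrase it is that the algorithm minimizes $k - M(\Pi')$ over all pairs of a candidate viewing point and a fair partition $\Pi'$. The pair given by the hypothesis is one such candidate, so the minimum is at most its value. Every value, including the minimum, lies in $[\lceil k/2\rceil, k]$ by the second step.
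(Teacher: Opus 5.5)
Your proposal is correct and is essentially the paper's proof: apply Theorem~\ref{thm:matching}, use $0 \le M(\Pi) \le \lfloor k/2 \rfloor$ for a matching on $k$ vertices, and note $k < 2(k-1)$ for $k \ge 3$. Your observation that the algorithm returns the minimum over all candidate viewing points and partitions, so its output is at most $k - M(\Pi)$ and still lies in $[\lceil k/2 \rceil, k]$, is a slightly more careful reading than the paper's ``the star approach computes such a solution.''

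The perturbation step is unnecessary, because gaps, $\Pi$, and $H(\Pi)$ are only defined for a viewing point off every spanning line. As written it is also slightly inaccurate: perturbing preserves the cyclic order only when the point lies on an open segment between two beads, not when it lies on one of the rays.
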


\begin{proof}
By Theorem~\ref{thm:matching}, the partition $\Pi$ is realized by exactly $k - M(\Pi)$ generalized cuts, and the star approach computes such a solution. A matching in a graph on $k$ vertices has size at most $\bigl\lfloor \frac{k}{2} \bigr\rfloor$, so $0 \leq M(\Pi) \leq \bigl\lfloor \frac{k}{2} \bigr\rfloor$ and $k - M(\Pi) \in \bigl[\bigl\lceil \frac{k}{2} \bigr\rceil, k\bigr]$. Finally, $k < 2(k-1)$ for every $k \geq 3$.
\end{proof}

Both extremes of the range occur. Suppose the projected points are contained in an open semicircle. All gaps except the \emph{complementary gap} --- the one spanning the unoccupied side --- lie inside that semicircle, which contains no antipodal pair of points, so every edge of $H(\Pi)$ is incident to the complementary gap and $M(\Pi) \leq 1$; if the complementary gap is not a cut gap of $\Pi$, then $H(\Pi)$ has no edges, $M(\Pi) = 0$, and realizing $\Pi$ costs the full $k$ lines. In practice, points will be spread more evenly around the viewing point, and Section~\ref{sec:star_evenly_spaced} shows that even spacing forces $M(\Pi) = \bigl\lfloor \frac{k}{2} \bigr\rfloor$ for even $k$.

We now analyze the running time of the star approach. First, we develop an efficient algorithm to compute a maximum matching in $H(\Pi)$; if a given necklace admits a fair contiguous $k$-partition, we run the algorithm. Next, we discuss the time complexity of the search for such necklaces over possible viewing points.

\begin{lemma}[Structure of the compatibility graph]
\label{lem:H_structure}
$H(\Pi)$ is a forest, or consists of a single cycle together with pendant vertices, each adjacent to exactly one vertex of the cycle. In particular, $H(\Pi)$ has at most $k$ edges and at most one cycle.
\end{lemma}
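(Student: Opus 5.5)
The plan is to pass to the quotient circle $S' = S^1/(\varphi \sim \varphi + \pi)$, the circle of undirected line directions through the viewing point, as the figure suggests. Let $\rho \colon S^1 \to S'$ be the two-to-one projection and set $A_i = \rho\bigl(\overline{G_i}\bigr)$, a closed arc of $S'$. By definition, $G_iG_j$ is an edge of $H(\Pi)$ exactly when some $\varphi \in \overline{G_i}$ has $\varphi + \pi \in \overline{G_j}$, that is, when $A_i \cap A_j \neq \emptyset$. So $H(\Pi)$ is the intersection graph of the arcs $A_1, \dots, A_k$ on $S'$.

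The key input is that the closures $\overline{G_i}$ are pairwise disjoint, which was established before Theorem~\ref{thm:matching}. Each point of $S'$ has exactly two preimages, so each point lies in at most two of the $\overline{G_i}$. Hence the family $\{A_i\}$ has ply at most $2$: no point of $S'$ lies in three arcs. Two caveats follow from this.
\begin{itemize}
\item A single closure might contain an antipodal pair. That would make $A_i$ wrap onto itself or even cover $S'$, and I would treat this degenerate case separately. If $A_i = S'$, the ply bound shows every other arc meets $A_i$ and no other arc, so $H(\Pi)$ is a star, which is a tree.
\item Otherwise each $A_i$ is a proper arc with well-defined endpoints, and I use ply $\le 2$ throughout.
\end{itemize}

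The structural argument then runs as follows.
\begin{enumerate}
\item Call $A_j$ \emph{nested} if it is contained in some other arc $A_i$. A nested $A_j$ meets no arc other than $A_i$, since any point common to a third arc would have ply $3$. So every nested arc is a pendant vertex whose unique neighbor is its container.
\item The container $A_i$ cannot itself be nested, for the same ply reason.
\item Delete the nested arcs. Among the remaining, pairwise non-nested arcs, each arc meets at most one other arc at each of its two ends. An arc meeting two others at the same end would create a point of ply $3$, or would force nesting. Hence the remaining graph has maximum degree $2$ and is a disjoint union of paths and cycles.
\item A cycle of non-nested arcs must wind once around $S'$. Two vertex-disjoint cycles would each cover $S'$. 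At any overlap point of the first cycle, that cycle already contributes ply $2$, and the second cycle contributes at least $1$ more, giving ply $\geq 3$, which is a contradiction. So there is at most one cycle.
\item If there is a cycle, it covers $S'$. Every other non-nested arc would then be nested inside or meet with ply $3$. So the non-nested part is then exactly the cycle, and the remaining vertices are pendants attached to cycle vertices.
\item If there is no cycle, the graph is a forest of paths with pendant leaves.
\end{enumerate}
In either case the number of edges is at most the number of vertices, which is $k$.

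The main obstacle I expect is making step 3 and step 5 precise: the claim about one neighbor per end, and the claim that a cycle forces all non-nested arcs onto it. I would argue these by ordering the non-nested arcs by their counterclockwise endpoints, which are distinct by ply $\le 2$. For non-nested arcs, this order coincides with the order of clockwise endpoints, so only cyclically consecutive arcs can overlap. This is the standard proper-circular-arc argument.

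I also need to double-check the degenerate self-wrapping case, where a closure has angular length at least $\pi$. In that case the ply bound gives the star conclusion stated above, or a single arc with pendants, both of which fit the lemma.
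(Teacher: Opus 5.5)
Your argument is correct in substance, but it is organized differently from the paper's. The paper never separates out nested arcs. It first shows that the arcs of any cycle cover $S'$: cutting $S'$ at an uncovered point turns them into intervals whose intersection graph is chordal and, by the ply bound, triangle-free, hence acyclic. It then takes a shortest, hence chordless, cycle $C$ and notes that every other arc must avoid the points covered twice by $C$. Such an arc therefore lies inside a single arc of $C$ and is disjoint from all other non-cycle arcs. You instead peel off nested arcs as pendants and show that the non-nested core has maximum degree $2$: a non-nested neighbor of $A$ must contain an endpoint of $A$, and each endpoint lies in at most one further arc. This gives a sharper conclusion: in the acyclic case every component is a caterpillar, not merely a tree. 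The cost is your step~4, which the paper's chordality argument sidesteps. That step closes quickly in your setting. If a core cycle missed a point, cut there and take the cycle interval with the leftmost start. Any cycle neighbor starts strictly later and is not nested in it, so it contains that interval's right endpoint. The ply bound then allows only one such neighbor, contradicting degree $2$ on the cycle.

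One repair is needed in step~2. The container of a nested arc can itself be nested when two arcs coincide, $A_i = A_j$, i.e., $\overline{G_j} = \overline{G_i} + \pi$. This is not hypothetical: it happens for every cut gap when the input is evenly spaced and $k$ is even (Lemma~\ref{lem:offset}). Deleting all nested arcs then removes both members of such a pair and leaves the pendant without a container. The fix is to set coincident pairs aside first. By the ply bound a coincident pair meets no third arc, so it is an isolated edge. It also cannot coexist with a cycle, since a cycle's arcs cover $S'$ and would place a third arc over it. Two smaller points: the starting endpoints of core arcs are distinct because of non-nestedness, not the ply bound. And a closure containing an antipodal pair already has $A_i = S'$, so there is no intermediate ``self-wrapping'' case to treat.
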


\begin{proof}
Let $q \colon S \to S'$ be the quotient of the circle $S$ that identifies antipodal points; $S'$ is a circle of half the circumference, and $A_i := q(\overline{G_i})$ is a closed arc of $S'$ (possibly all of $S'$). By definition of $H(\Pi)$, $G_i G_j$ is an edge if and only if $A_i \cap A_j \neq \emptyset$. Every point of $S'$ has exactly two $q$-preimages, and each preimage lies in at most one $\overline{G_i}$, since these closures are pairwise disjoint; hence no point of $S'$ lies in three of the arcs $A_1, \dots, A_k$. We refer to this as the ply bound.

If some $A_i$ is all of $S'$, the ply bound makes the remaining arcs pairwise disjoint, so $H(\Pi)$ is a star, which is a tree. Assume henceforth that every $A_i$ is a proper arc.

First, we show that the arcs of any cycle of $H(\Pi)$ (the images under $q$ of each vertex in the cycle) cover $S'$. Suppose the arcs of a cycle $C$ miss a point $p$; cutting $S'$ at $p$ turns them into intervals on a line. Any three pairwise intersecting intervals have a common point, which the ply bound forbids, so the intersection graph of these intervals is triangle-free; it is also an interval graph, hence chordal, and a triangle-free chordal graph is acyclic, which is a contradiction.

Now suppose $H(\Pi)$ contains a cycle, and let $C$ be a shortest one; $C$ is chordless, since a chord would yield a shorter cycle. Let $D_1$ be the set of points of $S'$ covered by exactly one arc of $C$, and $D_2$ the set covered by two or more; by the previous paragraph, $D_1 \cup D_2 = S'$. Each arc consisting of points in $D_1$ must lie within a single arc of $C$; if it spanned multiple arcs, it would contain the shared endpoint of two arcs, which would make it belong to $D_2$. Let $B$ be any arc not in $C$. A point of $B \cap D_2$ would lie in three arcs, violating the ply bound; hence $B \subseteq D_1$, and $B$ is contained within a single arc $A$ of $C$. Thus $B$ intersects $A$ and no other arc of $C$. Finally, two arcs $B, B'$ not in $C$ are disjoint: a common point would lie in $B$, in $B'$, and in an arc of $C$, again violating the ply bound. Hence every vertex outside $C$ is a pendant vertex attached to exactly one vertex of $C$. If instead $H(\Pi)$ contains no cycle, it is a forest.
\end{proof}

\begin{prop}
\label{prop:matching_greedy}
Given the cut gaps in circular order, a maximum matching in $H(\Pi)$ can be computed in $\Theta(k)$ time.
\end{prop}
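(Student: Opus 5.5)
The plan has three stages: build $H(\Pi)$ explicitly in linear time, reduce it to a forest using Lemma~\ref{lem:H_structure}, and solve the forest by the standard leaf-greedy matching.

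\textbf{Building the graph.} Each closure $\overline{G_i}$ is an arc whose endpoints are two consecutive beads. Map it to the arc $A_i = q(\overline{G_i})$ on the quotient circle $S'$ by reducing angles modulo $\pi$. By the lemma, $G_iG_j$ is an edge iff $A_i \cap A_j \neq \emptyset$.

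The arcs $\overline{G_1},\dots,\overline{G_k}$ are pairwise disjoint and given in circular order. Within the semicircle $[0,\pi)$ they are therefore sorted, and the same holds within $[\pi,2\pi)$. Any arc straddling $0$ or $\pi$ is split into two pieces, adding at most two extra pieces. Shifting the second sorted list by $-\pi$ gives two sorted lists of pieces on $S'$, and merging them takes $O(k)$ time.

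Within one list the pieces are disjoint, because the closures are. So every intersection is between a piece from one list and a piece from the other. A single sweep over the merged order, with a two-pointer scan, reports all intersecting pairs. This is linear because the ply bound says no point lies in three arcs, which means each piece overlaps only a contiguous run of pieces from the other list. Lemma~\ref{lem:H_structure} bounds the number of edges by $k$, so the sweep outputs $O(k)$ edges, after removing duplicates coming from split arcs.

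\textbf{Handling the cycle.} The lemma says $H(\Pi)$ is either a forest or a unicyclic graph. If there is a cycle $C$, it can be found in $O(k)$ time, for example by repeatedly peeling off leaves until only $C$ remains. Pick any edge $e=uv$ of $C$. A maximum matching of a unicyclic graph is the better of two candidates:
\begin{itemize}
\item the maximum matching of $H(\Pi)-e$, which is a forest;
\item $1$ plus the maximum matching of $H(\Pi)-\{u,v\}$, which is also a forest.
\end{itemize}
This case split is exact, since every matching either uses $e$ or does not. Both candidates are forests.

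\textbf{Matching a forest.} On a forest, the greedy rule is to repeatedly match a leaf to its parent and delete both vertices; isolated vertices are simply deleted. This rule is optimal by the usual exchange argument: some maximum matching covers the leaf's parent, and that edge can be swapped for the leaf edge without losing size. Processing vertices in reverse BFS order makes this $O(k)$ total.

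\textbf{Lower bound and main obstacle.} The $\Omega(k)$ lower bound holds because the $k$ gaps must at least be read. The delicate step is the edge construction: seam arcs that cross $0$ or $\pi$, and the arc equal to all of $S'$ in the star case, both need care. I expect to handle these by the splitting described above and by explicitly checking for the full-circle arc first; in that case $H(\Pi)$ is a star and the answer is $1$ if there is any edge, else $0$.
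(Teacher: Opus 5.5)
Your proposal is correct. Its first stage matches the paper's: map the closures to arcs on $S'$, merge the two already-sorted half-circle lists, and sweep using the ply bound to get at most $k$ edges. You are in fact more explicit than the paper about arcs that straddle the seam and about the full-circle (star) case.

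The real difference is how you handle the possible cycle. The paper never isolates it. It runs the leaf-greedy directly on $H(\Pi)$ until no degree-one vertex remains. Because $H(\Pi)$ has at most one cycle, the survivors are cycles and isolated vertices, and the paper takes $\lfloor t/2 \rfloor$ alternate edges on a surviving cycle of length $t$. You instead peel leaves only to find the cycle, then branch on one cycle edge $e = uv$. You compare the forest-greedy matching of $H(\Pi) - e$ with one plus the forest-greedy matching of $H(\Pi) - \{u, v\}$.

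Your route reduces everything to the textbook tree algorithm plus an exact two-way case split. So you need no separate reasoning about what remains after peeling, at the cost of an extra pass and two greedy runs (still $O(k)$). The paper's route is a single greedy pass, but it must account for two kinds of residual: a cycle that survives peeling intact, or one that is broken when a cycle vertex gets matched to a pendant leaf. Both arguments use Lemma~\ref{lem:H_structure} the same way, to guarantee that at most one cycle ever needs special treatment.
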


\begin{proof}
The endpoints of $A_1, \dots, A_k$ in circular order around $S'$ are obtained by merging the (already sorted) endpoint lists of the two half-circles in $\Theta(k)$ time; by the ply bound of Lemma~\ref{lem:H_structure}, a sweep over them maintains at most two active arcs at any moment and outputs the edge list of $H(\Pi)$, which has at most $k$ edges. On this graph, run the following greedy algorithm: while some vertex $u$ has degree one, add its unique edge $uv$ to the matching and delete $u$ and $v$; once no such vertex remains, the surviving components are cycles and isolated vertices, and a cycle of length $t$ contributes $\bigl\lfloor \frac{t}{2} \bigr\rfloor$ alternate edges. The greedy algorithm is optimal: if $u$ is a leaf with neighbor $v$, some maximum matching contains $uv$ (a maximum matching cannot leave both $u$ and $v$ unmatched, and if it matches $v$ by another edge, exchanging that edge for $uv$ preserves the size), and deleting $u$ and $v$ reduces the problem to the smaller graph. Since $H(\Pi)$ has at most $k$ edges, the whole procedure runs in $\Theta(k)$ time.
\end{proof}

The search for an optimal viewing point is polynomial. Stage (i) runs once for each of the $O\bigl(m^4\bigr)$ cells of $\mathcal{A}_{\mathrm{ray}}$\footnote{Exact counts of these cells are derived in Appendix~\ref{sec:angular_ordering}, and since only the induced coloring matters, the effective number is smaller still~\cite{DiazBanez2014}.}. Stage (ii) runs once for each of the $O\bigl(m^4\bigr)$ cells of $\mathcal{A}_{\mathrm{line}}$. For a fixed necklace, the cut gaps of any fair contiguous partition sit at circular bead-index spacing exactly $\frac{m}{k}$, so there are at most $\frac{m}{k}$ candidate partitions, and with prefix sums over the circular color sequence, all fair ones are identified in $\Theta(m)$ total time. For each fair partition, Proposition~\ref{prop:matching_greedy} computes an optimal realization in $\Theta(k)$ time, so a cell is fully processed in $\Theta(m)$ time once its bead order is known. Sweeping the cells of $\mathcal{A}_{\mathrm{line}}$ and updating the ordering incrementally~\cite{Devillers2007} avoids the logarithmic factor from sorting, giving a total running time of $O\bigl(m^5\bigr)$.

\subsection{Evenly spaced input}
\label{sec:star_evenly_spaced}
In this section, we assume evenly spaced input to make the problem more tractable. We measure angular position in \emph{bead units} of $\frac{2\pi}{m}$, so that beads sit at integers modulo $m$ and gaps are the open unit intervals between them; the antipodal map is translation by $\frac{m}{2}$. Call a family of $k$ gaps whose indices form a single residue class modulo $\frac{m}{k}$ a \emph{lattice}, and its members \emph{lattice gaps}. The cut gaps of a fair contiguous $k$-partition always form a lattice, since consecutive arcs contain exactly $\frac{m}{k}$ beads.

\begin{lemma}[Antipodal offset]
\label{lem:offset}
Let $L$ be a lattice of gaps of an evenly spaced necklace. If $k$ is even, the antipodal image of every lattice gap is again a lattice gap. If $k$ is odd, the center of the antipodal image of a lattice gap lies at distance exactly $\frac{m}{2k} \geq 1$ (in bead units) from the nearest lattice gap center; in particular, the closures of the lattice gaps and of their antipodal images are disjoint when $m > 2k$ and meet only at beads when $m = 2k$.
\end{lemma}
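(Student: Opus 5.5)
The plan is to reduce everything to modular arithmetic on gap centers. In bead units, write the gap between beads $g$ and $g+1$ as centered at $g + \frac{1}{2}$. Let $s = \frac{m}{k}$, which is an integer since $k$ divides both $r$ and $b$. A lattice is then the set of gap centers $\set{c + \frac{1}{2} + is \suchthat i = 0, \dots, k-1}$ for a fixed residue $c$ modulo $s$. The antipodal map sends the center $x$ to $x + \frac{m}{2} = x + \frac{ks}{2}$ modulo $m$. The whole question is therefore the residue of $\frac{ks}{2}$ modulo $s$. This is well defined because $\frac{m}{2}$ is an integer: a fair partition gives each arc $s \geq 2$ beads, and we may assume $m$ is even, or else use only $\frac{m}{2}$ as a real shift.

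When $k$ is even, $\frac{ks}{2} = \frac{k}{2}\, s$ is a multiple of $s$. The antipodal image of a lattice center $c + \frac{1}{2} + is$ is therefore $c + \frac{1}{2} + (i + \frac{k}{2})s$, which is again a lattice center. Since the antipodal map is an isometry, it sends gaps to gaps, so the image gap is itself a lattice gap.

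When $k$ is odd, $\frac{ks}{2} = \frac{k-1}{2}\, s + \frac{s}{2}$. The image center is then a lattice center shifted by exactly $\frac{s}{2} = \frac{m}{2k}$. Lattice centers are spaced $s$ apart, so a point at offset $\frac{s}{2}$ from one lattice center is at distance exactly $\frac{s}{2}$ from the two neighboring centers on either side, and no lattice center is closer. This gives the claimed distance $\frac{m}{2k}$. It is at least $1$ because $s \geq 2$. The paper already notes that each bead arc of a fair partition has $\frac{m}{k} \geq 2$ beads, and I would invoke that fact here, in the same setting.

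For the closure statement, each closure $\overline{G}$ is a closed interval of length $1$ centered at its gap center, consisting of the open gap plus its two bounding beads. Two such intervals whose centers are at distance $d$ are disjoint exactly when $d > 1$, and meet only at a common endpoint when $d = 1$. With $d = \frac{m}{2k}$, the case $m > 2k$ gives $d > 1$ and disjoint closures. The case $m = 2k$ gives $d = 1$, so the closures share only a bounding bead. The only subtle point is keeping the distance computation honest on the circle, that is, taking distance modulo $m$. This is harmless because every other lattice center lies at distance at least $s - \frac{s}{2} = \frac{s}{2}$ as well. I expect no real obstacle; the main care point is the parity bookkeeping that makes $\frac{m}{2}$ and $\frac{s}{2}$ behave as claimed.
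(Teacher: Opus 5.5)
Your proposal is correct and follows the paper's proof. Both reduce to the residue of the antipodal shift $\frac{m}{2} = \frac{k}{2}\cdot\frac{m}{k}$ modulo the lattice spacing $\frac{m}{k}$, then compare closed unit intervals whose centers are $\frac{m}{2k}$ apart; two small cleanups: $\frac{m}{2}$ need not be an integer (e.g., $k=3$, $m=9$), which your real-shift fallback already covers, and $\frac{m}{k} \geq 2$ follows directly from $m = r + b \geq 2k$, as the paper uses, rather than from a fair partition, which an arbitrary lattice need not come from.
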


\begin{proof}
The lattice gap centers form an arithmetic progression with difference $\frac{m}{k}$, and the antipodal map translates it by $\frac{m}{2} = \frac{m}{k} \cdot \frac{k}{2}$. If $k$ is even, this is a multiple of $\frac{m}{k}$, so the progression is mapped to itself. If $k$ is odd, the residue of $\frac{m}{2}$ modulo $\frac{m}{k}$ is $\frac{m}{2k}$, which is at least $1$ since $m = r + b \geq 2k$; two closed unit intervals whose centers are at distance $\frac{m}{2k}$ intersect only if $\frac{m}{2k} \leq 1$, i.e., $m = 2k$, in which case they share a single endpoint, a bead.
\end{proof}

\begin{theorem}[Parity dichotomy]
\label{thm:parity_dichotomy}
Let the input be evenly spaced. If $k$ is odd and $m > 2k$, then realizing a fair contiguous $k$-partition (if one exists) requires exactly $k$ generalized cuts; else, it requires exactly $\bigl\lceil \frac{k}{2} \bigr\rceil$.
\end{theorem}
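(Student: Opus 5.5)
By Theorem~\ref{thm:matching}, realizing a fair contiguous $k$-partition $\Pi$ costs exactly $k - M(\Pi)$ generalized cuts. The plan is therefore to compute $M(\Pi)$ in each case, using Lemma~\ref{lem:offset} to decide which edges the antipodal compatibility graph $H(\Pi)$ has. The cut gaps of $\Pi$ form a lattice $L$. An edge $G_iG_j$ requires an angle $\varphi \in \overline{G_i}$ with $\varphi + \frac{m}{2} \in \overline{G_j}$ (in bead units), which says exactly that the antipodal image of $\overline{G_i}$ meets $\overline{G_j}$.

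\emph{Case $k$ odd, $m > 2k$.} By Lemma~\ref{lem:offset}, the closure of the antipodal image of each lattice gap is disjoint from every lattice gap closure. Hence $H(\Pi)$ has no edges, $M(\Pi) = 0$, and exactly $k$ cuts are needed.

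\emph{Case $k$ even.} By Lemma~\ref{lem:offset}, translation by $\frac{m}{2}$ maps $L$ to itself, and it does so without fixed points: $\frac{m}{2} \not\equiv 0 \pmod m$, and the gaps are unit intervals, so the image of $G_i$ is a different gap $G_{i+k/2}$. Therefore $G_i G_{i+k/2}$ is an edge for each $i$, with witness $\varphi$ the center of $G_i$. These $\frac{k}{2}$ edges form a perfect matching, so $M(\Pi) = \frac{k}{2}$ and the count is $\frac{k}{2} = \bigl\lceil \frac{k}{2} \bigr\rceil$.

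\emph{Case $k$ odd, $m = 2k$.} Here $\frac{m}{k} = 2$, so the lattice gaps are every other gap. By Lemma~\ref{lem:offset}, the antipodal image of each lattice gap closure shares exactly one bead endpoint with an adjacent lattice gap closure. I will make this explicit in coordinates. Take lattice gaps $(2t, 2t+1)$ for $t = 0, \dots, k-1$, with $m = 2k$ and antipodal shift $k$, which is odd. The closure $[2t, 2t+1]$ shifts to $[2t+k, 2t+k+1]$, whose endpoints are one even bead and one odd bead. The even endpoint is a bead of some lattice gap closure, so $\varphi = 2t$ (or $2t+1$, whichever lands on the even endpoint) witnesses an edge. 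Each lattice gap closure $[2s, 2s+1]$ has two endpoints, and each endpoint $\psi$ yields exactly one neighbor via $\psi + k$. So $H(\Pi)$ is $2$-regular. By Lemma~\ref{lem:H_structure} it has at most one cycle, so it is a single cycle. I expect this identification to be the main point needing care: verifying the adjacency is genuine (a single shared bead suffices by the closure definition and the side-choice rule) and that there are no multi-edges or loops. Loops are impossible since $k \ne 0$. Concretely, the neighbors of gap $s$ are $s + \frac{k+1}{2}$ and $s + \frac{k-1}{2}$ modulo $k$ (indexing gaps by $t$), which differ whenever $k \ge 3$. In a $k$-cycle, $M(\Pi) = \lfloor \frac{k}{2} \rfloor$, so the count is $k - \lfloor \frac{k}{2} \rfloor = \bigl\lceil \frac{k}{2} \bigr\rceil$.

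Combining the three cases gives the dichotomy. In every case the bound $\bigl\lceil \frac{k}{2} \bigr\rceil$ is also the lower bound from Corollary~\ref{cor:star_bounds}, so "exactly" holds.
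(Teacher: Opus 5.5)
Your proof is correct and follows the paper's argument. You reduce to Theorem~\ref{thm:matching}, use Lemma~\ref{lem:offset} to get an edgeless $H(\Pi)$ for odd $k$ with $m>2k$ and the perfect matching $G_iG_{i+k/2}$ for even $k$, and for odd $k$ with $m=2k$ exhibit a $k$-cycle on the lattice gaps (neighbors $s \pm \frac{k+1}{2}$). The one variation is that you get a single $k$-cycle from $2$-regularity plus the ``at most one cycle'' part of Lemma~\ref{lem:H_structure}, whereas the paper uses that $\frac{k+1}{2}$ is coprime to $k$; both are valid (and, as your explicit neighbor formula reflects, both endpoints of the shifted closure, not just the even one, lie in lattice closures).
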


\begin{proof}
If $k$ is even and a fair contiguous $k$-partition $\Pi$ exists, then by Lemma~\ref{lem:offset}, applied to the lattice formed by its cut gaps, the antipodal map pairs lattice gap $i$ exactly with lattice gap $i + \frac{k}{2}$ (indices modulo $k$); these pairs form a perfect matching in $H(\Pi)$, so $M(\Pi) = \frac{k}{2}$ and Theorem~\ref{thm:matching} gives $\frac{k}{2}$ lines.
If $k$ is odd and $m > 2k$, by Lemma~\ref{lem:offset}, no line can serve two lattice gaps, so $H(\Pi)$ has no edges, $M(\Pi) = 0$, and Theorem~\ref{thm:matching} gives exactly $k$ lines for any fair contiguous partition $\Pi$.

Assume $k$ is odd and $m = 2k$. Assume without loss of generality that $(0,1)$ is a lattice gap; then the lattice gaps are $\set{2i \suchthat i \in \Integers_k}$, where gap $j$ denotes the arc between beads $j$ and $j+1$. For $i \in \Integers_k$, the line through the antipodal beads $2i + 1$ and $2i + 1 + k$ has its two cut points on those beads; assigning each such bead to the appropriate side (as the problem formulation permits), the first cut point serves lattice gap $2i$ and the second serves lattice gap $2i + 1 + k \equiv 2\bigl(i + \frac{k+1}{2}\bigr) \pmod{2k}$. Thus each such line serves the pair of lattice gaps with indices $\set{i, i + s}$ in $\Integers_k$, where $s = \frac{k+1}{2}$. Since any common divisor of $s$ and $k$ divides $2s - k = 1$, $s$ and $k$ are coprime, so the pairs $\set{i, i+s}$ form a single cycle of length $k$ on $\Integers_k$, which contains a matching of size $\frac{k-1}{2}$. Choosing such a matching serves $k - 1$ of the lattice gaps with $\frac{k-1}{2}$ lines; one final line through the interior of the remaining lattice gap serves it, with a redundant antipodal cut point. The total is $\frac{k+1}{2} = \bigl\lceil \frac{k}{2} \bigr\rceil$.
\end{proof}

For even $k$, even when no fair contiguous $k$-partition exists, we can guarantee an improvement over the projection bound: find a single generalized cut that divides the necklace fairly in half (it exists by the discrete intermediate value theorem) and then run the one-dimensional necklace splitting algorithm on each half, for a total of at most $2(k-1) - 1$ generalized cuts.

\subsubsection{Constant number of agents}
\label{sec:constant}
By exhaustive computational search~\cite{Advani2026b}, we have found that, for $k=3$ and $m=18$, there are four distinct necklaces (up to rotation, reflection, and inversion of color) that cannot be fairly split using fewer than $2(k-1)$ generalized cuts. For $m=24$, there are 92 distinct necklaces. The specific necklaces are shown in Appendix~\ref{sec:necklaces}.

There are roughly $\frac{2^{m-2}}{km}$ valid necklaces: the numerator is the number of binary strings of length $m$, halved twice to account for reflection and inversion of color; the denominator accounts for the divisibility constraint and rotation. As such, the fraction of necklaces that cannot be split with fewer than $2(k-1)$ generalized cuts is very low.

\begin{conj}
With evenly spaced input, if $k=3$, the fraction of possible problem instances that require $2(k-1)$ generalized cuts goes to zero as $m$ goes to infinity.
\end{conj}

For $k=5$ and $m \leq 20$, there are no instances that require $2(k-1)$ generalized cuts --- at most six are needed (exhaustive computational search).

\subsubsection{Number of beads linear in number of agents}
We can still make certain guarantees even when a necklace does not necessarily admit a fair contiguous $k$-partition; we may allocate noncontiguous sets of beads to agents.
If $m=2k$, then $k$ generalized cuts suffice: place a cut point at the center of each of the $2k$ gaps; antipodal centers pair into $k$ lines, and each of the $2k$ resulting arcs contains exactly one bead, so the arcs can be assigned as desired. For $k>2$, this is better than the $2(k-1)$ bound.
Similarly, if $m=4k$ with an equal number of red and blue beads, then $k$ generalized cuts suffice: place a cut point at $\frac{1}{2} + 2j$ for $j \in \Integers_{2k}$ (all in gaps); antipodal positions differ by $\frac{m}{2} = 2k$, an exact multiple of the spacing, so they again pair into $k$ lines. Each of the $2k$ arcs has two beads: both red, both blue, or one of each. The number of $RR$ and $BB$ arcs must be equal, so we can assign each agent either a pair of $RR$ and $BB$ arcs or two mixed arcs.

\begin{prop}
\label{prop:m3k}
Let the input be evenly spaced with $m = 3k$. Then $k + \bigl\lfloor \frac{T}{3} \bigr\rfloor$ generalized cuts suffice, where $T$ is the number of minority-color triples. Since $T \leq k - 2$, this bound is at most $\bigl\lfloor \frac{4k-2}{3} \bigr\rfloor$, and hence beats the $2(k-1)$ bound for every $k > 2$.
\end{prop}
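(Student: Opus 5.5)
The plan is an explicit construction in the spirit of the $m=4k$ argument above. We work in bead units, with beads at the integers modulo $3k$. Without loss of generality red is the minority color. We may assume $r=k$: if $r=0$ the instance is one-colored and trivial, and divisibility together with $r \le b$ forces $\frac{r}{k} \in \{0,1\}$. So every agent must receive exactly one red and two blue beads.

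\emph{Step 1: the base cuts.} Partition the necklace into $k$ consecutive triples $\{3i, 3i+1, 3i+2\}$. We then place $2k$ cut points so that every triple is split into a singleton and a pair, for example cut points at positions $3i - \frac{1}{2}$ and $3i + \frac{1}{2}$. The antipodal map is translation by $\frac{3k}{2}$, and we must check that these $2k$ cut points pair up into $k$ lines.
\begin{itemize}
\item If $k$ is even, the shift is a multiple of $3$, so the pattern is mapped onto itself and the cut points pair exactly.
\item If $k$ is odd, the shift is $\frac{3}{2}$ modulo $3$, and the antipodes of the cut points land on beads. Here we use the side-choice rule, as in the proof of Theorem~\ref{thm:parity_dichotomy} for $m=2k$. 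A cut point on a bead can be declared to lie in either adjacent gap, and we choose so that every triple is still split into a singleton and a pair.
\end{itemize}
This yields $k$ lines whose arcs are $k$ singletons and $k$ pairs, each singleton adjacent to a pair.

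\emph{Step 2: allocation.} The singletons have colors $R$ or $B$, and the pairs have types $RR$, $RB$, or $BB$. An agent can receive either a singleton $R$ with a pair $BB$, or a singleton $B$ with a pair $RB$. Since the arcs need not be allocated contiguously, this is a counting question. We need the number of $R$-singletons to equal the number of $BB$-pairs, and the number of $B$-singletons to equal the number of $RB$-pairs. We also have the freedom to choose which bead of each triple is the singleton: shifting a triple's cut points alters the singleton/pair split, and in the odd case the antipodal constraint must be re-verified. Within a triple containing exactly one red bead, we can always make the red bead the singleton or make it part of a mixed pair. A global balance count then shows that the only obstruction comes from $RRR$ triples, i.e.\ the minority-color triples counted by $T$. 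Each such triple must be compensated by two triples containing no red bead. The inequality $T \le k-2$ should come from the same counting, since $T$ triples of three reds need $2T$ red-free triples elsewhere.

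\emph{Step 3: repairing $RRR$ triples with extra lines.} This is the main obstacle. The goal is to show that $\lfloor T/3 \rfloor$ additional lines suffice, so that each extra line repairs three bad triples. A single extra line has only two cut points, so the counting must be arranged so that two new cut points, together with a re-choice of the singleton/pair splits in neighbouring triples, rebalance three $RRR$ triples at once. My first attempt would be a potential argument. Define the imbalance $\#R\text{-singletons} - \#BB\text{-pairs}$, show that it changes by $3$ per extra line when the line is placed through an $RRR$ triple and its antipode is placed inside a red-free triple, and then greedily match defects. If the residual imbalance below $3$ cannot be absorbed by re-choosing splits, the bound would only be $\lceil T/3 \rceil$, and I would need to check that the leftover is handled by Step 2's freedom.

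Finally, the arithmetic: combining $T \le k-2$ with the count $k + \lfloor T/3 \rfloor$ gives at most $k + \lfloor \frac{k-2}{3} \rfloor = \lfloor \frac{4k-2}{3} \rfloor$. This is less than $2(k-1)$ exactly when $k > 2$, which gives the last claim of the proposition.
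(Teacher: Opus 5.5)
\textbf{The gap is in Step~3.} Your Steps~1--2 essentially match the paper's setup: one partition into consecutive triples, each triple split $1$--$2$ or $2$--$1$, and the count $r_0 = r_2 + 2r_3$ showing that only $RRR$ blocks cause trouble. The mechanism you then propose cannot work.

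\begin{itemize}
\item In a fair division every agent gets exactly one red bead. So an $RRR$ block must be separated at \emph{both} of its interior gaps.
\item Your base lines supply only one interior cut point per block, since the lattice-gap cut points lie between blocks. Re-choosing splits in neighbouring triples adds no separation inside an $RRR$ block.
\item Blocks are disjoint, so each $RRR$ block needs its own extra cut point. A line has two cut points, so one extra line repairs at most two $RRR$ blocks of a fixed partition, never three.
\end{itemize}

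Your potential changing by $3$ is only bookkeeping: a line through an $RRR$ block with its antipode in a $BBB$ block still repairs one $RRR$ block. The underlying problem is that you read $T$ as the number of $RRR$ blocks of your single partition. In the statement, $T$ counts all windows of three consecutive red beads, overlapping ones included.

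\textbf{The missing idea is to choose the phase of the partition.} Because $m = 3k$, there are exactly three partitions into consecutive triples (phases $0$, $1$, $2$ modulo $3$). Each red window of length three is a block of exactly one of them, so the three counts of $RRR$ blocks sum to $T$. Take the phase with $r_3 \le \lfloor T/3 \rfloor$. Then spend one extra line per $RRR$ block to split its $RR$ pair; the antipodal cut point is redundant. Since $r_0 = r_2 + 2r_3$, each $RRR$ block can be served by two $BBB$ blocks, and the total is $k + r_3 \le k + \lfloor T/3 \rfloor$ lines.

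Your justification of $T \le k-2$ also fails, because the windows counted by $T$ overlap rather than forming disjoint blocks. Use maximal red runs instead. A run of length $\ell$ contains $\max(\ell-2,0)$ red windows, and the run lengths sum to $k$. So if $T>0$, some run has length $\ell_1 \ge 3$, and $T \le (\ell_1-2) + \sum_{i\ge 2}\ell_i = k-2$.

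\textbf{A smaller repair in Step~1.} For even $k$, your interior cut points at $3i+\frac{1}{2}$ are paired with those of the antipodal triples, so splits cannot be re-chosen independently. You do need that freedom: a $BRR$ block must be split $BR \mid R$. Put every interior cut point on its triple's middle bead instead.
\begin{itemize}
\item For odd $k$, these middle beads are exactly the antipodes of the $k$ lines through lattice gaps.
\item For even $k$, use $\frac{k}{2}$ lines through antipodal lattice gaps and $\frac{k}{2}$ lines through antipodal middle beads.
\end{itemize}
The side-choice rule then lets each triple be split $1$--$2$ or $2$--$1$ independently.
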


\begin{proof}
There are either $k$ red and $2k$ blue beads, or $2k$ red and $k$ blue beads. By symmetry, without loss of generality, we can assume we are in the former case. Throughout, we use that $m = 3k \equiv 0 \pmod 3$, so the residue of a bead index modulo $3$ is well defined around the necklace. Let $\ell_1, \dots, \ell_c$ be the lengths of the \emph{maximal red runs} (maximal sets of consecutive red beads).

There are three lattices of gaps with spacing three, one per residue of the gap index modulo $3$; fix one of these lattices. If $k$ is odd, place $k$ primary cut points at the centers of the lattice's gaps, creating $k$ intervals of three beads each. As computed in Lemma~\ref{lem:offset}, the secondary cut point of each line falls at offset $\frac{m}{2k} = \frac{3}{2}$ from a lattice gap center --- exactly on the middle bead of an interval --- and the map from lines to intervals is a bijection, so each interval contains exactly one secondary cut, on its middle bead. We can place each secondary cut on either side of its bead, splitting that interval $1$--$2$ or $2$--$1$ at our choice. If $k$ is even, make $\frac{k}{2}$ generalized cuts through the lattice gaps and $\frac{k}{2}$ through the interiors of the intervals. The resulting construction is equivalent: there are $k$ cut points in lattice gaps and $k$ on beads within intervals; the only difference between the odd and even cases are which cuts are primary and which are secondary.

Let $r_i$ denote the number of intervals with exactly $i$ red beads. Counting intervals and red beads gives $r_0 = r_2 + 2r_3$: we have $\sum_i r_i = k$ and $\sum_i ir_i = k$, and subtracting yields the claim. We allocate as follows.
\begin{description}
\item[One red bead] Each interval is assigned whole to an agent; its interior cut is ignored.
\item[Two red beads] Pair each such interval ($RRB$, $RBR$, or $BRR$) with a distinct $BBB$ interval. In each pattern, some placement of the interior cut carves off a single red bead ($R \mid RB$, $R \mid BR$, $BR \mid R$); the interior cut of the $BBB$ interval carves off a single blue bead ($B \mid BB$). One agent receives the two-bead remainder of the red-heavy interval plus the carved blue bead; another receives the two-bead remainder of the $BBB$ interval plus the carved red bead.
\item[Three red beads] Pair each $RRR$ interval with two distinct $BBB$ intervals. The interior cut carves off one red bead, and one additional interior cut separates the remaining two red beads, i.e., one extra generalized cut per $RRR$ interval (its antipodal cut point is redundant); the two $BBB$ intervals are handled by their own interior cuts as before, yielding three agents with one red and two blue beads each.
\end{description}
The identity $r_0 = r_2 + 2r_3$ shows that the $BBB$ intervals suffice exactly for this pairing, so the chosen lattice is realized with $k + r_3$ generalized cuts.

All that remains is to choose the lattice well. The three lattices partition the beads into consecutive triples with phases $0$, $1$, and $2$ modulo $3$, so every red triple is an interval in exactly one lattice; hence the three values of $r_3$ sum to $T$, and the best lattice has $r_3 \leq \bigl\lfloor \frac{T}{3} \bigr\rfloor$, for a total of $k + \bigl\lfloor \frac{T}{3} \bigr\rfloor$ generalized cuts. Finally, $T \leq k - 2$: if $T = 0$ this is immediate from $k \geq 2$, and otherwise some run has length $\ell_1 \geq 3$ and contributes $\ell_1 - 2$ triples, while every other run contributes $\max(\ell_i - 2, 0) \leq \ell_i$, so $T \leq (\ell_1 - 2) + \sum_{i \geq 2} \ell_i = k - 2$.
\end{proof}

\begin{prop}
\label{prop:m4k}
Let the input be evenly spaced with $m = 4k$ and $k$ red beads. Then $\bigl\lfloor \frac{7k}{4} \bigr\rfloor$ generalized cuts suffice. This beats the $2(k-1)$ bound for $k > 8$.
\end{prop}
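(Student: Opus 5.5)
The plan is to imitate the proof of Proposition~\ref{prop:m3k}: fix a lattice of gaps with spacing $4$, build a base configuration of exactly $k$ generalized cuts that splits every $4$-bead interval into two halves, then repair the bad intervals with extra lines and choose the lattice phase so that at most $\bigl\lfloor \frac{3k}{4} \bigr\rfloor$ extra lines are needed. Here $m = 4k$ with $k$ red and $3k$ blue beads, so each agent must receive exactly one red and three blue beads.

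\textbf{Base configuration.} Fix one of the four lattices, whose gap indices form a residue class modulo $4$, so the beads are grouped into $k$ consecutive intervals of four. By Lemma~\ref{lem:offset}, the case split on $k$ is as follows.
\begin{itemize}
\item If $k$ is odd, the antipodal image of a lattice gap center sits at offset $\frac{m}{2k} = 2$ from the nearest lattice gap center. That is exactly the middle gap of an interval, and, as in Proposition~\ref{prop:m3k}, lines and intervals correspond bijectively. So $k$ lines through the lattice gaps also cut every interval into two halves of two beads.
\item If $k$ is even, the lattice maps to itself. We use $\frac{k}{2}$ lines through lattice gaps and $\frac{k}{2}$ lines through the middle gaps, with the same result.
\end{itemize}
In both cases $k$ lines give $2k$ halves of types $RR$, $RB$ (or $BR$), and $BB$. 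Let $r_i$ be the number of intervals containing $i$ red beads. Counting intervals and red beads gives $r_0 = r_2 + 2r_3 + 3r_4$, exactly as before.

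\textbf{Allocation.}
\begin{itemize}
\item An interval with one red bead goes whole to one agent.
\item A mixed half together with a $BB$ half from some $BBBB$ interval forms a valid share of one red and three blue beads.
\item An $RR$ half needs one extra line to separate its two red beads. Each resulting single $R$ then needs three blue beads. A full $BB$ half plus one more blue bead works, and that single blue bead must be carved from some $BB$ half. Alternatively, one single $R$ takes a $BB$ half plus a $B$ carved from a mixed or $BB$ half.
\end{itemize}
I would set this up as a bookkeeping argument. Count the $RR$ halves $a$, the mixed halves $c$ and the $BB$ halves $d$ under the fixed lattice. Bound the extra lines by a linear function of $a$ and of the $RRR$/$RRRR$ patterns, say at most $\frac{3}{2}$ extra lines per $RR$ half. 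Finally, use $r_0 = r_2 + 2r_3 + 3r_4$ to check that enough $BB$ halves exist.

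\textbf{Choosing the phase.} Average over the four lattice phases, as in the last paragraph of Proposition~\ref{prop:m3k}. A fixed pair of adjacent beads is a half in exactly two of the four phases. So summing the number of $RR$ halves over the four phases counts each adjacent red pair twice, giving at most $2(k-1)$ in total. Some phase therefore has at most about $\frac{k}{2}$ $RR$ halves, and a per-half cost of about $\frac{3}{2}$ extra lines would give at most $\frac{3k}{4}$ extra lines. This yields the total $\bigl\lfloor \frac{7k}{4} \bigr\rfloor$. The comparison $\bigl\lfloor \frac{7k}{4} \bigr\rfloor < 2(k-1)$ for $k > 8$ is then arithmetic.

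\textbf{Main obstacle.} The hard step is the repair accounting. Each repair line has a redundant antipodal point, but it must produce single blue beads exactly where needed. A single extra line can serve only one carve, unless two carves are antipodal, so I would look for a way to share such carves between two repairs. I would first try the simplest scheme: charge every $RR$ half one line to split it and half a line (amortized through antipodal pairing of carves) to supply blue singletons. I would then verify that the per-phase cost is at most $\frac{3}{2}a$ plus lower-order terms that vanish under the averaging. If the amortization fails, the fallback is to bound the cost per maximal red run rather than per half.
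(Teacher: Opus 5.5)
\textbf{There is a genuine gap: the repair accounting is never carried out.} Your base configuration is the paper's: lattice gaps plus middle gaps are exactly the gaps of one parity, i.e.\ cut points at $\frac12 + 2j$, giving $2k$ two-bead halves with $k$ lines. Your phase-averaging bound $a \le \bigl\lfloor \frac{k-1}{2} \bigr\rfloor$ on the number of $RR$ halves is also correct. But you stop at the step that carries the proposition: you never show that at most $\bigl\lfloor \frac{3k}{4} \bigr\rfloor$ extra lines are needed. The claim of ``at most $\frac32$ extra lines per $RR$ half'' is stated as a hope. The amortization you propose, pairing blue carves antipodally, is unjustified, because you never show that antipodal pairs of $BB$ halves exist to be carved together. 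Without that, each carving line wastes its antipodal point. The naive count is then $2a$ extra lines, which with $a \le \frac{k-1}{2}$ only gives $2k-1$ in total.

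\textbf{The missing fact is a pigeonhole count.} Let the halves consist of $a$ of type $RR$, $c$ mixed, and $d$ of type $BB$.
\begin{enumerate}
\item From $a + c + d = 2k$ and $2a + c = k$ you get $c = k - 2a$ and $d = k + a$. Hence $c + 3a = d$, so the $BB$ halves suffice exactly, and the interval bookkeeping $r_0 = r_2 + 2r_3 + 3r_4$ is unnecessary.
\item The $2k$ halves form $k$ antipodal pairs (half $h$ and half $h+k$). If only $x$ of these pairs were $BB$--$BB$, then $d \le 2x + (k - x) = k + x$. Hence at least $a$ pairs are $BB$--$BB$.
\item The $a$ required blue carves can therefore be made with $\lceil a/2 \rceil$ lines through the midpoints of such pairs. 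A surplus split $BB$ half is harmless.
\item Splitting the $RR$ halves costs at most $a$ more lines. Their antipodal cut points fall mid-half, which is also harmless.
\end{enumerate}
This gives at most $a + \lceil a/2 \rceil$ extra lines. With $a \le \bigl\lfloor \frac{k-1}{2} \bigr\rfloor$, checking the four residues of $k \bmod 4$ shows this is at most $\bigl\lfloor \frac{3k}{4} \bigr\rfloor$. Your phase choice is genuinely needed here: for $k \equiv 2 \pmod 4$ and $a = \frac{k}{2}$, this count would overshoot by one. Separately, carving a $B$ from a mixed half never helps, since it orphans that half's $R$.

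\textbf{How the paper closes the step instead.} The paper fixes one phase and classifies where the antipode of each $RR$-splitting line lands. If it lands in a $BB$ half, it supplies the blue carve. If it lands in an $RR$ half, the split is shared. If it lands in a mixed half, it is wasted. The paper then takes all $RR$ halves antipodal to $RR$ halves as the worst case. Either completion works, but your proposal as written contains neither.
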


\begin{proof}
Place cut points at the $2k$ positions $\frac{1}{2} + 2j$, pairing into $k$ generalized cuts and creating $2k$ two-bead arcs. For each $RB$ arc, there is a $BB$ arc, and the pair can be allocated to an agent without extra cuts. For each $RR$ arc, there are three corresponding $BB$ arcs; the $RR$ arc and one $BB$ arc each need one interior cut for these arcs to be allocated to two agents. Naively this could lead to $2k$ total generalized cuts, but with careful accounting of secondary cuts, we can show $\bigl\lfloor \frac{7k}{4} \bigr\rfloor$ generalized cuts suffice.

For each extra cut with its primary in an $RR$ arc, its secondary falls exactly $\frac{m}{2} = 2k$ positions away --- precisely $k$ arcs away, at the same interior offset. If the secondary lands in a $BB$ arc, that arc no longer needs its own cut, saving a generalized cut. If it lands in an $RB$ arc, there is no immediate benefit (the cut point is redundant and harmless). If it lands in another $RR$ arc, the two arcs share the generalized cut, saving half a cut. In the worst case, (almost) all $RR$ arcs are antipodal to other $RR$ arcs: if $k$ is divisible by $4$, there can be $\frac{k}{2}$ $RR$ arcs, requiring $\frac{k}{4}$ shared cuts for the $RR$ interiors and $\frac{k}{2}$ further cuts for the $BB$ partners, for $\frac{3k}{4}$ extra generalized cuts; for the other residues of $k \bmod 4$, the maximum is likewise $\bigl\lfloor \frac{3k}{4} \bigr\rfloor$. Thus $\bigl\lfloor \frac{7k}{4} \bigr\rfloor$ generalized cuts in total suffice, which is better than the $2(k-1)$ bound for $k > 8$.
\end{proof}

We have already shown that instances requiring the full $2(k-1)$ generalized cuts exist for $k=3$ with $m=6k$ and $m=8k$ (Section~\ref{sec:constant}), but we do not have positive or negative results for $5k$, $7k$, or larger multiples.

\section{Three Agents}
\label{sec:three_agents}
In this section, we develop algorithms that work on arbitrary (even adversarial) input. For $k=2$, the ham sandwich theorem provides a solution, so we focus on $k=3$. Existence of a three-cut solution is not new: the equitable convex partition theorem~\cite{Bespamyatnikh2000,Sakai2002,Ito2000} partitions points in general position into three convex regions with $\frac{r}{3}$ red and $\frac{b}{3}$ blue points each, and the boundaries can be extended to full lines to fit our formulation. Our contribution is an elementary algorithm that runs in linear time and does not require general position.

\begin{prop}
If $k=3$, there is always an exact solution consisting of three cuts.
\end{prop}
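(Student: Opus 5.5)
Throughout, write $r = 3r'$ and $b = 3b'$. The plan is to obtain three lines by first cutting off one agent's share with two parallel lines, and then splitting the remaining two shares with a single ham sandwich cut. The first step uses only the one-dimensional structure in a sweep direction.

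\textbf{Step 1: the one-dimensional reduction.} Fix a direction, project onto the perpendicular line, and read the points as a linear necklace of $m$ beads. Ties in projection are broken by rotating the direction slightly, or by using the side-choice rule for points on a cut. Consider all windows of $m/3$ consecutive beads, indexed by start position $s = 0, \dots, 2m/3$. Let $f(s)$ be the number of red beads in window $s$. Moving $s$ by one changes $f(s)$ by at most one.

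\textbf{Step 2: the discrete intermediate value argument.} We want a window with $f(s) = r'$ to exist. The windows starting at $0$, $m/3$ and $2m/3$ partition the necklace, so their red counts average to $r'$. Hence one of them has at least $r'$ red beads and another has at most $r'$. By the discrete intermediate value theorem, some window $s$ between them has exactly $r'$ red and therefore $b'$ blue beads. Two parallel lines perpendicular to the projection direction, placed in the gaps bounding this window, carve out a slab that we give to one agent.

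\textbf{Step 3: splitting the rest.} The remaining points lie in the two outer half-planes and contain exactly $2r'$ red and $2b'$ blue points. Apply the discrete ham sandwich theorem to this remaining set; a cut computable in linear time exists by Lo et al.~\cite{Lo1994}. It gives a line with at most $r'$ of each colour strictly on each side. Assign the points lying on the line using the side-choice rule so that each side has exactly $r'$ red and $b'$ blue points. This is possible when the line passes through several points, as long as we use the standard discrete formulation, where each open side has at most half of each colour. The ham sandwich line also crosses the slab, but this is harmless: the slab's agent receives the union of all cells inside the slab, and the other two agents receive the cells outside the slab on the corresponding sides of the ham sandwich line. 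Points of the slab are never given to them. In total this uses three lines.

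\textbf{Running time and main obstacle.} Finding the window needs only linear-time selection, not sorting. Selecting the elements of rank $m/3$ and $2m/3$ by projection gives the three canonical windows. The search for the intermediate window between two of them can be done by binary search using repeated median selection in linear total time; alternatively, the windows can be restricted to a bounded subrange. The step I expect to be the main obstacle is degeneracy. Many points may share a projection value, or lie on the ham sandwich line, and we must check that the side-choice rule can always resolve them consistently without general position. For the slab I would handle ties by perturbing the direction. For the ham sandwich step I would rely on the discrete formulation that allows points on the cut to be assigned to either side.
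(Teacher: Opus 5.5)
Your proof is correct and takes the same approach as the paper: project to a line, use one step of one-dimensional necklace splitting to carve out one agent's share with at most two parallel cuts, then split the remaining $\frac{2r}{3}$ red and $\frac{2b}{3}$ blue points between the other two agents with a single ham sandwich cut. One small slip in Step~3: each open side of the ham sandwich line should hold at most $r'$ red and at most $b'$ blue points, not ``at most $r'$ of each colour''.
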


\begin{proof}
If we project the points onto a line, we can run the one-dimensional necklace splitting algorithm to find a solution with at most four cuts. If we instead run a single iteration of the one-dimensional algorithm, we now have one agent's allocation with at most two cuts. We can then find a standard ham sandwich cut for the remaining points to allocate between the remaining two agents. Overall, this gives us a solution with at most three cuts.
\end{proof}

While this algorithm already beats the $2(k-1)$ bound, we have experimental evidence~\cite{Advani2026b} that even two cuts may be sufficient. For points uniformly distributed over the lattice points in $[-50,50]^2$ and point sets where the red points are tightly clustered within the same bounds, up through $m=24$, with $2000$ trials for each $m$, no problem instances have been discovered that required more than two cuts.
If we require one cut to be a $\frac{1}{3}$-ham sandwich cut, it is known that there are instances that fail with tightly clustered points~\cite[Figure~1]{Chiu2020}. Analogously, if we enforce that an agent gets two opposite sectors allocated to them\footnote{A 4-fan formed by two lines, with the shares $\frac{1}{3}$, $\frac{1}{3}$, $\frac{1}{6}$, and $\frac{1}{6}$ prescribed, in contrast to the equipartitions by 4-fans of B\'{a}r\'{a}ny and Matou\v{s}ek~\cite{Barany2001,Barany2002}.}, there are instances that do not have a two-cut solution; a smallest possible\footnote{There can be no counterexample with $m=6$ because, by Theorem~\ref{thm:six_points}, there is always a solution with two parallel cuts (a degenerate case of opposite sectors).} instance is shown in Figure~\ref{fig:opposite_counterexample_m9}. Thus, any successful algorithm could not rely exclusively on either case.

\begin{figure}[bhpt]
\centering
\includegraphics[width=0.5\linewidth]{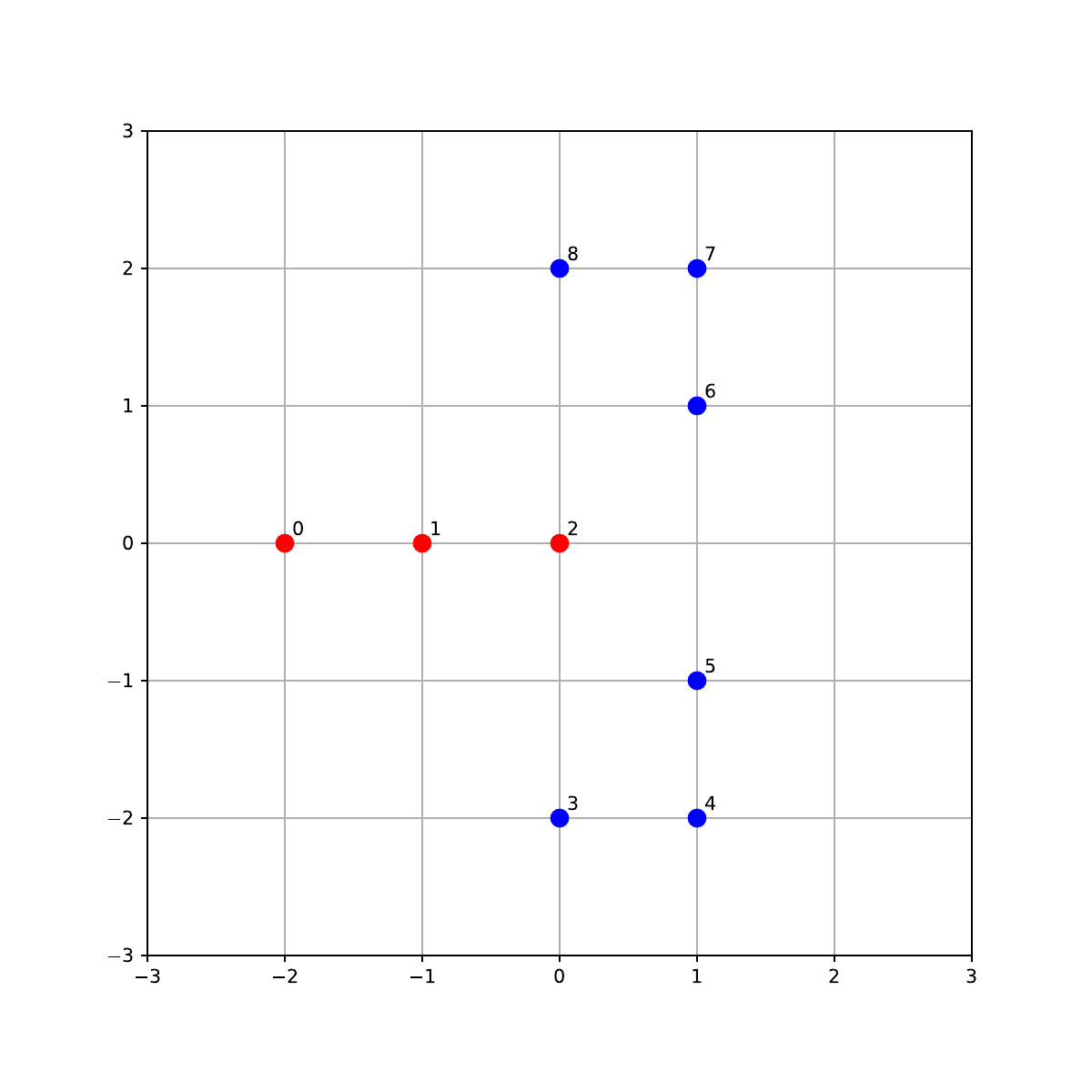}
\caption{An instance with $m=9$ that cannot be fairly split using only two cuts if one agent is required to be allocated opposite sectors.}
\label{fig:opposite_counterexample_m9}
\end{figure}

\begin{conj}
If $k=3$, there is always an exact solution consisting of two cuts.
\end{conj}

\subsection{Six points}
In this section, we focus on the special case where we have exactly three red and three blue points.

\begin{theorem}
\label{thm:six_points}
If $k=3$, $r=3$, and $b=3$, there is always an exact solution consisting of two parallel cuts, and it can be computed in constant time.
\end{theorem}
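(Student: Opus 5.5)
Two parallel cuts perpendicular to a direction $u$ split the plane into three slabs. If we sort the points by projection onto $u$, each slab is a contiguous block of the resulting linear order. The three agents are then served by three blocks, so we need a direction $u$ whose sorted color sequence splits into three consecutive blocks, each containing one red and one blue point. Because the problem formulation lets an on-cut point go to either side, ties in the projection are harmless. We may therefore treat the order as any linear extension consistent with the projection, and degenerate configurations such as collinear points cause no trouble.

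A length-$6$ sequence with three $R$ and three $B$ splits into three $\{R,B\}$ pairs exactly when positions $(1,2)$, $(3,4)$ and $(5,6)$ each contain one of each color. Pairs such as $RR$ or $BB$ appearing at those aligned positions are what break this. So the goal is to find a direction in which every aligned pair is mixed.

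The plan is a rotation argument. As $u$ rotates through angle $\pi$, the projected order changes only by adjacent transpositions: one swap each time $u$ becomes perpendicular to a segment between two points. Over the half-turn the order reverses, since every pair swaps exactly once. We track the number $f(u)$ of aligned pairs $(1,2)$, $(3,4)$, $(5,6)$ that are monochromatic. We want $f(u)=0$.

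Here is the key counting step. Among the $15$ pairs of points, $6$ are monochromatic ($3$ red, $3$ blue) and $9$ are bichromatic. Consider a point set in which every direction has $f\ge 1$. Within a block of two with one red and one blue, a swap does not change the block's status. Only swaps between positions $2$–$3$ and $4$–$5$ can change $f$. I would argue as follows. In the reversal, every monochromatic pair becomes adjacent at the moment it swaps. Meanwhile, the red and blue counts in the prefix of length $2$ and of length $4$ each sweep from their initial values to their complements. A discrete intermediate value argument on the prefix-of-$2$ red count $c_2$ applies: it goes from $c_2$ to $2-c_2'$, where $c_2'$ is the suffix value. The same holds for the prefix-of-$4$ red count. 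This shows that there is a direction with $c_2=1$ and a direction with $c_4=2$. The main obstacle is obtaining $c_2=1$ and $c_4=2$ simultaneously. I would handle it by choosing $u$ so that the first cut is a $\tfrac13$-cut: a line with exactly one red and one blue point on one side. Such a line exists by rotating a tangent line around a convex hull vertex, since with so few points an extreme point of each color can be paired. Then I would check that the opposite slab can also be made fair by perturbing the direction.

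A cleaner alternative, and the one I would try first if the continuity argument gets messy, is a direct case analysis on the convex hull. Take a hull edge $pq$. If $p$ and $q$ have different colors, take $u$ perpendicular to a direction nearly parallel to $pq$, so that $p$ and $q$ come first. Then a ham sandwich cut of the remaining four points, rotated to be parallel, is what we need. To get parallelism, we choose $u$ so that the remaining four points in the $u$-order are split $1R1B$ / $1R1B$. Among directions close to $pq$, the order of the other four points is arbitrary. So we instead fix $p,q$ as extremes and rotate within the range of directions for which $p$ and $q$ stay first; this range is nonempty and is an interval of angles. Within it we apply the intermediate value argument to the middle pair. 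If every hull edge is monochromatic, the hull is monochromatic, say red. Then all blue points are interior, and we need a separate argument, for instance sorting by a direction through a red hull vertex and two blue points. Everything involves a constant number of points and candidate directions (the $O(1)$ critical directions perpendicular to the $15$ segments). So once existence is shown, the solution is found in constant time by testing each cell between critical directions.
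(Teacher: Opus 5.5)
Your reduction is correct: two parallel cuts work exactly when some direction's sorted color sequence has one red point in each of the position pairs $\{1,2\}$, $\{3,4\}$, $\{5,6\}$, i.e.\ $c_2=1$ and $c_4=2$ at the \emph{same} direction. Your handling of ties and the constant-time search over the cells between critical directions are also fine. But the existence argument is never supplied. In your first route, two separate intermediate value arguments give a direction with $c_2=1$ and possibly a different one with $c_4=2$. ``Perturbing the direction'' gives no way to move $c_4$ while keeping $c_2$ fixed, and this simultaneity is the whole difficulty, not a detail.

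Your second route fails outright. The interval of directions in which a bichromatic hull edge $pq$ occupies the first two positions ends when a third point overtakes $p$ or $q$. That event says nothing about positions $3$ and $4$, so there is nothing for an intermediate value argument to bracket. For example, take $p=(0,0)$ red, $q=(1,0)$ blue, reds at $(0.5,10)$ and $(0.52,11)$, and blues at $(0.48,20)$ and $(0.5,21)$. Throughout that whole interval the other four points appear in the order $RRBB$, so the middle slab is always monochromatic. The actual solution, a direction slightly tilted from the $x$-axis giving $RB\,RB\,RB$, lies outside it. The monochromatic-hull case is left with no argument at all.

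The missing idea is a single scalar potential that encodes both constraints at once. Let $I(\theta)$ be the sum of the positions of the three red points in the order induced by direction $\theta$.
\begin{itemize}
\item An adjacent transposition changes $I$ by at most one; degeneracies can be resolved one transposition at a time, as you suggest.
\item Reversing the direction gives $I(\theta)+I(\theta+\pi)=21$. So if $I(\theta_0)\le 10$, then $I(\theta_0+\pi)\ge 11$, and unit steps force $I=10$ somewhere on the half-turn.
\item Every $3$-subset of $\{1,\dots,6\}$ with sum $10$ or $11$ is one of $\{1,3,6\}$, $\{1,4,5\}$, $\{2,3,5\}$ or their complements. Each contains exactly one element of every pair $\{1,2\}$, $\{3,4\}$, $\{5,6\}$, so such a direction yields the two parallel cuts.
\end{itemize}
This replaces your two coupled conditions with one discrete intermediate value argument, which is how the paper proves the statement.
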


\begin{proof}
Given an angle $\theta$, project all six input points onto a vector with direction $\theta$. Let $I(\theta)$ be the sum of the indices of the red points in the ordering. E.g., if the first three points are red, we have $I(\theta) = 1 + 2 + 3 = 6$. Note that if each of the sets of indices $\set{1, 2}$, $\set{3, 4}$, and $\set{5, 6}$ includes one red and one blue point, then we have a fair solution with two cuts perpendicular to the vector: a cut between positions 2 and 3, and a cut between positions 4 and 5. Next, note that reversing the direction of the vector switches the positions of red and blue points, so $I(\theta) + I(\theta + \pi) = \sum_{i=1}^6 i = 21$. Finally, note that no 3-subset of $\set{1, \dots, 6}$ that sums to $10$ or $11$ can contain two points from a single pair; if $I(\theta) \in \set{10, 11}$, the instance must have a solution.

Choose some initial angle $\theta_0$. Assume without loss of generality that $I(\theta_0) < \frac{21}{2}$. As the angle $\theta$ continuously varies from $\theta_0$ to $\theta_0 + \pi$, by the discrete intermediate value theorem, $I(\theta)$ must equal $10$ at some point\footnote{Without a general-position assumption, several disjoint pairs of points (or a set of three or more collinear points) may attain equal projections at the same critical angle, so $I(\theta)$ may change by more than one as $\theta$ passes it. In that case, resolve the coincidences one adjacent transposition at a time: each transposition changes $I$ by at most one, so some partial resolution attains the target value.}. Thus, the instance has a solution. Since we made no additional assumptions on the nature of the input, it follows that all instances with $m=6$ have a solution consisting of two parallel cuts.

For the running time: as $\theta$ varies over a half-turn $[\theta_0, \theta_0 + \pi]$, the ordering of the projected points can only change at the critical angles at which two projections coincide, i.e., at the directions perpendicular to the $\binom{6}{2} = 15$ lines spanned by pairs of input points. These critical angles partition the half-turn into at most $15$ arcs, on each of which $I(\theta)$ is constant. We compute the critical angles, evaluate $I(\theta)$ once per arc, and return the two cuts for any arc with $I(\theta) \in \set{10, 11}$, which exists by the argument above. All quantities involved are of constant size, so the algorithm runs in constant time.
\end{proof}

\section{General Algorithm}
\label{sec:general_alg}
We now generalize the algorithm from Section~\ref{sec:three_agents} to $k>3$. First, we present a subroutine for one-dimensional necklace splitting that can produce allocations for groups of agents and does not require presorted input.

\subsection{Unsorted one-dimensional necklace splitting}
\label{sec:unsorted}
Throughout this section, we focus on a single iteration of the subroutine. Let $P$ be the set of $M$ currently unallocated beads, of which $R$ are red and $B$ are blue, let $K \geq 2$ be the number of agents remaining, and let $a < K$ be the number of agents we are allocating beads to in this iteration. By the divisibility assumption, $\beta \coloneqq \frac{M}{K}$ and $Q \coloneqq \frac{aR}{K}$ are integers. Let $s = a\beta$. We think of the necklace as circular; all relevant arithmetic in this section is cyclic, i.e., modulo $M$, and for $i \in \Integers_M$ we write $W(i) \coloneqq \set{i, i+1, \dots, i+s-1} \subseteq \Integers_M$ for the \emph{window} of $s$ consecutive beads starting at $i$, and $f(i)$ for the number of red beads located in $W(i)$.

The goal of the iteration is to compute a window containing exactly $Q$ red beads; since $|W(i)| = s$ is fixed, such a window automatically contains exactly $s - Q = \frac{aB}{K}$ blue beads. Its existence is a proportional refinement of an argument of Alon and Graur~\cite[Proposition~2]{Alon2021}.

\begin{lemma}[Existence]
\label{lem:existence}
There exists $i \in \Integers_M$ with $f(i) = Q$.
\end{lemma}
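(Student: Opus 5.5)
The plan is a discrete intermediate value argument applied to the window-count function $f$ on the cyclic index set $\Integers_M$. There are two ingredients: an averaging identity showing that $Q$ is the mean value of $f$, and a Lipschitz-type bound showing that $f$ changes by at most one between consecutive windows.

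\emph{Averaging.} Sum $f(i)$ over all $i \in \Integers_M$. Each red bead lies in exactly $s$ of the $M$ windows, since $s < M$ because $a < K$. Hence
\[
\sum_{i \in \Integers_M} f(i) = sR = \frac{aMR}{K} = MQ,
\]
so the average of $f$ over $\Integers_M$ is exactly $Q$. It follows that there exist indices $i_-$ and $i_+$ with $f(i_-) \leq Q \leq f(i_+)$.

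\emph{Unit steps.} Passing from $W(i)$ to $W(i+1)$ removes bead $i$ and adds bead $i+s$. Therefore $f(i+1) - f(i) \in \set{-1, 0, 1}$, and this holds cyclically for every $i$.

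\emph{Conclusion.} Walk forward from $i_-$ to $i_+$ along $\Integers_M$. The integer sequence of values $f$ starts at most $Q$, ends at least $Q$, and moves in steps of size at most one. So some index $i$ on the way satisfies $f(i) = Q$; this is the discrete intermediate value theorem already used in Section~\ref{sec:star_evenly_spaced}. That window then contains exactly $Q$ red and $s - Q$ blue beads, as required.

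There is no real obstacle here. The only points needing care are that the count of windows containing a given bead is exactly $s$ (this uses $s \leq M$) and that cyclicity guarantees the walk from $i_-$ to $i_+$ stays within the domain of $f$.
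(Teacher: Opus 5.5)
Your proof is correct and matches the paper's argument. The paper likewise shows $\sum_{i \in \Integers_M} f(i) = sR = MQ$ so that $Q$ is the mean of $f$, notes that $f$ changes by at most one between consecutive windows, and concludes by the discrete intermediate value theorem; your non-strict inequalities $f(i_-) \leq Q \leq f(i_+)$ simply merge the paper's two cases ($f \equiv Q$, or values strictly on both sides of $Q$) into one.
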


\begin{proof}
Each bead of $P$ lies in exactly $s$ of the $M$ windows, so $\sum_{i \in \Integers_M} f(i) = sR$, and $\frac{sR}{M} = \frac{aR}{K} = Q$, so the average of $f$ over $\Integers_M$ is exactly $Q$. Consequently either $f = Q$ over all input, or $f$ attains values both strictly above and strictly below $Q$. As a window moves one bead at a time, the number of red beads in it changes by at most one. Thus, by the discrete intermediate value theorem, $f(i)$ must be equal to $Q$ for some $i \in \Integers_M$.
\end{proof}

The subroutine locates such a window in three stages: a coarse partition of $P$ into $K$ rank-blocks (Lemma~\ref{lem:multiselect}); an evaluation of $f$ at the $K$ block boundaries, which either shows the window is exactly aligned with a block or brackets it between two block boundaries (Lemma~\ref{lem:bracket}); and a prune-and-search refinement inside the bracket (Lemma~\ref{lem:refine}). A sample execution is shown in Figure~\ref{fig:subroutine}.

\begin{figure}[pt]
\centering
\includegraphics[page=5]{tikz}

\bigskip

\includegraphics[page=6]{tikz}
\caption{A visualization of the unsorted one-dimensional subroutine with $R=12$, $B=20$, $K=4$, $a=2$, and $\beta=8$. Top: after partitioning (Lemma~\ref{lem:multiselect}), the array holds the $K$ rank-blocks in order, but the beads within a block are unsorted; block-aligned windows span $a$ consecutive blocks, and evaluating $f$ at the $K$ block boundaries (Lemma~\ref{lem:bracket}) brackets a window with exactly $Q = \frac{aR}{K} = 6$ red beads between $W(0)$ and $W(\beta)$, so the head block is $H=$ Block $q=$ Block $0$ and the tail block is $L=$ Block $q+a=$ Block $2$. Bottom: by recursive bisection of the active segments (Lemma~\ref{lem:refine}), the desired window will be found.}
\label{fig:subroutine}
\end{figure}

\begin{lemma}[Partition]
\label{lem:multiselect}
In $\Theta(M \log K)$ time, we can rearrange $P$ in an array so that for each $q \in \set{0, \dots, K-1}$ the beads with positions in $[q\beta, (q+1)\beta)$ occupy positions $[q\beta, (q+1)\beta)$ of the array, and compute the red count $c_q$ of each block.
\end{lemma}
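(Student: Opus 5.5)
The approach is a standard multiselection by recursive median-of-ranks partitioning, the same idea as quicksort cut off after $\lceil \log_2 K \rceil$ levels. Here ``positions'' means the ranks of the beads in the sorted order along the line, with ties broken by a fixed rule such as index. The goal is to place each rank-$t$ bead into block $\lfloor t/\beta \rfloor$ without sorting inside the blocks.

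We define a recursive procedure $\mathrm{Split}(A, q_{\mathrm{lo}}, q_{\mathrm{hi}})$. It takes a contiguous subarray $A$ that contains exactly the beads of ranks in $[q_{\mathrm{lo}}\beta, q_{\mathrm{hi}}\beta)$.
\begin{enumerate}
\item If $q_{\mathrm{hi}} - q_{\mathrm{lo}} = 1$, stop.
\item Otherwise let $q_{\mathrm{mid}} = \lfloor (q_{\mathrm{lo}} + q_{\mathrm{hi}})/2 \rfloor$. Using linear-time selection (median of medians), find the bead of rank $q_{\mathrm{mid}}\beta$ within $A$, i.e., local rank $(q_{\mathrm{mid}} - q_{\mathrm{lo}})\beta$.
\item Partition $A$ in place around this bead, as in quicksort. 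Ties are handled by the fixed tie-breaking rule, so the left part has exactly $(q_{\mathrm{mid}} - q_{\mathrm{lo}})\beta$ elements.
\item Recurse on both halves.
\end{enumerate}
The top-level call is $\mathrm{Split}(P, 0, K)$. By induction on the recursion, every leaf call corresponds to exactly one block $q$, and its subarray occupies array positions $[q\beta, (q+1)\beta)$. These are exactly the beads of ranks in that range, which is the claimed invariant.

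For the running time, the subarrays at any fixed depth of the recursion are disjoint. Selection and partitioning at that depth therefore cost $O(M)$ in total. Because we split the block range in half each time, the depth is $\lceil \log_2 K \rceil$, giving $O(M \log K)$. After the recursion, one linear scan counts the reds in each block to produce every $c_q$, in $O(M)$ time.

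For the matching lower bound $\Omega(M \log K)$, I would use a comparison-model counting argument. The number of distinct outcomes is the multinomial coefficient $M!/(\beta!)^K$, whose logarithm is $\Theta(M \log K)$ by Stirling's formula. This applies when the positions are real coordinates and comparisons are the only access to them. So the time is $\Theta(M \log K)$, interpreting the lemma as the worst case. If only the upper bound is needed, I would note the lower bound briefly.

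The main subtlety I expect is the ``positions'' semantics together with ties (coincident projections). The fix is a consistent total order, e.g., lexicographic on (projected coordinate, point index). Selection and partitioning must use exactly this order so that block sizes come out exactly $\beta$. I would also check the rounding when $K$ is not a power of two: the halves have unequal block counts, but each is an integer number of blocks, so the recursion still terminates in $\lceil \log_2 K \rceil$ levels.
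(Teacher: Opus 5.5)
Your proposal is correct and follows essentially the same route as the paper: recursive linear-time selection at a middle block boundary, then recursion on both sides, with depth $\lceil \log_2 K \rceil$, $\Theta(M)$ work per level, and a final linear sweep for the $c_q$. The differences are minor: you split at $\lfloor (q_{\mathrm{lo}}+q_{\mathrm{hi}})/2 \rfloor$ while the paper puts $\lceil K/2 \rceil$ blocks on the left, which is immaterial, and your explicit tie-breaking rule and comparison-model $\Omega(M \log K)$ counting argument are extras the paper does not give (its $\Theta$ refers only to the cost of this particular algorithm).
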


\begin{proof}
A linear-time selection algorithm~\cite{Blum1973}, run on an array of elements with target rank $t$, returns the element $x$ of rank $t$, and in linear time, we can rearrange the array so that elements of rank less than $t$ precede $x$ and elements of larger rank follow it. Select the boundary rank $\bigl\lceil \frac{K}{2} \bigr\rceil \beta$ in the full array in $\Theta(M)$ time, then recurse on the two sides, which require $\bigl\lceil \frac{K}{2} \bigr\rceil$ and $\bigl\lfloor \frac{K}{2} \bigr\rfloor$ blocks respectively. The recursion tree has depth $\lceil \log_2 K \rceil$, and the subproblems at each depth are disjoint subarrays of total size at most $M$, so each level costs $\Theta(M)$. A final linear sweep records the counts $c_q$. Thus, the first stage takes $\Theta(M \log K)$ time.
\end{proof}

\begin{lemma}[Boundary evaluation]
\label{lem:bracket}
Given the output of Lemma~\ref{lem:multiselect}, in $\Theta(K)$ time, we can either find $i \in \Integers_M$ with $f(i) = Q$, or find $q \in \Integers_K$ such that $f(q\beta)$ and $f((q+1)\beta)$ lie strictly on opposite sides of $Q$.
\end{lemma}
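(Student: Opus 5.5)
Because $s = a\beta$, every block-aligned window $W(q\beta)$ is exactly the union of the $a$ consecutive blocks $q, q+1, \dots, q+a-1$ (indices modulo $K$). Its red count is therefore
\[
f(q\beta) = \sum_{t=0}^{a-1} c_{(q+t) \bmod K},
\]
so the plan is to evaluate all $K$ of these values with a cyclic sliding-window sum over the counts $c_0, \dots, c_{K-1}$ from Lemma~\ref{lem:multiselect}. We first compute $f(0)$ directly in $\Theta(a) \subseteq \Theta(K)$ time. We then use the update
\[
f((q+1)\beta) = f(q\beta) - c_q + c_{(q+a) \bmod K},
\]
which costs $O(1)$ per step, so all boundary values take $\Theta(K)$ total time.

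Next, we scan these $K$ values. If some $f(q\beta) = Q$, we return $i = q\beta$. Otherwise every value differs from $Q$, and we argue that they cannot all lie on the same side of $Q$. The block-aligned windows cover each block exactly $a$ times, since each block lies in exactly $a$ of the $K$ cyclic windows of $a$ consecutive blocks. Hence
\[
\sum_q f(q\beta) = aR, \qquad\text{so}\qquad \frac{1}{K}\sum_q f(q\beta) = Q,
\]
mirroring the averaging step in Lemma~\ref{lem:existence}. Since the average is exactly $Q$ and no value equals $Q$, some value lies strictly above $Q$ and some strictly below. Going around the cycle $\Integers_K$, there must then be a consecutive pair $q, q+1$ (with $K-1$ wrapping to $0$) at which the sign of $f(\cdot\,\beta) - Q$ changes. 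A second $\Theta(K)$ scan finds this pair, which gives the required bracket.

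The only step needing care is this sign-change argument, which rests on the exact averaging identity restricted to block boundaries; the rest is routine bookkeeping. One should also note that the computation never inspects individual beads or their order within a block, only the counts $c_q$, which is why the bound is $\Theta(K)$ rather than $\Theta(M)$.
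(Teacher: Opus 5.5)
Your proof is correct and follows the paper's argument: block-aligned windows are unions of $a$ consecutive blocks, a cyclic sliding sum over the counts $c_q$ yields all $K$ boundary values in $\Theta(K)$ time, and the averaging identity $\sum_q f(q\beta) = aR = KQ$ forces either an exact hit or a cyclic sign change between consecutive samples. One small notational point: the initial sum costs $\Theta(a)$, which is $O(K)$ since $a < K$, so write $O(K)$ rather than $\Theta(a) \subseteq \Theta(K)$.
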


\begin{proof}
Since $s = a\beta$ is a multiple of the block size, the window starting at a block boundary is exactly $a$ consecutive blocks: $f(q\beta) = c_q + c_{q+1} + \dots + c_{q+a-1}$ (indices mod $K$). Since we have already computed the red-bead counts for each block, we can sweep through the block-aligned windows and compute all $K$ values $f(0), f(\beta), \dots, f((K-1)\beta)$ in $\Theta(K)$ time. Each block appears in exactly $a$ of these $K$ windows, so $\sum_{q \in \Integers_K} f(q\beta) = a \sum_{q \in \Integers_K} c_q = aR = KQ$, and the average of the sampled values is exactly $Q$. Hence either some sample equals $Q$, and we are done, or there is a pair of consecutive samples strictly above and below $Q$. The start of the desired window is thus bracketed within an interval of length $\beta$.
\end{proof}

\begin{lemma}[Refinement]
\label{lem:refine}
Given the bracket of Lemma~\ref{lem:bracket}, in $\Theta(\beta)$ time, we can find $d \in \set{1, \dots, \beta-1}$ with $f(q\beta + d) = Q$, together with the two beads with positions $q\beta + d - 1$ and $q\beta + d$ and the two beads with positions $q\beta + s + d - 1$ and $q\beta + s + d$.
\end{lemma}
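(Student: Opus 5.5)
Write $g(d) \coloneqq f(q\beta + d)$ for $d \in \set{0, \dots, \beta}$. By Lemma~\ref{lem:bracket}, $g(0)$ and $g(\beta)$ lie strictly on opposite sides of $Q$, and $g$ changes by at most one per step. So the discrete intermediate value theorem gives some $d \in \set{1, \dots, \beta-1}$ with $g(d) = Q$; the endpoints are excluded because both differ from $Q$.

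The difficulty is computing values of $g$ without sorting. The window $W(q\beta + d)$ consists of three parts: the last $\beta - d$ beads of the head block $H = $ Block $q$, the $a-1$ full blocks strictly between, and the first $d$ beads of the tail block $L = $ Block $q+a$ (indices mod $K$). The middle blocks contribute the known constant $C = c_{q+1} + \dots + c_{q+a-1}$. Define $h(d)$ as the number of red beads among the $d$ smallest-position beads of $H$, and $\ell(d)$ the same for $L$. Then
\[
g(d) = C + (c_q - h(d)) + \ell(d).
\]
So we need to find $d$ such that $\ell(d) - h(d) = Q - C - c_q$, using only the unsorted contents of $H$ and $L$, which have $2\beta$ beads in total.

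We do this by simultaneous prune-and-search bisection. Maintain a rank interval $[d_{\mathrm{lo}}, d_{\mathrm{hi}}]$ on which $g$ is still bracketed, starting from $[0, \beta]$. Keep the \emph{active segments}: the beads of $H$ and of $L$ whose ranks lie in that interval, together with the red counts of the beads discarded below it in each block. At each step:
\begin{itemize}
\item Take the midpoint rank $d_{\mathrm{mid}}$ and run linear-time selection~\cite{Blum1973} separately in the active segment of $H$ and in that of $L$, partitioning each around rank $d_{\mathrm{mid}}$.
\item Count red beads in the lower parts to get $h(d_{\mathrm{mid}})$ and $\ell(d_{\mathrm{mid}})$, hence $g(d_{\mathrm{mid}})$.
\item If $g(d_{\mathrm{mid}}) = Q$, stop. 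Otherwise keep whichever half still has $g$ strictly on opposite sides of $Q$ at its ends, discard the other half of both segments, and add the red counts of any discarded lower part to the running prefix tallies.
\end{itemize}
Each step costs time linear in the current active size, which halves every step, so the total is a geometric series, $\Theta(\beta)$. This also gives the $\Omega(\beta)$ side, since the blocks must be read.

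When the search ends (a hit, or an interval of length one, which by the step-size bound forces a hit at an endpoint), we have $d$ and the partitions around rank $d$ in both blocks. The four requested beads are then the rank-$(d-1)$ and rank-$d$ elements of $H$ and of $L$, i.e.\ positions $q\beta + d - 1$, $q\beta + d$, $q\beta + s + d - 1$, $q\beta + s + d$. Each is the maximum of a lower part or the minimum of an upper part, and one more linear scan over the last small segment finds them.

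The main obstacle is the bookkeeping. Ranks must be bisected in lockstep in both blocks, and the discarded-prefix red counts must be maintained so that $h$ and $\ell$ are evaluated exactly; the invariant to state is that the bracket condition holds at the current interval's endpoints. A second point to check is degenerate input (ties in position): a fixed tie-breaking rule makes ranks well defined.
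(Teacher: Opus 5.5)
Your proof is correct and follows essentially the same route as the paper. Both split the window into a head block, the untouched middle blocks and a tail block, then bisect ranks in lockstep in both blocks with linear-time selection while carrying the settled-prefix red counts, so the cost is a geometric series of $\Theta(\beta)$. The only difference is cosmetic: you keep a strict two-sided bracket and stop at the first hit, so the length-one case you mention never arises, whereas the paper keeps $\Delta(\mathit{lo}) < \tau \le \Delta(\mathit{hi})$ and runs until $\mathit{hi} - \mathit{lo} = 1$.
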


\begin{proof}
Assume without loss of generality that $f(q\beta) < Q < f((q+1)\beta)$. For $d \in \set{0, \dots, \beta}$, the window $W(q\beta + d)$ is obtained from $W(q\beta)$ by removing the $d$ leftmost beads of the \emph{head block} $H$ (Block $q$, positions $[q\beta, (q+1)\beta)$) and adjoining the $d$ leftmost beads of the \emph{tail block} $L$ (Block $q + a \bmod K$, positions $[(q+a)\beta, (q+a+1)\beta)$). Let $r_H(d)$ and $r_L(d)$ denote the number of red beads among the $d$ leftmost beads of $H$ and of $L$ respectively, and set $\Delta(d) \coloneqq r_L(d) - r_H(d)$, so that $f(q\beta + d) = f(q\beta) + \Delta(d)$ and the target value for $\Delta(d)$ is $\tau \coloneqq Q - f(q\beta) \geq 1$. Note that $\Delta(0) = 0 < \tau$ and $\Delta(\beta) = f((q+1)\beta) - f(q\beta) > \tau$. As we slide the window one bead at a time, the value of $\Delta(d)$ changes by at most one, so by the discrete intermediate value theorem, $\Delta(d) = \tau$ for some $d$.

We locate the target value by recursive bisection, maintaining $0 \leq \mathit{lo} < \mathit{hi} \leq \beta$ with the numeric invariant $\Delta(\mathit{lo}) < \tau \leq \Delta(\mathit{hi})$, together with the following structural invariant in each of the two blocks $H$ and $L$: the block's array is partitioned into a \emph{settled prefix} holding its $\mathit{lo}$ leftmost beads, an \emph{active segment} of length $\mathit{hi} - \mathit{lo}$ holding the beads of local position in $(\mathit{lo}, \mathit{hi}]$, and a \emph{settled suffix} holding the rest; no order is maintained within any part. In addition the two integers $r_H(\mathit{lo})$ and $r_L(\mathit{lo})$ are stored. Initially $\mathit{lo} = 0$, $\mathit{hi} = \beta$, and the active segments are the full blocks.

A round proceeds as follows with $\mathit{mid} = \bigl\lfloor \frac{\mathit{lo} + \mathit{hi}}{2} \bigr\rfloor$. In each block, run linear-time selection for local position $\mathit{mid} - \mathit{lo}$ within the active segment; this takes $\Theta(\mathit{hi} - \mathit{lo})$ time and splits the active segment into a left part (local positions $(\mathit{lo}, \mathit{mid}]$) and a right part (local positions $(\mathit{mid}, \mathit{hi}]$). Counting red beads in the two left parts yields $r_H(\mathit{mid}) = r_H(\mathit{lo}) + \#\set{\text{red in left part of } H}$ and analogously $r_L(\mathit{mid}) = r_L(\mathit{lo}) + \#\set{\text{red in left part of } L}$. If $\Delta(\mathit{mid}) \geq \tau$, set $\mathit{hi} \leftarrow \mathit{mid}$; the new active segments are the left parts, and the right parts are frozen into the settled suffixes. Otherwise set $\mathit{lo} \leftarrow \mathit{mid}$; the new active segments are the right parts, and the left parts are frozen into the settled prefixes. In addition, store the updated counts $r_H(\mathit{lo})$ and $r_L(\mathit{lo})$. In both cases, all invariants are maintained.

Rounds continue until $\mathit{hi} - \mathit{lo} = 1$, at which point the invariant $\Delta(\mathit{lo}) < \tau \leq \Delta(\mathit{hi})$ forces $\Delta(\mathit{hi}) = \tau$, and $d = \mathit{hi}$ is returned. The active segment halves each round, so the third stage takes $\Theta(\beta) + \Theta(\beta/2) + \Theta(\beta/4) + \dots = \Theta(\beta)$ time in total. The four boundary beads are available from the final selection calls: the beads of local positions $d$ and $d+1$ in $H$ sit at the last partition boundary of $H$, and likewise in $L$. These give us the two cut locations.
\end{proof}

\begin{theorem}
\label{thm:iteration}
One iteration of the unsorted one-dimensional subroutine sets apart a subset containing exactly $\frac{aR}{K}$ red and $\frac{aB}{K}$ blue beads with at most two cuts, and it runs in $\Theta(M \log K)$ time.
\end{theorem}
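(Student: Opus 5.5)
The plan is to chain the three stages, Lemmas~\ref{lem:multiselect}, \ref{lem:bracket}, and \ref{lem:refine}, and then verify two things: the output window is a correct allocation needing at most two cuts, and the running times add up to $\Theta(M \log K)$. Correctness of the count comes first. Each stage returns a start index $i$ with $f(i) = Q$. This happens either at a block boundary, found directly by Lemma~\ref{lem:bracket}, or at $q\beta + d$, found by Lemma~\ref{lem:refine}. Existence of such an index is guaranteed by Lemma~\ref{lem:existence}. Since $|W(i)| = s = a\beta$ is fixed, $W(i)$ contains exactly $Q = \frac{aR}{K}$ red beads and $s - Q = \frac{aB}{K}$ blue beads.

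Next I would argue the two-cut claim. The window $W(i)$ is a contiguous arc of $s$ beads in the circular order. A circular arc is separated from its complement by two cut points, one between positions $i-1$ and $i$ and one between positions $i+s-1$ and $i+s$. I have to reconcile the circular picture with the actual linear necklace (the projection), where the window may wrap around. Then it is either a middle interval, which needs two cuts, or a union of a prefix and a suffix, whose complement is a middle interval, also two cuts. In either case at most two cuts separate it. If $i$ is a block boundary, one of these may coincide with an endpoint of the necklace and need no cut, which only helps.

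I also have to check that the cuts can actually be placed. A cut is a point (or a perpendicular line in the planar use) strictly between the two boundary beads, and Lemma~\ref{lem:refine} returns exactly those bead pairs. In the block-aligned case, the boundary beads are the extreme elements of adjacent blocks, and I can obtain them by a linear scan for max/min within the relevant blocks in $O(\beta)$ time. Ties in position are handled by the side-choice rule, since the ranking by selection already breaks ties consistently.

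Finally I sum the costs: $\Theta(M\log K)$ for partitioning, plus $\Theta(K)$, plus $\Theta(\beta)$ with $\beta \le M$. This gives $O(M\log K)$. The matching lower bound comes from the partition stage, which is the dominating stage of the algorithm as analyzed. The main subtlety is the wraparound/two-cut reconciliation together with the degenerate block-aligned case. The timing itself is routine.
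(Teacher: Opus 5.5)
Your proposal is correct and follows essentially the same route as the paper's proof. Both chain Lemmas~\ref{lem:multiselect}, \ref{lem:bracket}, and \ref{lem:refine}, use the fixed window size $s$ to get $\frac{aB}{K}$ blue beads from $Q$ red ones, and sum the stage costs using $\beta \le M$ and $K \le M$ (state $K \le M$ explicitly, since it is what absorbs the $\Theta(K)$ term). Your remarks on wraparound windows and on recovering the boundary beads by an $O(\beta)$ scan in the block-aligned case add detail that the paper leaves implicit.
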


\begin{proof}
Lemma~\ref{lem:multiselect} partitions $P$ into blocks in $\Theta(M \log K)$ time. Lemma~\ref{lem:bracket} then either finds a block-aligned window with $f(i) = Q$ in $\Theta(K)$ time or brackets one between two consecutive block boundaries, in which case Lemma~\ref{lem:refine} locates it in $\Theta(\beta)$ time. Since $K \leq M$ and $\beta \leq M$, the total running time is $\Theta(M \log K)$. The resulting window contains exactly $Q = \frac{aR}{K}$ red and $s - Q = \frac{aB}{K}$ blue beads, and it is delimited by two endpoints, so it is set apart from the rest of the necklace with at most two cuts.
\end{proof}

\subsection{Main algorithm}
\label{sec:main_alg}
Now we can continue with our main algorithm. Decompose $k$ into a sum of powers of two: $k = 2^{e_1} + 2^{e_2} + \dots + 2^{e_\ell}$ with $e_1 > e_2 > \dots > e_\ell$. We treat each group of $2^{e_i}$ agents as a single \emph{superagent}, and first divide the points among the superagents. We project the input points onto the $x$-axis and perform $\ell-1$ iterations of the unsorted one-dimensional subroutine to separate the points into $\ell$ corresponding regions. We then recursively bisect each superagent's region. This results in $2(\ell-1) + \sum_{i=1}^\ell (2^{e_i}-1) = 2(\ell-1) + k - \ell = k + \ell - 2$ total cuts. The value $\ell$ is equivalent to the Hamming weight of the binary representation of $k$. The smallest integer with Hamming weight $\ell$ is $2^\ell - 1$, so $k \geq 2^\ell - 1$, hence $\ell \leq \log_2(k+1)$. Thus, our algorithm generates at most $k + \log_2(k+1) - 2$ cuts for all $k$.

If $\ell = 1$, we do not need to run the unsorted one-dimensional subroutine. If $\ell > 1$, assuming we do the one-dimensional cuts in descending order of allocation size, the subroutine contributes $\Theta\Bigl(\sum_{j=1}^{\ell-1} \Bigl(\frac{m}{k} \sum_{i=j}^\ell 2^{e_i}\Bigr) \log\Bigl(\sum_{i=j}^\ell 2^{e_i}\Bigr)\Bigr) = \Theta(m \log k)$ time (calculations in Section~\ref{sec:calculations}). Each ham sandwich cut takes linear time~\cite{Lo1994}, so together they contribute $\Theta(\frac{m}{k} \sum_{i=1}^\ell e_i 2^{e_i}) = \Theta(m \log k)$ time (calculations in Section~\ref{sec:calculations}). Thus, the overall time complexity is $\Theta(m \log k)$.
A sample execution is shown in Figure~\ref{fig:general_alg}.

\begin{theorem}
\label{thm:general_alg}
For every input, a solution using at most $k + \log_2(k+1) - 2$ cuts can be computed in $\Theta(m \log k)$ time.
\end{theorem}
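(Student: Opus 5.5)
The plan is to verify three things for the algorithm just described: every agent receives exactly $\frac{r}{k}$ red and $\frac{b}{k}$ blue points, the number of lines is at most $k + \log_2(k+1) - 2$, and the running time is $\Theta(m \log k)$. Write $k = 2^{e_1} + \dots + 2^{e_\ell}$ with $e_1 > \dots > e_\ell$, and let the superagents be processed in descending order of size.

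\emph{Fairness and cut count.} Project all points onto the $x$-axis, perturbing the direction if needed so projections are distinct, or breaking ties by index. In the $j$-th iteration, run Theorem~\ref{thm:iteration} with $K = \sum_{i \ge j} 2^{e_i}$ remaining agents and $a = 2^{e_j}$. Divisibility is preserved inductively: the remaining set always has $\frac{K r}{k}$ red and $\frac{K b}{k}$ blue points, so $R/K$ and $B/K$ stay integral.
\begin{itemize}
\item Each iteration sets apart a window of $x$-ranks containing exactly $2^{e_j}\frac{r}{k}$ red and $2^{e_j}\frac{b}{k}$ blue points.
\item The window is delimited by two cut points on the axis. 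Taking the vertical lines through them, the window is one slab, or, since the necklace is treated as circular, the complement of a slab. Either way it is a union of cells of the arrangement, which is all our formulation requires.
\item One point needs care. Earlier superagents' points are already removed, so the ranks refer only to remaining points. The slab may therefore also contain previously allocated points. This is harmless: allocations are unions of cells, and later vertical lines only refine earlier regions. I will make this explicit by defining each superagent's region as a set of cells of the final arrangement containing exactly its assigned points.
\end{itemize}

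\emph{The main obstacle: confining the ham sandwich recursion.} Within superagent $j$, recursively bisect its $2^{e_j}$-share using ham sandwich lines, $2^{e_j}-1$ of them. Each of these lines is a full line and passes through other superagents' regions. The key observation is again that extra cuts never hurt. The allocation is defined on points, each agent's point set is a union of cells of the full arrangement, and a line added for superagent $j$ only subdivides cells belonging to others without changing their point sets. So I will argue on point sets, noting that any union of cells stays a union of cells under refinement, and that the ham sandwich recursion applied to superagent $j$'s points alone bisects both colors exactly. Exactness holds because at each level the counts are divisible by the remaining power of two, with on-line points assigned by the side-choice rule.

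\emph{Totals.} The number of cuts is $2(\ell-1) + \sum_j (2^{e_j}-1) = k + \ell - 2$. Since $k \ge 2^\ell - 1$, we get $\ell \le \log_2(k+1)$.

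\emph{Time.} Iteration $j$ costs $\Theta(M_j \log K_j)$, where $M_j = \frac{m}{k}K_j$. Because the $e_i$ are distinct and decreasing, $K_j < 2^{e_j+1}$, so the iteration costs are dominated by a geometric sum bounded by $O(m \log k)$; the first iteration alone is $\Theta(m\log k)$ when $\ell>1$. The ham sandwich recursion for superagent $j$ costs $\Theta\bigl(\frac{m}{k}2^{e_j} e_j\bigr)$, using linear-time ham sandwich cuts~\cite{Lo1994} at each of $e_j$ levels. Summing gives $O(m\log k)$, with $\Theta(m\log k)$ attained via the $e_1 = \lfloor\log_2 k\rfloor$ term, whose share is at least $m/2$ points.
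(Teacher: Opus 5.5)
Your proposal is correct and follows the paper's own argument: the same power-of-two decomposition into superagents, $\ell-1$ iterations of the unsorted one-dimensional subroutine on the $x$-projection in descending order of size, recursive ham sandwich bisection, the count $2(\ell-1)+\sum_j(2^{e_j}-1)=k+\ell-2$ with $\ell\le\log_2(k+1)$, and the same geometric-sum time bound ($\sum_{i\ge j}2^{e_i}<2^{e_j+1}$) as the paper's appendix. Your additions are explicit justifications the paper leaves implicit and that are valid: a slab may also contain previously allocated points, and full ham sandwich lines cross other superagents' regions, but in both cases cells are only refined while point sets are unchanged.
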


\begin{figure}[bhpt]
\centering
\includegraphics[page=7]{tikz}
\caption{A visualization of the general algorithm with $k = 7 = 2^2 + 2^1 + 2^0$ and $m=14$. The two iterations of the unsorted one-dimensional subroutine produce the thick vertical cuts, separating superagents $C_1$, $C_2$, and $C_3$ (the two outer pieces). Each superagent is then recursively bisected by ham sandwich cuts. Cells are numbered by agent; Agent $7$ receives two cells.}
\label{fig:general_alg}
\end{figure}

\subsection{Convex position}
\label{sec:convex}
If the input is in strictly convex position, we can show that $k-1$ cuts suffice. A point set is in \emph{strictly convex position} if every point is a vertex of the convex hull and no three points are collinear.

\begin{prop}
\label{prop:convex}
For input in strictly convex position, a solution using at most $k-1$ cuts can be computed in $\Theta(m \log k)$ time.
\end{prop}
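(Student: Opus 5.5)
The plan is to exploit the fact that in strictly convex position the cyclic order of the points along the hull turns the instance into a circular necklace in which \emph{every contiguous arc can be separated from its complement by a single line}. Fix an interior reference point $o$, for instance the centroid of three input points, which lies strictly inside the hull because no three points are collinear. Every ray from $o$ meets the hull boundary exactly once, so the angular order of the points around $o$ coincides with their cyclic hull order. Given a window of consecutive hull vertices $p_i, \dots, p_j$, the line through an interior point of edge $p_{i-1}p_i$ and an interior point of edge $p_jp_{j+1}$ separates the window from the rest. This holds because the hull is a convex polygon and the chord splits it into two convex pieces whose vertex sets are exactly these two arcs. Hence one iteration of the unsorted subroutine (Theorem~\ref{thm:iteration}), which finds a fair window, costs only \emph{one} line instead of two. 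Moreover, both the window and its complement are again arcs of points in strictly convex position, so the recursion can continue on both sides.

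The algorithm is a balanced recursion. With $K$ agents remaining on an arc of $M$ points, run the subroutine with $a = \lfloor K/2 \rfloor$ on the circular necklace formed by the arc. Rank is measured by angle around $o$, so linear-time selection applies without any presorting. This splits off a window for $a$ agents with one line, and we recurse on the window (for $a$ agents) and on the complement (for $K-a$ agents). Each call uses the same block size $\beta = M/K = m/k$. By induction on $K$, the number of lines is $C(K) = 1 + C(a) + C(K-a)$ with $C(1) = 0$, which gives $C(k) = k-1$. Fairness follows from Lemma~\ref{lem:existence}, and the cells of the arrangement refine the arcs. Every leaf arc is therefore a union of cells and can be assigned to a single agent; extra cells created by lines from other branches only subdivide some agent's region, which is harmless.

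The main obstacle is the running time. Rerunning Lemma~\ref{lem:multiselect} at each node costs $\sum_{\text{levels}} m \log(k/2^j) = \Theta\bigl(m (\log k)^2\bigr)$. To avoid this, I would compute the rank-block partition only once, at the root, in $\Theta(m \log k)$ time. A child arc consists of consecutive beads, and since $\beta$ is unchanged, its rank-blocks are simply the old blocks shifted by the offset $d$ returned by Lemma~\ref{lem:refine}. Each new block is the suffix of one old block followed by the prefix of the next. So at each node, one linear-time selection at local rank $d$ inside each old block suffices to rebuild the blocks and their red counts $c_q$, in $\Theta(\beta)$ time per block and $\Theta(M)$ time per node. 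The boundary evaluation (Lemma~\ref{lem:bracket}) and refinement (Lemma~\ref{lem:refine}) then take $\Theta(K + \beta)$ time. Nodes at the same depth have disjoint arcs, so each level costs $\Theta(m)$, and the depth is $\lceil \log_2 k \rceil$, for a total of $\Theta(m \log k)$.

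I expect the careful part to be this bookkeeping: the circular wrap-around of windows inside a child arc, and the need to keep each child's blocks aligned after the shift. The matching lower bound on the time is inherited from the partition step.
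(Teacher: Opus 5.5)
Your proposal is correct, and it takes a genuinely different route from the paper.

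\textbf{The paper's route.} The paper reruns the general algorithm of Section~\ref{sec:main_alg} on angles around the viewing point. It writes $k$ as a sum of $\ell$ powers of two, and the $\ell-1$ superagent splits by the unsorted subroutine now cost one line each by convexity. Each superagent is still bisected recursively by ham sandwich cuts, giving $(\ell-1)+\sum_i(2^{e_i}-1)=k-1$ lines. The $\Theta(m\log k)$ time carries over unchanged, since ham sandwich cuts take linear time and the subroutine runs only $\ell-1$ times on inputs that at least halve.

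\textbf{Your route.} You drop both the binary decomposition and the ham sandwich cuts and use the subroutine at every node of a balanced recursion. The count $C(K)=1+C(\lfloor K/2\rfloor)+C(\lceil K/2\rceil)=K-1$ then follows directly. The whole algorithm needs only selection and angle comparisons, with no linear-time ham sandwich algorithm.

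\textbf{What it costs.} The price is the running time. As you note, re-partitioning at every node gives $\Theta\bigl(m(\log k)^2\bigr)$. Your fix is the observation that the block size $\beta=m/k$ is invariant, so the children's rank-blocks are the parent's blocks shifted by $d$. After one selection at local rank $d$ per old block, the new blocks are even contiguous array segments, up to one wrap-around. This is an extra idea the paper never needs.

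\textbf{One point to make precise.} Your claim that the cells refine the arcs holds for the following reason. Two points in different leaves are strictly separated by the line chosen at their lowest common ancestor in the recursion, and that line avoids every point of that subproblem. A line from another branch that passes through an input point is handled by the side-choice rule.
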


\begin{proof}
Select an arbitrary viewing point in the convex hull of the input (e.g., the centroid of three arbitrary input points, which can be found in constant time). Consider the angle of each input point with respect to the viewing point; if sorted, these would correspond exactly to the circular ordering of their positions on the boundary of the hull. Using the angles, the points can thus be treated as a circular necklace.

We then run the algorithm of Section~\ref{sec:main_alg} with two modifications: we use the input point's angles instead of their $x$-coordinates for the unsorted one-dimensional subroutine; and during each iteration of the subroutine, we produce the two desired cut points using a single generalized cut (by convexity, we can carve out the interval with a single line between the endpoints of the interval). In total, this procedure makes $\ell-1 + \sum_{i=1}^\ell (2^{e_i}-1) = k-1$ cuts and takes $\Theta(m \log k)$ time.
\end{proof}

In this section, we used the convexity of the input to save $\ell-1$ cuts, but this does not rule out the possibility of reducing the number of cuts for arbitrary input by some other technique.

\begin{conj}
For every input, $k-1$ cuts suffice.
\end{conj}

\section{Dynamic Algorithm}
\label{sec:dynamic}
Here we adapt our problem formulation to a dynamic setting: given a solution produced by the static algorithm of Section~\ref{sec:main_alg}, we maintain a fair set of cuts under \emph{relocation} of input points to and from arbitrary positions. The fairness constraint and the upper bound on the number of cuts are maintained deterministically for arbitrary updates, and the worst-case update time is that of one static run. We then analyze the expected update time under two bounded-displacement models: a stochastic model, in which the guarantee holds for every update of an arbitrarily long sequence, and an adversarial model, in which it holds for a single batch applied to a random instance.

\subsection{Overview}
A single relocation works using one of two mechanisms, depending on whether or not the old and new positions lie in the same superagent's region of the top-level $x$-decomposition. If they lie in the same superagent's region, a ham sandwich cut can become unfair only if the point counts on its sides change, so it suffices to rebuild only the subtree rooted at the deepest node whose cell contains both positions; a relocation within a single leaf cell costs nothing. If they lie in regions belonging to distinct superagents, we rebalance by rerunning the static algorithm on a connected superset of superagents, chosen to minimize total rebuild cost.

Both mechanisms are special cases of a general batch algorithm, developed below. In every case the produced set of cuts is one the static algorithm could have produced, so exact fairness is restored and the cut count stays at most $k+\ell-2 \leq k+\log_2(k+1)-2$; the worst-case update time is $O(m\log k)$, attained when the repair spans a constant fraction of the input.

\subsection{Model}
\label{sec:model}
Throughout, the static structure is as in Section~\ref{sec:main_alg}: superagents $C_1,\dots,C_\ell$ with $|C_i| = 2^{e_i} \frac{m}{k}$, separated by the $2(\ell-1)$ top-level cuts of the one-dimensional subroutine, each $C_i$ recursively divided by ham sandwich cuts, resulting in a bisection tree of depth $e_i$ where each node $v$ has a cell (the intersection of $C_i$'s region with the halfplanes on the root-to-$v$ path) containing $n_v = \frac{|C_i|}{2^{d_v}}$ points at depth $d_v$. Each point stores its superagent and leaf; each node stores its cut line and per-color counts.

We also introduce the \emph{neighborhood graph} $G$, an undirected, vertex-weighted graph analogous to the one used by Advani et al.~\cite{Advani2026a}. The vertices are the superagents $C_1, \dots, C_\ell$, and there is an edge between two vertices if the corresponding superagents' regions contain adjacent intervals (i.e., border the same top-level cut). Vertex $C_i$ has weight $\lvert C_i\rvert e_i$, the cost of rebuilding its bisection tree from scratch. Note that $G$ is connected and has at most $2(\ell-1)$ edges.

A batch consists of $p$ relocations $(x_1 \to x_1'), \dots, (x_p \to x_p')$ of distinct input points. A node $v$ in a bisection tree is \emph{valid} if each child of $v$ contains exactly its quota of each color. For a batch, the \emph{demand} of superagent $C_i$ is $\Delta(C_i) = (\Delta r_i, \Delta b_i) \in \Integers^2$, the net change in its red and blue counts. A set $S$ of superagents is \emph{repairable} if the repair that reruns the one-dimensional subroutine on each connected component in $G[S]$ (a \emph{cluster}) restores exact fairness. $G[S]$ denotes the subgraph of the neighborhood graph $G$ induced by the vertex set $S$.

\subsection{Batch relocation algorithm}
\label{sec:batch}
We now describe the steps of the batch relocation algorithm. A sample execution is shown in Figure~\ref{fig:dynamic}.
\begin{enumerate}
\item \textbf{Move and tally.} For each point to be relocated, the source superagent is stored with the point; the destination superagent is found by binary search over the $x$-coordinates of the top-level cuts, followed by an interval-to-superagent lookup, in $O(\log \ell)$ time. Decrement the per-color counts of every node on the source's old root-to-leaf path, and increment them on the destination's root-to-leaf path, in $O(\log k)$ time per relocated point, and record the demand vectors $\Delta(C_i) = (\Delta r_i, \Delta b_i)$.
\item \textbf{Cluster selection.} Let $N = \set{C_i \suchthat \Delta(C_i) \neq (0,0)}$.
\begin{itemize}
\item If $N = \emptyset$: take $S^* = \emptyset$ (no repair needed).
\item If $N = \set{C_a, C_b}$: take $S^*$ to be the vertex set of a minimum-weight path from $C_a$ to $C_b$ in $G$, computed in $O(\log k \log\log k)$ time by Dijkstra's algorithm~\cite{Fredman1984}, or $O(\log k)$ time via Thorup's algorithm~\cite{Thorup1999}, after the standard reduction of vertex weights to edge weights (Lemma~\ref{lem:twoterm}).
\item Otherwise ($|N| > 2$): compute $S^*$ as the cheaper of the two algorithms of Section~\ref{sec:selection} (Theorems~\ref{thm:cluster_1}~and~\ref{thm:cluster_2}).
\end{itemize}
\item \textbf{Local repair of balanced superagents.} For each superagent with $\Delta(C_i) = (0,0)$ that does not belong to $S^*$, scan the processed nodes top-down to find the maximal invalid nodes, and rebuild the subtree of each such node $u$ by recursive ham-sandwich bisection of the points now in its cell, at cost $O(n_u(e_i - d_u))$.
\item \textbf{Cluster rebuild.} For each cluster in $G[S^*]$ with superagent index set $I$: remove all cuts interior to the cluster's region (the top-level cuts between its superagents and all bisection cuts within them), pool its points, re-separate them into $|I|$ superagents via $|I|-1$ iterations of the unsorted one-dimensional subroutine, and rebuild each superagent's bisection tree. Finally, update the stored superagent and leaf pointers of the affected points and the per-color counts.
\end{enumerate}

\begin{figure}[bhpt]
\centering
\begin{subfigure}[b]{0.45\linewidth}
\centering
\begin{minipage}[t][7.7cm][s]{\linewidth}
\centering
\includegraphics[page=8]{tikz}

\vfill

\includegraphics[page=9]{tikz}
\end{minipage}
\caption{Within-superagent relocation.}
\label{fig:dynamic_a}
\end{subfigure}
\hfill
\begin{subfigure}[b]{0.45\linewidth}
\centering
\begin{minipage}[t][7.7cm][s]{\linewidth}
\centering
\includegraphics[page=10]{tikz}

\vfill

\includegraphics[page=11]{tikz}
\end{minipage}
\caption{Cross-superagent relocation.}
\label{fig:dynamic_b}
\end{subfigure}
\caption{Two single relocations on the instance of Figure~\ref{fig:general_alg}. Hollow points mark old positions, and dashed cuts are discarded. Yellow shading indicates cells/superagents affected by the update.}
\label{fig:dynamic}
\end{figure}

\subsection{Deterministic guarantees}
\begin{lemma}[Correctness of local repair]
\label{lem:welldef}
Let $u$ be a maximal invalid node in a superagent with zero net demand. Then its cell contains exactly its quota of each color, so the rebuild in Step~3 restores exact fairness within the subtree of $u$.
\end{lemma}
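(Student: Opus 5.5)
The plan is a top-down induction along the root-to-$u$ path. Fix the superagent $C_i$ with $\Delta(C_i) = (0,0)$ that contains $u$, and let $v_0, v_1, \dots, v_t = u$ be the path from the root $v_0$ of its bisection tree. Since $\Delta(C_i) = (0,0)$ and $C_i \notin S^*$, the top-level cuts around $C_i$ are unchanged. Its region therefore still holds exactly $2^{e_i}\frac{r}{k}$ red and $2^{e_i}\frac{b}{k}$ blue points, which is the quota of $v_0$. I will show by induction on $j$ that the cell of $v_j$ contains exactly its quota $\bigl(\frac{r}{k}2^{e_i - d_{v_j}}, \frac{b}{k}2^{e_i - d_{v_j}}\bigr)$ of each color. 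The base case $j=0$ is what was just said.

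For the inductive step, let $j < t$. Because $u$ is a \emph{maximal} invalid node, every proper ancestor of $u$, in particular $v_j$, is valid. By the definition of validity, each child of $v_j$ contains exactly its quota of each color, and $v_{j+1}$ is one of these children. This already gives the claim for $v_{j+1}$, and the induction hypothesis for $v_j$ is not even needed; I will note that it is consistent with it. One subtlety is that validity is evaluated on the updated counts, i.e., after Step~1 has applied all relocations of the batch. Those counts are exact: every node on an old or new root-to-leaf path was decremented or incremented, and every other node's count is unchanged because no point entered or left its cell. I will also check that ``the cut lines above $u$ are unchanged'' holds, since Step~3 only rebuilds below maximal invalid nodes. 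So the cell of $u$ is the same geometric region as before, and its count reflects the actual points now in it.

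The remaining step is that rebuilding the subtree of $u$ by recursive ham-sandwich bisection restores fairness. The cell of $u$ contains $n_u = 2^{e_i - d_u}\frac{m}{k}$ points with the exact per-color quota, and both color counts are divisible by $2^{e_i - d_u}$. The ham sandwich theorem, with the side choice for on-cut points allowed by the formulation, therefore splits it into two halves with the exact child quotas, and this recurses down $e_i - d_u$ levels. Every leaf then receives exactly $\frac{r}{k}$ red and $\frac{b}{k}$ blue points. The nodes outside $u$'s subtree are either valid ancestors or lie in other subtrees, and those are either valid or repaired independently.

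The main obstacle is the root case together with the count exactness. I have to argue carefully that zero net demand really means the superagent's region still has exactly its original per-color totals. The demand is a net change over the whole batch, so individual moves may cancel. I will handle this by noting that the final count is the original count plus $\Delta(C_i)$, with the original count equal to the quota because the static algorithm produced it.
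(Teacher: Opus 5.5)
Your proposal is correct and is essentially the paper's argument: the root case follows from zero net demand, every non-root case follows directly from validity of $u$'s parent (guaranteed by maximality), and recursive exact ham-sandwich bisection of a set with divisible per-color quotas finishes the repair, so your induction collapses to these two cases exactly as you observe. Your additional remarks are sound bookkeeping that the paper leaves implicit: the post-Step~1 counts are exact, and the cuts above $u$ are untouched because distinct maximal invalid nodes are incomparable.
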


\begin{proof}
If $u$ is the root of the bisection tree, its cell is the full superagent, whose totals equal its quotas because the superagent's net demand is zero. Otherwise, by maximality the parent of $u$ is valid, and validity of a node means precisely that the cell of each of its children --- in particular that of $u$ --- holds exact quota counts of both colors. Recursively bisecting a set with exact totals yields exact quotas all the way to the leaves.
\end{proof}

\begin{lemma}[Repairability]
\label{lem:feasible}
A set $S$ of superagents is repairable if and only if every cluster in $G[S]$ has zero aggregate demand in each color.
\end{lemma}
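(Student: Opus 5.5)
The plan is to prove the two directions separately, with the repair understood as follows. For each cluster, we pool the points currently in the cluster's region and rerun $|I|-1$ iterations of the unsorted one-dimensional subroutine to carve out $|I|$ superagents of the prescribed sizes. Each resulting superagent is then rebuilt by recursive ham sandwich bisection. Superagents outside $S$ are left untouched, apart from the local repair of Step~3.

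\textbf{Sufficiency.} Assume every cluster has zero aggregate demand in each color. Fix a cluster with index set $I$.
\begin{itemize}
\item Zero aggregate demand means the pooled points of the cluster contain exactly $\sum_{i\in I} 2^{e_i}\frac{r}{k}$ red and $\sum_{i\in I} 2^{e_i}\frac{b}{k}$ blue points. This is exactly the instance the subroutine expects, with $K=\sum_{i\in I}2^{e_i}$ agents, and the divisibility hypotheses hold.
\item Theorem~\ref{thm:iteration}, applied $|I|-1$ times in descending order of allocation size, then separates superagents with exact quotas $2^{e_i}\frac{r}{k}$ and $2^{e_i}\frac{b}{k}$. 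Each carved superagent leaves a remainder that again has exact totals, so the next iteration is well posed.
\item Because the cluster's region is an $x$-interval, or a union of intervals forming a contiguous block in the $x$-order of top-level regions, the new top-level cuts stay inside that region. The cut structure therefore remains one that the static algorithm could produce.
\item Recursive bisection then yields fairness within each rebuilt superagent.
\end{itemize}
Every superagent outside $S$ must have zero demand: a superagent outside $S$ with nonzero demand cannot be fixed without cross-superagent transfer. Lemma~\ref{lem:welldef} then gives exact fairness after the Step~3 local repair. Hence the whole arrangement is fair.

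There is a subtlety to check here: a superagent outside $S$ with nonzero demand makes $S$ not repairable, yet the biconditional only mentions clusters. I would handle this by noting that the subsequent cluster selection only considers sets with $N\subseteq S$. Alternatively, I would read ``restores exact fairness'' as already requiring $N \subseteq S$, or, more cleanly, observe that a singleton-free interpretation fails. I expect this bookkeeping, reconciling the lemma's literal statement with superagents outside $S$, to be the main obstacle. My first attempt would be to argue that repairability as defined implicitly includes all of $N$ in $S$: otherwise some untouched superagent has wrong counts, so fairness is not restored.

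\textbf{Necessity.} Suppose some cluster $D$ has nonzero aggregate demand in some color. The repair of $D$ only reassigns points among the superagents of $D$, since the cuts bounding $D$'s region toward superagents outside $D$ are kept. Consider also distinct clusters: they are non-adjacent in $G$, so no top-level cut shared between them is removed. The total number of points of each color in $D$'s superagents is therefore invariant under the repair. Meanwhile, fairness requires that total to equal $\sum_{i\in D}2^{e_i}\frac{r}{k}$ (respectively the blue analogue), which is exactly the pre-batch total. A nonzero aggregate demand contradicts this, so $S$ is not repairable.
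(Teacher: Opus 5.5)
Your two directions follow the paper's proof. Sufficiency holds because a balanced cluster contains exactly $\sum_{i\in I}2^{e_i}\frac{r}{k}$ red and $\sum_{i\in I}2^{e_i}\frac{b}{k}$ blue points, so $|I|-1$ iterations of the subroutine followed by ham sandwich bisection give exact quotas. Necessity holds because the repair never moves points across a cluster's boundary, so a cluster's color totals are fixed and a nonzero aggregate demand cannot be absorbed.

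The bookkeeping issue you flag is genuine, and the paper's proof also passes over it without comment. Read literally, the biconditional is false whenever some nonzero-demand superagent lies outside $S$. For example, take $S=\emptyset$ with $N\neq\emptyset$: the cluster condition holds vacuously, yet nothing restores the totals of the superagents in $N$.

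So you cannot derive, as your sufficiency paragraph does, that every superagent outside $S$ ``must'' have zero demand; that does not follow from the cluster condition. Your ``first attempt'' actually shows that $N\subseteq S$ is \emph{necessary} for repairability. That confirms the counterexample rather than rescuing the statement.

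The right fix is to add $N\subseteq S$ as a hypothesis, or equivalently to put it on the right-hand side. This is the only setting in which the lemma is used: Theorems~\ref{thm:cluster_1} and~\ref{thm:cluster_2} look only for supersets of $N$, and the ``hence'' step in the proof of Lemma~\ref{lem:twoterm} tacitly relies on it. With that hypothesis, superagents outside $S$ have zero demand and Lemma~\ref{lem:welldef} handles them, exactly as you wrote.

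Your remarks about the cluster's region being a contiguous $x$-block, and about the result being producible by the static algorithm, concern the cut bound rather than fairness. They are not needed for this lemma and can be dropped.
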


\begin{proof}
($\impliedby$) A cluster with superagents $\set{C_i}_{i\in I}$ holds exactly $\sum_{i\in I} 2^{e_i} \frac{r}{k}$ red and $\sum_{i\in I} 2^{e_i} \frac{b}{k}$ blue points. These totals are divisible as required, so $|I|-1$ iterations of the one-dimensional subroutine separate the cluster into superagents with exact quotas, and the ham sandwich cuts finish the repair.

($\implies$) The repair modifies only $S$; superagents outside $S$ retain their exact point sets. Hence the point set available to each cluster is fixed, and if its aggregate demand is nonzero in some color, no allocation within the cluster can meet the quotas.
\end{proof}

\begin{lemma}[Two-terminal repair]
\label{lem:twoterm}
Suppose exactly two superagents $C_a$ and $C_b$ have nonzero demand. Then a set $S$ is repairable if and only if $C_a$ and $C_b$ lie in the same cluster in $G[S]$, and the minimum-weight such set $S$ is the vertex set of a shortest path between $C_a$ and $C_b$ in $G$.
\end{lemma}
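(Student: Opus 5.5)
The plan is to reduce the statement to Lemma~\ref{lem:feasible} and then turn the minimization into a shortest-path problem.

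\textbf{Characterization.} Total demand over all superagents is zero, because relocations preserve the total number of points of each color. Here only $C_a$ and $C_b$ have nonzero demand, so $\Delta(C_a) = -\Delta(C_b) \neq (0,0)$. Now take a cluster of $G[S]$.
\begin{itemize}
\item If it contains both $C_a$ and $C_b$, its aggregate demand is $\Delta(C_a)+\Delta(C_b)=(0,0)$.
\item If it contains neither, its aggregate demand is a sum of zeros.
\item If it contains exactly one of them, its aggregate demand is $\pm\Delta(C_a)\neq(0,0)$.
\end{itemize}
By Lemma~\ref{lem:feasible}, $S$ is repairable if and only if no cluster contains exactly one of $C_a, C_b$. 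We take $S$ to contain the two terminals; this is forced, since otherwise the untouched terminal keeps a nonzero demand. The condition then becomes: $C_a$ and $C_b$ lie in the same cluster of $G[S]$.

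\textbf{Minimality.} Vertex weights $|C_i|e_i$ are nonnegative. Let $S$ be repairable. Then $G[S]$ contains a path from $C_a$ to $C_b$, whose vertex set $P\subseteq S$ is itself repairable by the characterization, and whose weight satisfies $w(P)\le w(S)$. So some minimum-weight repairable set is the vertex set of an $a$--$b$ path in $G$. Conversely, every path's vertex set is repairable. Hence the minimum-weight repairable set is the vertex set of a minimum-weight path, and minimum vertex weight over paths is exactly the shortest-path objective.

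One subtlety is weight-zero vertices, where $e_i=0$. Among several optima the shortest path is simply one of them, so this causes no problem. I would state the claim as ``a minimum-weight set may be taken to be'', or note that superset paths of equal weight still work.

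\textbf{Computing the path.} The standard reduction moves the vertex weights onto edges. Direct each edge both ways and give the edge $u\to v$ weight $w(v)$; a path then costs $w(P)-w(C_a)$, which differs from $w(P)$ only by a constant. Alternatively, split each vertex into an in-copy and an out-copy joined by an edge of weight $w(v)$.

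The main obstacle is the edge case in the characterization: justifying that both terminals must belong to $S$, and handling a terminal outside $S$. I would argue directly that a terminal outside $S$ keeps its changed point set, so its count is off and fairness fails. Everything else follows from Lemma~\ref{lem:feasible}.
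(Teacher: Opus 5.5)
Your proof is correct and follows the paper's argument. It applies Lemma~\ref{lem:feasible} to see that a cluster can be unbalanced only if it contains exactly one terminal, and then uses nonnegative weights to shrink any repairable set to the vertex set of a path inside it. You also make explicit three points the paper leaves implicit: that $\Delta(C_a) = -\Delta(C_b)$ because relocations preserve color totals; that $S$ must contain both terminals, since Lemma~\ref{lem:feasible} tacitly assumes $S \supseteq N$; and that zero-weight superagents ($e_i = 0$) mean ``the'' minimum-weight set should read ``a'' minimum-weight set.
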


\begin{proof}
By Lemma~\ref{lem:feasible}, $S$ is repairable if and only if every cluster in $G[S]$ has zero aggregate demand. Every vertex outside $\set{C_a, C_b}$ has zero demand on its own, so a cluster is automatically balanced unless it contains exactly one of $C_a$ and $C_b$; hence $S$ is repairable if and only if $C_a$ and $C_b$ are in the same cluster. Among sets $S$ with $C_a, C_b$ in the same cluster, the minimum-weight choice is achieved by a set inducing a single path between them: any connected vertex set containing both contains the vertex set of some path between them, and adding vertices beyond a chosen path never decreases weight, since all weights are nonnegative. Minimizing over paths gives a vertex-weighted shortest path, computable by the standard reduction of vertex weights to edge weights followed by Dijkstra's algorithm in $O(\ell \log \ell) = O(\log k \log\log k)$ time, or by Thorup's algorithm in $O(\ell) = O(\log k)$ time.
\end{proof}

\begin{prop}[Correctness]
\label{prop:batch_correct}
The batch relocation algorithm restores exact fairness, maintains at most $k + \ell - 2 \leq k + \log_2(k+1) - 2$ cuts, and runs in $\Theta(m\log k)$ time.
\end{prop}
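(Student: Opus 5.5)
The proof splits into three parts: fairness, cut count, and running time. Each rests on the observation that the algorithm's output is always a set of cuts the static algorithm of Section~\ref{sec:main_alg} could have produced on the current point set.

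\textbf{Fairness.} I would first check that $S^*$ is repairable in every branch of Step~2. If $N=\emptyset$, then $S^*=\emptyset$ is trivially repairable. If $|N|=2$, repairability follows from Lemma~\ref{lem:twoterm}. If $|N|>2$, it follows from the guarantees of Theorems~\ref{thm:cluster_1} and~\ref{thm:cluster_2}, which I will assume return repairable sets. Lemma~\ref{lem:feasible} then shows that Step~4 restores exact quotas for every superagent in $S^*$.

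For a superagent outside $S^*$, I would argue that its demand must be zero. If $C_i$ had nonzero demand and lay outside $S^*$, then the singleton $\set{C_i}$ would be a cluster-free leftover whose point set is unchanged by any repair. It would therefore hold the wrong counts, which contradicts the repairability accounting: the selection algorithms include every vertex of $N$. Lemma~\ref{lem:welldef} then shows that Step~3 repairs each maximal invalid node, and valid nodes need no change. Together these cover every node of every bisection tree. One point needs care: the counts updated in Step~1 make validity checkable, and every invalid node has a maximal invalid ancestor (possibly the root), so the top-down scan catches them all.

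\textbf{Cut count.} The cut structure is never altered outside this template. Step~3 replaces a subtree of depth $e_i-d_u$ by another recursive bisection with the same number of cuts. Step~4 replaces the $2(|I|-1)$ top-level cuts and the bisection cuts interior to a cluster by $|I|-1$ subroutine iterations, each costing at most two cuts by Theorem~\ref{thm:iteration}, plus $2^{e_i}-1$ ham sandwich cuts per superagent. I also need the rebuilt cluster to occupy a contiguous interval of the $x$-decomposition, so that its subroutine cuts fit in place of the old ones. This holds because clusters are connected in $G$, and $G$ is a path of adjacent intervals. The total is therefore still $2(\ell-1)+\sum_i(2^{e_i}-1)=k+\ell-2\le k+\log_2(k+1)-2$, by the Hamming-weight bound of Section~\ref{sec:main_alg}.

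\textbf{Running time.} This is the step I expect to be the main obstacle, because it requires a $\Theta(m\log k)$ upper bound on every step at once. I would bound each step in turn.
\begin{itemize}
\item Step~1 costs $O(p\log k)$. Since the relocated points are distinct, $p\le m$, so this is $O(m\log k)$.
\item Step~2 costs at most the bounds of the selection theorems. I would argue these are $O(m\log k)$, or at least dominated by it, since they operate on a graph with $\ell=O(\log k)$ vertices.
\item Step~3 costs $\sum_u O(n_u(e_i-d_u))$ over disjoint maximal invalid nodes. Within each superagent this is at most $O(|C_i|e_i)$.
\item Step~4 costs, by the static analysis, at most one static run on the pooled points, which is $O(m\log k)$.
\end{itemize}
Summing gives $O(m\log k)$. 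Matching the stated $\Theta$ as a worst case requires exhibiting a batch that forces a repair of a constant fraction of the input, for instance a relocation between the two extreme superagents. That repair forces the whole path of superagents to be rebuilt, costing $\Omega(m\log k)$.
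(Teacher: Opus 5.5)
Your fairness argument and your step-by-step time bound match the paper's proof. The cut-count step, however, rests on a false claim.

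\textbf{The gap.} You need each cluster to be a contiguous block of the $x$-decomposition with exactly $2(|I|-1)$ interior top-level cuts. You justify this by ``$G$ is a path of adjacent intervals,'' but the unsorted subroutine works on the \emph{circular} necklace of the beads still unallocated. A later window can therefore wrap around earlier ones, and a superagent's region can consist of several $x$-intervals. For instance, with $\ell=4$ the iterations may carve $C_1=[a,b]$, then $C_2=(b,c]$ next to it, then $C_3=[d,a)\cup(c,e]$ around both, leaving $C_4$ the rest. Then $G$ contains the triangle $C_1C_2C_3$.

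A relocation from $C_1$ to $C_3$ gives $S^*=\{C_1,C_3\}$, since they share the cut at $a$. This is a connected cluster whose region $[d,b]\cup(c,e]$ is interrupted by $C_2$, and whose only interior cut is $a$. Rerunning the subroutine on the pooled points may return a window $[x,b]\cup(c,y]$ straddling $C_2$, which needs two new cuts. So the rebuild removes one top-level cut and adds two. Your ``same template, same number of cuts'' accounting does not describe what Step~4 does, and it gives no reason why the count cannot drift above $2(\ell-1)$ over a sequence of updates.

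\textbf{How to repair it.} The bound itself survives, but it needs a global invariant rather than a local swap. Read the $x$-order cyclically. The maximal runs of the superagents then form a non-crossing partition: each window is contiguous in the cyclic order of the beads it is carved from, and those beads do not cross earlier blocks. A cluster that is connected in $G$ and avoids a superagent $Y$ lies in a single gap between runs of $Y$, so rebuilding it preserves this property. A non-crossing partition into $\ell$ blocks with no two cyclically consecutive runs in the same block has at most $2(\ell-1)$ runs. Hence there are at most $2(\ell-1)$ distinct top-level cut lines; a cut point falling between two runs of a cluster can be placed on an existing boundary line. The paper instead asserts that the rebuilt configuration is one the static algorithm could have produced. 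Making that precise requires essentially this invariant, with the iterations allowed in a different order.

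\textbf{Two smaller points.} First, Theorem~\ref{thm:cluster_2} alone does not give $O(m\log k)$ for Step~2. Its $3^t\ell$ term is about $(k+1)^{\log_2 3}$ when $t=\ell=\log_2(k+1)$ and $m=\Theta(k)$. You need the fact that Step~2 runs the cheaper method, which costs $O(2^\ell\ell)=O(k\log k)\subseteq O(m\log k)$. Second, your worst-case example again assumes the path picture. A simpler witness is a relocation across the root cut of $C_1$, which holds more than $\frac{m}{2}$ points in a tree of depth $e_1=\lfloor\log_2 k\rfloor$; that alone forces an $\Omega(m\log k)$ rebuild.
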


\begin{proof}
Fairness: for superagents with net zero demand, Step~3 restores all quotas by Lemma~\ref{lem:welldef}; clusters are repaired in Step~4 by Lemma~\ref{lem:feasible} (or, when $|N|\leq 2$, by Lemma~\ref{lem:twoterm}).

Cut count: the rebuilt configuration is one that could have been produced by an execution of the static algorithm, mirroring the invariant argument of Advani et al.~\cite{Advani2026a}; the static bound then applies.

Time: updating per-color counts (Step~1) costs $O(\log k)$ per relocation; cluster selection (Step~2) costs $O(2^\ell \ell) = O(k\log k)$ in the worst case (Theorem~\ref{thm:cluster_1}), or less (Section~\ref{sec:selection}); and repair (Step~3) plus rebuilding (Step~4) is at most one static run, $\Theta(m\log k)$.
\end{proof}

\subsection{Repair-region selection}
\label{sec:selection}
Advani et al.~\cite{Advani2026a} formalize the choice of which agents to rebuild as the \textsf{MinNodeMaxFlow} problem and prove it NP-complete, which motivates their spanning-tree approximation. In our setting the neighborhood graph has only $\ell \leq \log_2(k+1)$ vertices, so the optimal repair region can be computed exactly in polynomial time.

\begin{theorem}
\label{thm:cluster_1}
A repairable superset $S^*$ of the nonzero-demand superagents minimizing the total rebuild weight $\sum_{C_i \in S} \lvert C_i \rvert e_i$ can be computed in $O\bigl(2^\ell \ell\bigr) = O(k \log k)$ time.
\end{theorem}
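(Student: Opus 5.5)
The plan is exhaustive enumeration over subsets of the $\ell$ vertices of the neighborhood graph $G$, using Lemma~\ref{lem:feasible} as the feasibility test. By that lemma, $S$ is repairable exactly when every connected component of $G[S]$ has zero aggregate demand in each color. We want a minimum-weight repairable $S$ containing $N$. We also allow $S \supseteq N$ to be empty or to have singleton clusters. A singleton cluster with zero demand is harmless and is never needed for optimality, since removing it preserves repairability and does not increase weight.

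The algorithm enumerates all $2^\ell$ subsets $S \subseteq \set{C_1,\dots,C_\ell}$ and discards those not containing $N$. For each remaining $S$ it does three things:
\begin{itemize}
\item computes the components of $G[S]$;
\item sums the demand vectors $\Delta(C_i)$ over each component and checks that every sum is $(0,0)$;
\item computes the weight $\sum_{C_i\in S}\lvert C_i\rvert e_i$.
\end{itemize}
It returns the cheapest subset that passes. Correctness is immediate from Lemma~\ref{lem:feasible}. A feasible subset always exists, because $S = \set{C_1,\dots,C_\ell}$ is a single cluster, $G$ being connected. The total demand over all superagents is $(0,0)$ because relocations preserve the global counts, so this full set passes the test.

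For the running time, $G$ has $\ell$ vertices and at most $2(\ell-1)$ edges, so for a fixed $S$ a BFS or union--find pass over $G[S]$, the per-component demand sums, and the weight all take $O(\ell)$ time. Over all subsets this gives $O(2^\ell\ell)$. Since $\ell\le\log_2(k+1)$, we have $2^\ell\le k+1$ and hence $O(2^\ell\ell)=O(k\log k)$.

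The only delicate point is the $O(\ell)$ cost per subset. It relies on the edge count of $G$ being linear in $\ell$, and on representing subsets as bitmasks so that membership tests are $O(1)$. Alternatively, a Gray-code order could amortize the cost, but this is not needed. Precomputing the weights $\lvert C_i\rvert e_i$ and the demand vectors once, in $O(\ell)$ time, keeps each test within the bound.
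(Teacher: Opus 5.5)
Your proposal is correct and follows the paper's proof: enumerate the at most $2^\ell \le k+1$ candidate sets, test each in $O(\ell)$ time using the cluster aggregate-demand criterion of Lemma~\ref{lem:feasible} and the $O(\ell)$ edge bound on $G$, and keep the cheapest one that passes. Two of your additions are left implicit in the paper and are worth keeping: you explicitly restrict to $S \supseteq N$, and you note that the full vertex set is always feasible because $G$ is connected and the total demand is zero.
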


\begin{proof}
There are at most $2^\ell \leq k+1$ candidate sets $S$. For each, compute the clusters in $G[S]$ and their aggregate demands; since $G$ has at most $2(\ell - 1)$ edges, each check costs $O(\ell)$ given the precomputed demand vectors. Retain the minimum-weight set with zero aggregate demand (by Lemma~\ref{lem:feasible}, it is repairable).
\end{proof}

Theorem~\ref{thm:cluster_1} is tight in the worst case, but a batch of $p \ll k$ relocations touches very few superagents: each single relocation changes the demand of at most two superagents, so $|N| \leq 2p$ for $N = \set{C_i \suchthat \Delta(C_i) \neq (0,0)}$. This can be exploited directly.

\begin{theorem}
\label{thm:cluster_2}
Let $t = |N|$. A repairable superset $S^*$ of the nonzero-demand superagents minimizing the total rebuild weight $\sum_{C_i \in S} \lvert C_i \rvert e_i$ can be computed in $O\bigl(3^t \ell + 2^t \ell \log \ell\bigr)$ time.
\end{theorem}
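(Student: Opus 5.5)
The approach is a Steiner-forest dynamic program over subsets of the terminal set $N$, in the style of Dreyfus--Wagner. By Lemma~\ref{lem:feasible}, a set $S \supseteq N$ is repairable exactly when every cluster of $G[S]$ has zero aggregate demand. Demands outside $N$ vanish, so this is a condition on how the terminals are grouped. Call a subset $X \subseteq N$ \emph{balanced} if $\sum_{C_i \in X} \Delta(C_i) = (0,0)$. Then an optimal $S^*$ is a vertex-weighted Steiner forest: a disjoint union of connected vertex sets, each of which contains a balanced group of terminals and no other terminal. Clusters containing no terminal are superfluous and can be discarded, since all weights are nonnegative. I would first justify this reformulation. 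Shrinking each cluster to a minimum-weight connected set spanning its terminals never increases weight. If two shrunken clusters become adjacent or overlap, their union is still balanced, so merging them keeps the set repairable and does not increase the weight.

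The first step computes the node-weighted Steiner tree cost $T(X, v)$, for every $X \subseteq N$ and every vertex $v$. This is the minimum weight of a connected vertex set containing $X \cup \{v\}$. I would use the Dreyfus--Wagner recurrence adapted to vertex weights:
\[
T(X, v) = \min\Bigl\{ \min_{\emptyset \neq Y \subsetneq X} \bigl( T(Y, v) + T(X \setminus Y, v) - w(v) \bigr),\ \min_{u} \bigl( T(X, u) + \mathrm{dist}(u, v) \bigr) \Bigr\},
\]
where $\mathrm{dist}$ is the vertex-weighted path distance excluding $u$. The base case is $T(\{c\}, v) = $ the shortest vertex-weighted path cost from $c$ to $v$.

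The running time comes in two parts. The subset-splitting term costs $O(3^t \ell)$ in total, because pairs $(X, Y)$ with $Y \subseteq X \subseteq N$ number $3^t$, and each pair is evaluated at every one of the $\ell$ vertices. The second term is not done with a precomputed all-pairs table, which would cost too much. Instead, for each $X$ I run one Dijkstra on $G$, seeded with the merged values. Since $G$ has $O(\ell)$ vertices and edges, this costs $O(\ell \log \ell)$ per subset, or $O(2^t \ell \log \ell)$ overall. Setting $T(X) = \min_v T(X, v)$ then gives the cheapest single cluster covering $X$, for all $X$.

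The second step assembles the forest. Set $F(\emptyset) = 0$, and for nonempty $X \subseteq N$ let
\[
F(X) = \min \bigl\{ T(Y) + F(X \setminus Y) : Y \subseteq X \text{ balanced},\ Y \ni \min X \bigr\}.
\]
Forcing the smallest element into $Y$ avoids redundant orderings. This is again a submask enumeration, costing $O(3^t)$. The answer is $F(N)$, and $S^*$ is recovered by storing argmins. Balancedness of every $Y$ is precomputed in $O(2^t)$ from the demand vectors.

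The main obstacle is correctness of the vertex-weighted Dreyfus--Wagner recurrence. Optimal Steiner trees must decompose at a branching vertex without double-counting that vertex's weight. The recurrence handles this by subtracting $w(v)$, and the Dijkstra relaxation with vertex weights must charge each vertex exactly once. The other delicate point is the union argument in the first paragraph. Separate optimal trees in $F$ may share vertices, so the realized $S$ could merge clusters. I would argue that the weight of the union is at most $F(N)$, and that the merged clusters remain unions of balanced groups and hence are balanced. So the union is repairable with cost at most the optimum, and the DP optimum is attained.
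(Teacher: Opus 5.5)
Your proposal is correct and follows essentially the same route as the paper. Both reduce the optimum to a minimum-cost partition of $N$ into zero-demand groups, each spanned by a minimum vertex-weighted Steiner tree (with the same two-direction union argument), compute all subset Steiner costs by Dreyfus--Wagner in $O(3^t\ell + 2^t\ell\log\ell)$ time, and assemble the partition with an $O(3^t)$ submask DP. The differences are cosmetic. The paper handles vertex weights by orienting edges and charging the head's weight, so it works with arborescences, whereas you subtract $w(v)$ directly at merge vertices and anchor $\min X$ in the submask recursion.
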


\begin{proof}
By Lemma~\ref{lem:feasible}, a set $S$ is repairable if and only if every cluster in $G[S]$ has zero aggregate demand. Fix an optimal $S^*$. The clusters in $G[S^*]$ induce a partition $P$ of $N$ into groups, one per cluster, each with zero aggregate demand. Each cluster is a connected vertex set containing its group, so its weight is at least the minimum-weight (vertex-weighted) Steiner tree connecting that group in $G$. Hence
\[
\operatorname{weight}(S^*) \geq \sum_{\mathcal{T} \in P} \operatorname{steiner}(\mathcal{T}) \geq \min_{P'} \sum_{\mathcal{T} \in P'} \operatorname{steiner}(\mathcal{T}),
\]
where the minimum is over all partitions $P'$ of $N$ into zero-demand groups.

Conversely, fix a minimizing partition $P^*$ and let $S$ be the union, over $\mathcal{T} \in P^*$, of a minimum Steiner tree connecting each $\mathcal{T}$. Any cluster in $G[S]$ is a union of one or more complete Steiner trees from this collection (trees never split across clusters, since there are no edges between clusters), so its aggregate demand is a sum of zero-demand groups and hence zero; $S$ is therefore repairable. Its weight is at most $\sum_{\mathcal{T}\in P^*}\operatorname{steiner}(\mathcal{T})$, since shared vertices between trees are counted once in $S$ but possibly multiple times in the sum. Combining both directions, the true optimum equals $\min_{P} \sum_{\mathcal{T} \in P} \operatorname{steiner}(\mathcal{T})$.

Now we show how to solve the Steiner tree problem on a cluster. We reduce to a directed, edge-weighted instance: replace each edge $(u,v)$ by directed edges $(u,v)$ and $(v,u)$ with costs $w(v)$ and $w(u)$ respectively. Since every non-root vertex of an arborescence has exactly one incoming edge, the weight of an arborescence rooted at $r$ equals $\sum_{x \neq r} w(x)$ over its vertices $x$, i.e.\ the original objective minus $w(r)$. For $v \in V(G)$ and $X \subseteq N$, let $A(v, X)$ denote the minimum weight of an arborescence rooted at $v$ containing $X$, with $A(v, \emptyset) = 0$; then $\operatorname{steiner}(\mathcal{T}') = w(r) + A(r, \mathcal{T}' \setminus \set{r})$ for any $r \in \mathcal{T}'$. Using the Dreyfus--Wagner dynamic program~\cite{Dreyfus1971,Erickson1987}, we compute $A(v, X)$ for every $X \subseteq N$, yielding $\operatorname{steiner}(\mathcal{T}')$ for every $\mathcal{T}' \subseteq N$, in $O\bigl(3^t \ell + 2^t \ell \log \ell\bigr)$ time on our graph with $\ell$ vertices and $O(\ell)$ edges.

It remains to assemble an optimal balanced partition from these values. Define $f(\mathcal{T}) = \min_{\emptyset \neq \mathcal{T}' \subseteq \mathcal{T},\ \Delta(\mathcal{T}') = (0,0)} \operatorname{steiner}(\mathcal{T}') + f(\mathcal{T} \setminus \mathcal{T}')$ with $f(\emptyset) = 0$; then $f(N)$ equals the partition optimum established above. Evaluating $f$ by submask enumeration takes $O(3^t)$ time, with the aggregate demands $\Delta(\mathcal{T}')$ of all subsets precomputed in $O(2^t)$ time, and the vertex sets of the trees in an optimal partition, recovered by standard traceback, give $S^*$.
\end{proof}

Combining Theorems~\ref{thm:cluster_1} and~\ref{thm:cluster_2}, cluster selection costs $O\bigl(\min\set[\big]{2^\ell \ell, 3^t\ell + 2^t \ell \log\ell}\bigr)$ time, where $t = |N| \leq \min(\ell, 2p)$; the batch relocation algorithm can compare the two costs in advance --- both depend only on $N$ and $G$, not on the rebuild itself --- and run whichever is cheaper. Comparing the leading terms, the crossover between the two algorithms occurs at $t = \ell \log_3 2 \approx 0.63\ell$: the enumeration of Theorem~\ref{thm:cluster_1} is preferable for larger $t$, and the parameterized algorithm of Theorem~\ref{thm:cluster_2} for smaller $t$. In particular, since $t \leq 2p$, any batch with $p < \frac{1}{2}\ell\log_3 2$ falls in the latter regime.

\subsection{Distance-sensitive expected update time}
A relocation invalidates only the subtree of the deepest node whose cell contains both endpoints. The closer the start and end positions are, the smaller that subtree is likely to be. Using that idea, we now analyze the expected running time of the batch relocation algorithm.

We consider two update models. In both models, the $m$ initial point positions are drawn i.i.d.\ uniformly from $[0,1]^2$, and the coloring is arbitrary, even adversarial; a batch update chooses $p$ distinct points uniformly at random. Set $D = \sum_i \delta_i$ and $\varepsilon_m = c_0 \sqrt{(\log m)/m}$, where $c_0$ is a sufficiently large absolute constant fixed in the proof of Lemma~\ref{lem:strip}.
\begin{description}
\item[Adversarial displacement] The $i$th chosen point is moved by an adversarial displacement vector of length at most $\delta_i$, chosen after the point is revealed, with the resulting segment clipped to $[0,1]^2$. The guarantees below hold for the first batch applied to a fresh instance; after an adversarial batch the positions are no longer i.i.d., and we make no claim for subsequent batches.
\item[Stochastic displacement] The $i$th chosen point is moved by a random vector $V_i$ drawn independently from a fixed absolutely continuous distribution $\mu_i$ supported on the disk of radius $\delta_i \le 1$ and invariant under reflection in either coordinate axis (e.g., the uniform distribution); the moved point is \emph{reflected} back into the square by the folding map $\pi \colon [-1,2]^2 \to [0,1]^2$ given by $\pi(z) = (\varphi(z_1), \varphi(z_2))$, where $\varphi(t) = \min(\lvert t \rvert,\, 2 - t)$ reflects across $0$ and $1$. The guarantees below hold for every batch of an arbitrarily long sequence.
\end{description}

\begin{lemma}[Stationarity]
\label{lem:stationary}
In the stochastic model, the point positions are i.i.d.\ uniform on $[0,1]^2$ before every update. Moreover, the folded path of a moved point has length at most $\delta_i$, so a point that crosses a cut $\lambda$ starts within distance $\delta_i$ of $\lambda \cap [0,1]^2$.
\end{lemma}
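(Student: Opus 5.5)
The plan is to prove stationarity by induction over the update sequence, reducing everything to a single update of a single point. Initially the positions are i.i.d.\ uniform by assumption.

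Suppose the positions are i.i.d.\ uniform before some update. The chosen indices are selected uniformly at random, independently of the positions, so the unchosen points stay i.i.d.\ uniform. They are also independent of the chosen points. Since the displacements $V_i$ are drawn independently of everything else and the chosen points are distinct, it suffices to show the following. If $X$ is uniform on $[0,1]^2$ and $V$ is independent with law $\mu$, then $\pi(X+V)$ is uniform.

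Both the fold and the law of $V$ factor coordinatewise only up to the joint law of $(V_1,V_2)$. I will therefore argue directly in two dimensions. Consider the periodic extension: the folding map $\varphi$ is the restriction of the $2$-periodic even "tent" function $\tilde\varphi$ on $\mathbb{R}$. Uniform $X$ on $[0,1]^2$ corresponds to a random point $Y$ on the torus $(\mathbb{R}/2\mathbb{Z})^2$, defined as follows. Take $X$ and apply independent random reflections $t\mapsto -t$ in each coordinate with probability $1/2$. Then $Y$ is uniform on the torus, and $\tilde\pi(Y)=X$.

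The key step uses the symmetry of $\mu$ under reflection in either axis. It lets me couple as follows: with $\sigma$ the random reflection applied to $X$, the vector $\sigma V$ has the same law as $V$ and is independent of $Y$. Since $\tilde\pi\circ\sigma=\tilde\pi$ and $\tilde\pi$ is invariant under translations by $2\mathbb{Z}^2$, we get $\pi(X+V)=\tilde\pi(\sigma X+\sigma V)=\tilde\pi(Y+\sigma V)$.

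Here one checks that $\tilde\varphi(\pm t + s) = \tilde\varphi(t \pm s)$ with matching signs, which uses evenness of $\tilde\varphi$. One also checks that $\tilde\pi$ agrees with $\pi$ on $[-1,2]^2$, which holds because $\delta_i\le 1$.

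A uniform torus point plus an independent translation is still uniform on the torus. Its image under $\tilde\pi$ is uniform on $[0,1]^2$, since $\tilde\pi$ is measure-preserving from the torus (each point has four preimages of equal Jacobian). This closes the induction. The main obstacle is handling the sign bookkeeping in this reflection coupling cleanly; I would write it one coordinate at a time, noting that the symmetry assumption on $\mu$ covers the joint sign flips.

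For the second claim, the folded path is $\pi$ applied to the straight segment from $x$ to $x+V_i$. Each coordinate map $\varphi$ is $1$-Lipschitz and piecewise an isometry, so $\pi$ is $1$-Lipschitz in the Euclidean norm: it acts coordinatewise by maps with $|\varphi'|=1$ almost everywhere. Hence the image path has length at most $|V_i|\le\delta_i$.

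If the point lies on different sides of a cut $\lambda$ at its start and end, the continuous folded path inside $[0,1]^2$ meets $\lambda\cap[0,1]^2$. The starting point is therefore within path length, and hence Euclidean distance, $\delta_i$ of $\lambda\cap[0,1]^2$.
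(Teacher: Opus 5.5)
Your proof is correct, but it reaches stationarity by a different route than the paper.

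\textbf{Stationarity.} The paper computes the one-step transition density $\kappa(x,y)=\sum_{\tau\in\mathcal{T}^2} f(\tau(y)-x)$ by enumerating the nine preimages of $y$ under $\pi$. It uses the reflection symmetry of the density $f$ to show $\kappa(x,y)=\kappa(y,x)$. Column integrals therefore equal row integrals, namely $1$, so the uniform law is invariant.

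You instead lift to the torus $(\Reals/2\Integers)^2$ with a random sign pattern $\sigma$. By evenness of the tent map, $\pi(X+V)=\tilde{\pi}(\sigma X+\sigma V)$, which reduces the folded move to a translation of a uniform torus point. The one step worth writing out is the independence of $Y=\sigma X$ and $\sigma V$, since both depend on $\sigma$. Conditional on $\sigma=s$, the vectors $sX$ and $sV$ are independent and $sV\sim\mu$ for every $s$. Hence the joint law is $\frac14\sum_s \mathrm{Law}(sX)\otimes\mu=\mathrm{Unif}([-1,1]^2)\otimes\mu$. This is exactly where the symmetry hypothesis enters.

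Your argument never uses a density, so it holds for any reflection-symmetric $\mu$, not only absolutely continuous ones. The paper's kernel computation, in exchange, exhibits detailed balance (reversibility) explicitly.

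\textbf{Path length.} You use only that $\pi$ is $1$-Lipschitz, which gives the needed inequality at once. The paper instead shows, using $\delta_i\le 1$, that the segment crosses at most one fold line per coordinate. So the folded path has at most two bends and exactly the same length.

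Your treatment of the batch (unchosen points, independence across distinct chosen points) also spells out what the paper compresses into ``by induction on updates.''
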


\begin{proof}
Since $\lVert V_i \rVert \leq \delta_i \leq 1$, a moved point starting at $x \in [0,1]^2$ satisfies $x + V_i \in [-1,2]^2$, the domain of $\pi$. For $s \in (0,1)$, the preimages of $s$ under $\varphi$ in $[-1,2]$ are exactly $s$, $-s$, and $2 - s$; that is, $\tau(s)$ for $\tau$ ranging over $\mathcal{T} = \set{\operatorname{id}, t \mapsto -t, t \mapsto 2-t}$. Hence for $y \in (0,1)^2$, the preimages of $y$ under $\pi$ in $[-1,2]^2$ are the nine points $\tau(y) \coloneqq (\tau_1(y_1), \tau_2(y_2))$ with $\tau = (\tau_1, \tau_2) \in \mathcal{T}^2$.

Let $f$ be the density of $\mu_i$. Each $\tau \in \mathcal{T}^2$ is an isometry of $\Reals^2$, and the nine sets $\tau((0,1)^2)$, $\tau \in \mathcal{T}^2$, are pairwise disjoint, so the new position $\pi(x + V_i)$ has density at $y \in (0,1)^2$ given by
\[
\kappa(x, y) = \sum_{\tau \in \mathcal{T}^2} f(\tau(y) - x).
\]
We claim each summand is symmetric in $x$ and $y$. In a coordinate $j$ with $\tau_j(t) = -t$ we have $\tau_j(y_j) - x_j = -y_j - x_j = \tau_j(x_j) - y_j$, and with $\tau_j(t) = 2 - t$ likewise $\tau_j(y_j) - x_j = 2 - y_j - x_j = \tau_j(x_j) - y_j$; in a coordinate with $\tau_j = \mathrm{id}$, the quantities $y_j - x_j$ and $x_j - y_j$ differ by a sign, and $f$ is invariant under flipping the sign of either coordinate. Hence $f(\tau(y) - x) = f(\tau(x) - y)$ for every $\tau$, and summing gives $\kappa(x, y) = \kappa(y, x)$.

Starting from any $z \in [0,1]^2$, the moved point lands in $[0,1]^2$ with probability $1$, so $\int_{[0,1]^2} \kappa(z, x)\, dx = 1$. If a point's position is uniform on $[0,1]^2$, the density of its updated position at $y$ is therefore
\[
\int_{[0,1]^2} \kappa(x, y)\, dx = \int_{[0,1]^2} \kappa(y, x)\, dx = 1,
\]
so the updated position is again uniform on $[0,1]^2$. Since the initial positions are i.i.d.\ uniform on $[0,1]^2$, the first claim follows by induction on updates.

Finally, the segment from $x$ to $x + V_i$ lies in $[-1,2]^2$ and is monotone in each coordinate. In each coordinate it sweeps an interval of length at most $\delta_i \leq 1$ that meets $[0,1]$, so it crosses a fold line at most once in that coordinate; on each piece between crossings, $\pi$ acts as an isometry. Hence $\pi$ maps the segment to a polygonal path in $[0,1]^2$ with at most two bends, of the same total length $\lVert V_i \rVert \leq \delta_i$, starting at $\pi(x) = x$. If this path crosses a cut $\lambda$, the crossing point lies in $\lambda \cap [0,1]^2$ and its distance from $x$ is at most the path length, so the point started within distance $\delta_i$ of $\lambda \cap [0,1]^2$.
\end{proof}

\begin{lemma}[Strip bound]
\label{lem:strip}
With probability at least $1 - \frac{1}{m^2}$ over the random draw of the $m$ input points, the following holds simultaneously for every line $\lambda$ and every $\delta > 0$: the fraction of input points within distance $\delta$ of $\lambda \cap [0,1]^2$ is at most $2\delta \bigl(\sqrt{2} + 2\delta\bigr) + \varepsilon_m$.
\end{lemma}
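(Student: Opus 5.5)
The plan is a uniform-convergence argument. We split the bound into an expected fraction, which is a deterministic area computation, and a uniform deviation term, which is controlled by the VC dimension of the family of neighborhoods of line segments.

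\textbf{Area bound.} Fix a line $\lambda$ and $\delta>0$, and let $\sigma = \lambda \cap [0,1]^2$. This is a segment of length $L \leq \sqrt{2}$ (or empty, in which case there is nothing to prove). The set $N_\delta(\sigma)$ of points within distance $\delta$ of $\sigma$ is a stadium: a $2\delta \times L$ rectangle plus two half-disks. Its area is $2\delta L + \pi\delta^2 \leq 2\sqrt{2}\,\delta + 4\delta^2 = 2\delta\bigl(\sqrt{2}+2\delta\bigr)$, using $\pi < 4$. Because the points are uniform on $[0,1]^2$, the probability that a single point lands in $N_\delta(\sigma)$ is at most the area of $N_\delta(\sigma) \cap [0,1]^2$, and hence at most $2\delta(\sqrt2+2\delta)$.

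\textbf{Uniform deviation.} The family $\mathcal{R} = \set{N_\delta(\lambda\cap[0,1]^2)}$, indexed by the line and by $\delta$, must have bounded VC dimension. The cleanest route is to write membership as a Boolean combination of a constant number of polynomial inequalities in the point coordinates with a constant number of parameters. A point $z$ is in the set iff one of three conditions holds:
\begin{itemize}
\item $\operatorname{dist}(z,\lambda)\le\delta$ and the projection of $z$ onto $\lambda$ lies between the two clipped endpoints $e_1,e_2$;
\item $\lVert z-e_1\rVert\le\delta$;
\item $\lVert z-e_2\rVert\le\delta$.
\end{itemize}
The endpoints $e_1,e_2$ can be treated as free parameters in $\Reals^2$, which only enlarges the family. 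The family is therefore contained in a class defined by a fixed Boolean formula of a bounded number of degree-$\le 2$ polynomial sign conditions in $O(1)$ parameters. By the Warren/Milnor bound, or by standard closure of VC classes under finitely many unions and intersections, its VC dimension is an absolute constant $d_0$.

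The classical VC uniform convergence inequality then applies. For $m$ i.i.d.\ samples, with probability at least $1-\eta$, every $R\in\mathcal{R}$ satisfies $|\hat{P}(R) - P(R)| \leq C\sqrt{(d_0\log m + \log(1/\eta))/m}$. Setting $\eta = m^{-2}$ gives deviation at most $c_0\sqrt{(\log m)/m} = \varepsilon_m$ for a suitable absolute constant $c_0$, which is the constant the statement defers to this proof. Combining this with the area bound yields the claim simultaneously for all $\lambda$ and $\delta$. Only the upper tail is needed, so a one-sided inequality suffices.

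\textbf{Main obstacle.} The step needing the most care is certifying a constant VC dimension for the clipped stadiums uniformly over $\delta$, including degenerate clippings where the segment is a point or $\delta$ is large; the formula-based argument above handles these uniformly. A secondary point is the bookkeeping of $c_0$: the standard bound carries a $\log(2m)$ shatter-function term, which is absorbed into $c_0$ for $m\ge2$.
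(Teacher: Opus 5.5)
Your proposal is correct and matches the paper's argument: an area bound of $2\delta(\sqrt{2}+2\delta)$ combined with the $\varepsilon$-sample (VC uniform convergence) theorem at failure probability $m^{-2}$, which is what fixes $c_0$. The only difference is that the paper replaces the exact stadium by an enclosing rectangle with sides $\sqrt{2}+2\delta$ and $2\delta$, so it needs only the VC dimension of rotated rectangles (intersections of four halfplanes). This also avoids the degenerate-clipping bookkeeping that you flagged as the main obstacle.
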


\begin{proof}
The $\delta$-neighborhood of $\lambda \cap [0,1]^2$ is contained in a rectangle of side lengths $\sqrt{2} + 2\delta$ and $2\delta$. Since the neighborhood is contained in the rectangle, the fraction of input points in the neighborhood is at most the fraction in the rectangle, so a bound on the latter suffices. Rectangles of arbitrary orientation are intersections of four halfplanes, a range family of constant VC dimension, so by the $\varepsilon$-sample theorem~\cite{Har-Peled2011}, with probability at least $1 - \frac{1}{m^2}$, the empirical measure of every such rectangle simultaneously deviates from its area by at most $\varepsilon_m$, provided the constant $c_0$ in the definition of $\varepsilon_m$ is chosen sufficiently large relative to the VC dimension of the family and the absolute constant of the $\varepsilon$-sample theorem. The area of the rectangle is $2\delta \bigl(\sqrt{2} + 2\delta\bigr)$, which gives the claim.
\end{proof}

\begin{theorem}[Single relocation]
\label{thm:single}
Consider a single update with displacement bound $\delta$, either in the adversarial model applied to a fresh instance, or in the stochastic model at any point of an update sequence. Then
\[
\E[T] = O\bigl((\delta + \varepsilon_m) m (\log k)^2 + \log k\bigr).
\]
In particular $\E[T] = O(\delta m(\log k)^2 + \log k)$ whenever $\delta \geq \sqrt{(\log m)/m}$. The bound holds for arbitrary colorings; in the adversarial model, the displacement vector may be chosen adaptively.
\end{theorem}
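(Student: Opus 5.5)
We split the cost of one relocation $x \to x'$ into the bookkeeping of Step~1, which is $O(\log k)$ deterministically, and the rebuild cost of Steps~2--4. We bound the expected rebuild cost by a sum over cuts: each cut $\lambda$ contributes its rebuild cost times the probability that the moved segment crosses $\lambda$.

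The first step is to say which cuts get rebuilt. Suppose the segment from $x$ to $x'$ (or the folded path, in the stochastic model) crosses no cut. Then $x$ and $x'$ lie in the same leaf cell, and the update costs only $O(\log k)$. Otherwise we look at the highest cut the path crosses.
\begin{itemize}
\item If it is a ham sandwich cut at node $v$ of superagent $C_i$, and no top-level cut is crossed, Step~3 rebuilds the subtree of the deepest common node. This costs $O(n_v e_i) = O(n_v \log k)$.
\item If a top-level cut is crossed, the demand set is $N=\set{C_a,C_b}$. Lemma~\ref{lem:twoterm} gives a path cluster, and its rebuild costs at most a static run, $O(m\log k)$.
\end{itemize}
So the expected rebuild cost is at most
\[
\sum_{\lambda} \Pr[\text{path crosses } \lambda]\cdot O\bigl(n_\lambda \log k\bigr),
\]
where $n_\lambda$ is the number of points in the cell that $\lambda$ splits, and $n_\lambda = m$ for top-level cuts.

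The second step bounds the crossing probability. The moved point is chosen uniformly among the $m$ points. In both models the path has length at most $\delta$; in the stochastic model this is Lemma~\ref{lem:stationary}. A point that crosses $\lambda$ therefore starts within distance $\delta$ of $\lambda\cap[0,1]^2$. In the stochastic model, Lemma~\ref{lem:stationary} says the positions are i.i.d.\ uniform before the update. In the adversarial model the instance is fresh, and the displacement is chosen only after the point is revealed, so only the point's starting position matters. By Lemma~\ref{lem:strip}, outside an event of probability $m^{-2}$ we get
\[
\Pr[\text{cross }\lambda] \le 2\delta(\sqrt2+2\delta)+\varepsilon_m = O(\delta+\varepsilon_m),
\]
simultaneously for all lines. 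This uniformity is exactly what we need, because the cuts depend on the data. On the bad event the cost is at most $O(m\log k)\cdot m^{-2}$, which is negligible.

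The crude bound $O(\delta+\varepsilon_m)\cdot n_\lambda$ on the number of points near a cut is too weak for deep cuts: it ignores that the point must also lie in that cut's cell. This is the main obstacle. The fix is to count only points inside the cell. The point must lie in the cell of $v$, within distance $\delta$ of $\lambda_v$. That set is contained in the $\delta$-neighborhood of the line $\lambda_v$ clipped to the square, so the probability is still $O(\delta+\varepsilon_m)$, but we multiply by the cost $n_v\log k$.

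Summing level by level, the cells at one depth $d$ of one superagent $C_i$ are disjoint. The event "the point crosses the cut of its own depth-$d$ cell" is the union over $v$ of these events. Each has probability $O(\delta+\varepsilon_m)$, and each costs $O(n_v\log k)=O(\lvert C_i\rvert\log k/2^d)$. With $2^d$ nodes at depth $d$, the level contributes
\[
O\bigl((\delta+\varepsilon_m)\lvert C_i\rvert\log k\bigr).
\]
Summing over at most $e_i\le\log k$ levels and over the superagents, whose sizes total $m$, gives
\[
O\bigl((\delta+\varepsilon_m)m(\log k)^2\bigr).
\]
The $2(\ell-1)=O(\log k)$ top-level cuts each contribute $O(\delta+\varepsilon_m)\cdot O(m\log k)$. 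Their total is again $O((\delta+\varepsilon_m)m(\log k)^2)$. Cluster selection with $t\le 2$ costs $O(\log k)$. Adding everything gives the bound in Theorem~\ref{thm:single}. The final remark follows because $\varepsilon_m=O(\delta)$ when $\delta\ge\sqrt{(\log m)/m}$.
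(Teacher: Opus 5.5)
Your proposal is correct and follows the paper's proof essentially step for step. Like the paper, you bound the rebuild cost by a sum over all cuts of a crossing indicator times the subtree rebuild cost, and you control every indicator uniformly via Lemma~\ref{lem:strip} (with Lemma~\ref{lem:stationary} or freshness of the instance). You then group nodes by depth so that $\sum_{v : d_v = j} n_v \leq m$, and charge each of the $O(\log k)$ top-level cuts $O(m\log k)$. One remark: the ``main obstacle'' you describe is not real. Restricting to points inside the cell of $v$ leaves your per-node bound at $O(\delta+\varepsilon_m)$, and no improvement is needed, because the crude estimate $O(\delta+\varepsilon_m)\cdot O(n_v\log k)$ already sums to $O\bigl((\delta+\varepsilon_m)m(\log k)^2\bigr)$ through the depth-grouping you then perform.
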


\begin{proof}
In the stochastic model, the expectation is over the current configuration, which is i.i.d.\ uniform by Lemma~\ref{lem:stationary}; in the adversarial model it is i.i.d.\ uniform by assumption. In either case the crossing argument below applies to the (possibly folded) path of the moved point, which by Lemma~\ref{lem:stationary} lies within distance $\delta$ of its starting point.

Updating per-color counts along two root-to-leaf paths costs $O(\log k)$; if no cut is invalidated, the update ends there. Otherwise, let $x$ and $x'$ be the old and new positions, with $|x - x'| \leq \delta$ and the segment $xx' \subseteq [0,1]^2$. A single relocation leaves at most two superagents with nonzero demand, so by Lemma~\ref{lem:twoterm}, the cluster selection of Step~2 costs $O(\log k \log\log k)$ time via Dijkstra's algorithm, or $O(\log k)$ via Thorup's algorithm, dominated by the rebuild cost below.

First consider the case in which $x$ and $x'$ lie in the region of the same superagent $C_i$. The invalid cuts are exactly those at nodes on the two root-to-leaf paths strictly below their deepest common node $u$, together with the cut at $u$; the repair rebuilds the subtree of $u$ at cost $O(n_u(e_i - d_u))$. Invalidation of the cut $\lambda_v$ at any node $v$ requires $x$ and $x'$ to lie on opposite sides of the cut; since $xx' \subseteq [0,1]^2$, the segment crosses $\lambda_v \cap [0,1]^2$, so $x$ lies within $\delta$ of that set. The rebuild cost is $O(n_u(e_i - d_u))$ for the maximal invalid node $u$, which is at most $\sum_v \mathbf{1}[\lambda_v \text{ crossed}] \cdot O(n_v(e_1 - d_v))$, where the sum ranges over all nodes $v$; summing over all nodes accounts for whichever cuts the adversarially chosen displacement actually crosses, so no case analysis over the adversary's choice is needed. Since the source point is uniform over the $m$ input points, and since Lemma~\ref{lem:strip} holds simultaneously for all lines --- in particular for the data-dependent cuts $\lambda_v$ --- each crossing indicator has expectation at most $2\delta \bigl(\sqrt{2} + 2\delta\bigr) + \varepsilon_m$ on the success event of the lemma. As $\delta \leq \sqrt{2}$, this is $O(\delta + \varepsilon_m)$, a bound independent of $v$. Hence
\begin{align*}
\E[T_{\mathrm{rebuild}}]
&\leq \sum_v O(\delta + \varepsilon_m)
\cdot O\bigl(n_v (e_1 - d_v)\bigr) \\
&= O(\delta + \varepsilon_m) \cdot \sum_{v} O\bigl(n_v (e_1 - d_v)\bigr) \\
&= O\bigl((\delta + \varepsilon_m) m(\log k)^2\bigr) \,,
\end{align*}
where the last equality holds by grouping nodes by depth: at any fixed depth $j$, the cells are pairwise disjoint, so $\sum_{v : d_v = j} n_v \leq m$, and therefore
\begin{align*}
\sum_v n_v(e_1 - d_v)
&= \sum_{j=0}^{e_1} (e_1 - j) \sum_{v : d_v = j} n_v \\
&\leq m \sum_{j=0}^{e_1} (e_1 - j) \\
&= m \cdot \frac{e_1(e_1+1)}{2} \\
&= O\bigl(m (\log k)^2\bigr) \,,
\end{align*}
using $e_1 \leq \log_2 k$. If instead $xx'$ crosses one of the (at most) $2(\ell - 1)$ top-level cuts, the cross-superagent repair of Section~\ref{sec:batch} costs at most $O(m \log k)$, and the same strip bound applies to each top-level cut, contributing $O((\delta + \varepsilon_m) \ell m \log k) = O\bigl((\delta + \varepsilon_m) m(\log k)^2\bigr)$. An adaptive adversary chooses which nearby cuts to cross, but the displayed sum ranges over all cuts, so it nullifies the power of any such choice. On the failure event of Lemma~\ref{lem:strip} (probability at most $m^{-2}$), the cost is at most that of a static run, $O(m \log k)$, contributing $O\bigl(\frac{\log k}{m}\bigr) = o(1)$ to $\E[T]$.
\end{proof}

\begin{corollary}[Sub-cell displacements are nearly free]
With probability at least $1 - O(k(\delta + \varepsilon_m))$, a single update invalidates no cut and completes in $O(\log k)$ time.
\end{corollary}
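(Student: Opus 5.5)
The plan is to bound the probability that a single update invalidates any cut, using the crossing analysis from the proof of Theorem~\ref{thm:single} but counting cuts instead of weighting them by rebuild cost. Step~1 (move and tally) always costs $O(\log k)$. If no cut is invalidated, the local-repair and cluster-rebuild steps do nothing. In that case the update stops after $O(\log k)$ time.

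So it suffices to show that the probability that some cut is invalidated is $O\bigl(k(\delta + \varepsilon_m)\bigr)$. A cut $\lambda$ can be invalidated only if the moved point's old and new positions lie on opposite sides of $\lambda$. This covers both cases:
\begin{itemize}
\item a bisection cut, whose child counts change;
\item a top-level cut, where a changed superagent demand forces a rebuild.
\end{itemize}
By Lemma~\ref{lem:stationary} in the stochastic model, and by clipping in the adversarial model, the path of the moved point has length at most $\delta$. Hence crossing $\lambda$ requires the source point to lie within distance $\delta$ of $\lambda \cap [0,1]^2$.

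Condition on the success event of Lemma~\ref{lem:strip}. Since the source point is chosen uniformly among the $m$ input points, and the lemma holds simultaneously for all lines (including the data-dependent cuts), each cut is crossed with probability at most $2\delta(\sqrt{2}+2\delta) + \varepsilon_m = O(\delta + \varepsilon_m)$. This uses $\delta \le \sqrt{2}$; if $\delta$ is larger, the claim is vacuous.

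Take a union bound over all cuts. The solution has at most $k + \ell - 2 = O(k)$ cuts by Proposition~\ref{prop:batch_correct}, so the probability that any cut is crossed is $O\bigl(k(\delta + \varepsilon_m)\bigr)$. The failure event of Lemma~\ref{lem:strip} adds probability at most $m^{-2}$. This is absorbed into $O(k\varepsilon_m)$, because $\varepsilon_m \ge m^{-1/2}$.

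The only delicate point is the adaptive adversary, who may choose the displacement direction after seeing the point. The union bound handles this. It does not depend on which cut is actually crossed, only on the fact that being within $\delta$ of some cut is necessary. This mirrors the observation in the proof of Theorem~\ref{thm:single} that summing over all cuts removes the adversary's choice. The argument is otherwise routine.
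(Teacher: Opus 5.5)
Your argument is correct and is essentially the paper's proof: a union bound over the at most $k+\ell-2$ cuts, each crossed with probability $O(\delta+\varepsilon_m)$ by Lemma~\ref{lem:strip}, with Step~1 accounting for the $O(\log k)$ time. You are more explicit than the paper about the failure event and the adaptive adversary, but you omit what the paper states: when no cut is invalidated, $N=\emptyset$ and Step~2 costs $O(1)$. One wording point: a bisection cut strictly below the deepest common node can be invalidated without itself being crossed, so your per-cut claim is not literally true; the union bound only needs that any invalidation forces some cut to be crossed, which holds because two positions on the same side of every cut lie in the same leaf cell.
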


\begin{proof}
We use a union bound over the at most $k+\ell-2$ cuts, applying Lemma~\ref{lem:strip} to each. When no cut is invalidated, $N=\emptyset$ and Step~2 of the relocation algorithm costs $O(1)$, so the update completes in $O(\log k)$ time (due to Step~1).
\end{proof}

\begin{theorem}[Batch relocation]
\label{thm:batch}
For a batch with displacement bounds $\delta_1,\dots,\delta_p$ and $D = \sum_i \delta_i$, under the same conditions as Theorem~\ref{thm:single},
\[
\E[T_{\mathrm{batch}}]
= O\bigl(\min\set{(D + p\varepsilon_m)\log k, 1}
\cdot m\log k + (p + k)\log k\bigr).
\]
\end{theorem}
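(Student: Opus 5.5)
We bound the cost of a batch by splitting it into three parts. Step~1 (move and tally) costs $O(\log k)$ per relocated point, for $O(p\log k)$ in total. Cluster selection (Step~2) costs at most $O(2^\ell \ell) = O(k\log k)$ by Theorem~\ref{thm:cluster_1}. These two terms give the additive $(p+k)\log k$ term. What remains is the rebuild cost of Steps~3 and~4, and the plan is to show that its expectation is at most $O(m\log k)$ times $\min\{(D+p\varepsilon_m)\log k,\,1\}$.

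\textbf{The cap.} The factor $1$ inside the minimum is the deterministic worst case. Whatever gets invalidated, local repairs of disjoint subtrees together with cluster rebuilds of disjoint sets of superagents cost at most one static run, namely $O(m\log k)$ by Proposition~\ref{prop:batch_correct}.

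\textbf{The distance-sensitive term.} We charge the rebuild to crossed cuts, as in the proof of Theorem~\ref{thm:single}.
\begin{itemize}
\item \emph{Bisection cuts.} A bisection node $v$ is rebuilt only if some relocation segment $x_ix_i'$ crosses a cut $\lambda_w$ at $v$ or at a node $w$ below $v$. Conservatively, we charge each crossing of $\lambda_w$ the full cost $O(n_u(e_1-d_u))$ of rebuilding its maximal invalid ancestor $u$; this is at most $O(m\log k)$.
\item \emph{Top-level cuts.} Cluster rebuilds happen only when some segment crosses a top-level cut. Each such crossing is charged the cost of at most one static run, $O(m\log k)$.
\end{itemize}
For relocation $i$ and a fixed cut $\lambda$, Lemma~\ref{lem:strip} bounds the probability that $x_i$ lies within $\delta_i$ of $\lambda\cap[0,1]^2$. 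Since the chosen points are uniform among the input points (in the stochastic model, also using Lemma~\ref{lem:stationary}), this probability is $O(\delta_i+\varepsilon_m)$, uniformly in $\lambda$, including data-dependent cuts.

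\textbf{Summing the charges.} We sum, over relocations $i$, the probability of crossing each cut times the charge. The naive sum over all $k+\ell-2$ cuts with an $O(m\log k)$ charge each would lose a factor $k$, so the charges need care.
\begin{itemize}
\item \emph{Bisection cuts.} As in Theorem~\ref{thm:single}, we charge crossing $\lambda_v$ only $O(n_v(e_1-d_v))$, the cost of rebuilding the subtree at $v$. Summing over nodes by depth gives $O(m(\log k)^2)$ per relocation. Within a single superagent this charging is valid: a subtree rooted at the maximal invalid node $u$ is paid for by the crossing of $\lambda_u$ itself, since $u$'s cut is invalidated only if some point moves across it.
\item \emph{Top-level cuts.} There are $2(\ell-1)=O(\log k)$ of them, each charged $O(m\log k)$, again giving $O(m(\log k)^2)$ per relocation.
\end{itemize}
Summing over $i$ by linearity of expectation gives $O\bigl((D+p\varepsilon_m)\,m(\log k)^2\bigr)$. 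Combining with the deterministic cap yields the minimum. The failure event of Lemma~\ref{lem:strip}, which has probability $m^{-2}$, contributes $o(1)$, exactly as in Theorem~\ref{thm:single}.

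\textbf{Main obstacle.} The delicate step is justifying that a batch's rebuild cost is bounded by the \emph{sum} of the per-relocation charges. Several relocations can jointly invalidate a node none of them invalidates alone (via count changes), and zero-demand superagents may be absorbed into clusters.
\begin{itemize}
\item For the first issue: a node $v$ has a changed child count only if some relocation has one endpoint inside and one outside a child cell. Such a segment crosses either $\lambda_v$ or a cut on the path to $v$, and that crossing carries the charge.
\item For the second issue: a zero-demand superagent $C_j$ is included in $S^*$ only when this is cheaper than leaving it out. By minimality of $S^*$, the cluster cost is at most the cost of the alternative repair built from shortest paths between the nonzero-demand superagents. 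Each unit of demand arises from a crossing of a top-level cut, and we bound that alternative cost by $O(m\log k)$ per crossing.
\end{itemize}
I expect this accounting of the Steiner-optimal cluster against individual crossings to need the most care.
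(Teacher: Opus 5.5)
Your decomposition is the paper's: $O(p\log k)$ for tallying, $O(k\log k)$ for cluster selection, and a rebuild cost bounded both by one static run and by a sum of per-move, crossing-based charges as in Theorem~\ref{thm:single}. The within-superagent charging, however, rests on a false claim. A batch-maximal invalid node $u$ need not have $\lambda_u$ crossed. For example, let one red point enter one child of $u$ across the parent's cut $\lambda_w$, and let another red point leave the other child of $u$ across $\lambda_w$. Then $w$ stays valid and $u$ becomes invalid, yet no segment need meet $\lambda_u$. For the same reason, your first bullet's ``at $v$ or at a node $w$ below $v$'' should read ``at $v$ or an ancestor of $v$''.

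Your ``first issue'' bullet gives the right replacement: some move has exactly one endpoint in a child of $u$, so it crosses $\lambda_w$ for an ancestor-or-self $w$ of $u$, or a top-level cut. But a single crossing of $\lambda_w$ is charged only $O\bigl(n_w(e_1-d_w)\bigr)$, and it may have to pay for several batch-maximal nodes. You never show that it can. The missing step, which is how the paper closes this, is that batch-maximal invalid nodes are pairwise incomparable. Hence those below $w$ have disjoint cells, and $\sum_u n_u(e_1-d_u) \le n_w(e_1-d_w)$. Also, a node invalidated only ``jointly'' cannot exist: a child count that is off quota after the batch was changed by some single move, and that move alone already invalidates the node.

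The step you flag as hardest is unnecessary, and the argument you sketch for it is incomplete. You do not say which superagents the shortest paths join, nor why the resulting clusters have zero aggregate demand, which is what Lemma~\ref{lem:feasible} requires. Your own top-level charge already suffices. All of Step~4 costs at most one static run, and it does any work only if $N \neq \emptyset$, which requires some move to cross a top-level cut. Charging the whole cost to one such crossing gives $O(m\log k)\sum_i \Pr[\text{move } i \text{ crosses a top-level cut}] = O\bigl((D+p\varepsilon_m)\,m(\log k)^2\bigr)$.

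If you want a per-move bound on $\operatorname{weight}(S^*)$, the paper's construction works. For each move $i$ that crosses a top-level cut, let $S_i$ be the set of superagents whose regions meet move $i$'s $x$-interval; each $G[S_i]$ is connected. Then every component of $G[\bigcup_i S_i]$ contains both endpoint superagents of every move touching it, and so has zero demand.
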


\begin{proof}
Overhead: $O(\log k)$ per single relocation for updating per-color counts, plus $O\bigl(2^\ell\ell\bigr) = O(k\log k)$ for cluster selection in the worst case (Theorem~\ref{thm:cluster_1}). For rebuilding, we bound the batch cost by both the sum of single-move costs and the cost of rerunning the static algorithm, and then take the minimum; the expectation of the minimum is at most the minimum of the two bounds.

For the sum of single-move costs, first observe that a cut can be invalid after the batch only if the final point counts on its two sides differ from their quotas, which requires at least one individual move to have crossed it; hence every cut invalid after the batch is also invalidated by some individual move, considered in isolation. Now let $r$ be any maximal invalid node of the batch. By the observation, the cut at $r$ is crossed by some move $i$, so $r$ lies inside the invalid subtree of move $i$, whose own maximal invalid node $r_i$ is therefore an ancestor of $r$ (or $r$ itself). We charge the rebuild cost of $r$ to move $i$. A single move $i$ may receive charges from several batch-maximal nodes, but any two distinct batch-maximal nodes are incomparable in the tree, so the charged nodes below $r_i$ head pairwise disjoint subtrees within the subtree of $r_i$; for disjoint subtrees inside the subtree of $r_i$, the rebuild costs satisfy $\sum_r n_r(e_1 - d_r) \leq n_{r_i}(e_1 - d_{r_i})$, i.e., the total charge to move $i$ is at most the cost that move $i$ would have incurred alone. Summing over moves, the total within-superagent rebuild cost of the batch is at most the sum of the individual bounds from Theorem~\ref{thm:single}. For cross-superagent repair, let $S_i$ be the set of superagents whose regions meet the closed $x$-interval spanned by move $i$'s endpoints; each $G[S_i]$ is connected. Let $U$ be the vertex set of any connected component of $G[\bigcup_i S_i]$. If a move $j$ has an superagent in $U$, then $S_j \cap U \neq \emptyset$, and since $G[S_j]$ is connected and $U$ is a connected component of $G[\bigcup_i S_i]$, we have $S_j \subseteq U$; in particular both of move $j$'s superagents lie in $U$. The set $U$ thus has zero aggregate demand, so $\bigcup_i S_i$ is repairable by Lemma~\ref{lem:feasible}. An optimal repairable superset therefore costs at most $\sum_i \operatorname{weight}(S_i)$, and each term is bounded in expectation exactly as the cross-superagent term in the proof of Theorem~\ref{thm:single}. Linearity of expectation gives the sum $\sum_i O\bigl((\delta_i + \varepsilon_m)m(\log k)^2\bigr) = O\bigl((D + p\varepsilon_m) m(\log k)^2\bigr)$.
\end{proof}

\begin{corollary}[Small batches]
For any $p$, the overhead term of Theorem~\ref{thm:batch} improves to
\[
O\bigl(p\log k + \min\set[\big]{k\log k,\ 9^p\log k + 4^p \log k\log\log k}\bigr).
\]
In particular, if $p = O(1)$, cluster selection costs $O(\log k \log\log k)$ rather than $O(k\log k)$, and the overhead term is $O(p\log k + \log k \log\log k)$.
\end{corollary}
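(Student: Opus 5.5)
The overhead bound has two ingredients, and I would treat them in turn. The first is the per-relocation cost of Step~1 of the batch relocation algorithm: $O(\log k)$ per move, so $O(p\log k)$ in total. The second is the cost of cluster selection in Step~2. The algorithm compares the two selection procedures in advance and runs the cheaper one. Since this comparison depends only on $N$ and $G$, the selection cost is the minimum of the bounds of Theorem~\ref{thm:cluster_1} and Theorem~\ref{thm:cluster_2}. My plan is to substitute the available bounds on $\ell$ and $t$ into these two expressions.

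The first substitution handles Theorem~\ref{thm:cluster_1}. Since $\ell \le \log_2(k+1)$, we have $2^\ell \ell = O(k\log k)$.

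The second substitution handles Theorem~\ref{thm:cluster_2}, via the observation stated before that theorem: each relocation changes the demand of at most two superagents, so $t = |N| \le 2p$. This gives $3^t \le 9^p$ and $2^t \le 4^p$. Combining with $\ell = O(\log k)$ and $\log \ell = O(\log\log k)$, the bound $O(3^t\ell + 2^t\ell\log\ell)$ becomes $O(9^p\log k + 4^p\log k\log\log k)$.

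Taking the minimum of the two substitutions and adding the $O(p\log k)$ from Step~1 yields the displayed overhead. This replaces the $O(k\log k)$ worst-case selection term used in Theorem~\ref{thm:batch}; the rebuild term of that theorem is unaffected. Two small details need checking here:
\begin{enumerate}
\item When $t \le 2$, Lemma~\ref{lem:twoterm} applies instead, with cost $O(\log k\log\log k)$ via Dijkstra's algorithm. This is within the same bound.
\item When $N=\emptyset$, the selection cost is $O(1)$.
\end{enumerate}

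For the ``in particular'' clause, take $p=O(1)$. Then $9^p$ and $4^p$ are constants, so the second branch of the minimum is $O(\log k\log\log k)$. For large $k$ this is smaller than $k\log k$, and the overhead becomes $O(p\log k + \log k\log\log k)$.

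I expect no real obstacle. The only point needing care is justifying that the overhead is the minimum, and not the sum, of the two selection costs. This relies on the remark after Theorem~\ref{thm:cluster_2}: both costs are computable from $|N|$ and $\ell$ before any selection is run, so the cheaper procedure can be chosen at negligible extra cost.
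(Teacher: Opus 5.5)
Your proposal is correct and follows the paper's own argument. The paper likewise substitutes $\ell = O(\log k)$ and $t = |N| \le 2p$ into Theorem~\ref{thm:cluster_2}, keeps Theorem~\ref{thm:cluster_1}'s $O(k\log k)$ as the other branch of the minimum, and adds the $O(p\log k)$ cost of updating the per-color counts. Your extra checks for $N = \emptyset$ and $|N| \le 2$ (Dijkstra or Thorup at $O(\log k\log\log k)$) match Step~2 of the algorithm and stay within the stated bound.
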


\begin{proof}
The claim follows from Theorem~\ref{thm:cluster_2} with $\ell = O(\log k)$ and $t = |N| \leq 2p$.
\end{proof}

\subsection{Other dynamic updates}
The dynamic algorithm can be easily adapted to support various other updates.
\begin{description}
\item[Recoloring] Any recoloring of a subset of points that leaves the total number of points of each color unchanged can be simulated by relocations. Match each point changing from red to blue with a distinct point changing from blue to red, and swap the positions of each matched pair; the resulting point set is identical to the recolored one.
\item[Insertion] Any insertion of $k$ points of the same color can be simulated by relocations. Add one point of the desired color to each agent's region. Relocate the $k$ new points to the desired positions.
\item[Deletion] Any deletion of $k$ points of the same color can be simulated by relocations. Temporarily remove one point of the desired color from each agent's region. Relocate the $k$ points to be deleted to the newly vacant positions.
\end{description}
All results proved for relocation hold in these adapted settings.

\section{Conclusion}
\label{sec:conc}
We initiated the algorithmic study of necklace splitting in the plane with straight-line cuts, giving static algorithms that provably beat the $2(k-1)$ projection bound --- down to $k + \log_2(k+1) - 2$ cuts in $\Theta(m \log k)$ time for arbitrary $k$ and arbitrary input, and further still under structural assumptions --- together with a dynamic algorithm that maintains exact fairness under relocations with distance-sensitive expected update
time.

The central open question is the gap between our $k-1$ lower bound and $k + \log_2(k+1) - 2$ upper bound; we conjecture that $k-1$ cuts suffice in the worst case. Several more specific directions also remain. We conjecture that two cuts suffice for $k = 3$. We also conjecture that the fraction of evenly spaced instances with $k = 3$ requiring the full $2(k-1)$ cuts vanishes as $m$ tends to infinity, and it is an open question whether or not the cases $m = 5k$, $m = 7k$, and larger multiples require $2(k-1)$ cuts. Next, the decision version of the problem is clearly in NP; is the problem NP-complete? Finally, natural extensions include input with more than two colors and higher dimensions.

\bibliographystyle{plainnat}  
\bibliography{main}

\clearpage
\appendix

\section{Jordan-Curve Cuts with Many Colors}
\label{sec:jordan_curve}
In this section, we consider a variant of our problem where cuts, instead of being straight lines, are Jordan curves (simple and closed). In addition, instead of being limited to just two colors, we allow arbitrarily many colors. In this formulation, essentially arbitrary connected regions can be allocated to agents. It can be interpreted as an abstract model for extreme gerrymandering.

There are $m$ points in the plane, and each point is colored one of $n$ colors. There are $k \geq 2$ agents. The number of points of each color is divisible by $k$. We want to carve the plane into regions using pairwise disjoint Jordan curves so that the resulting regions can be fairly allocated among the agents: each agent should receive the same number of points of each color. We want to minimize the number of curves used.

\begin{lemma}
\label{lem:disks}
If $D_1, \dots, D_i \subset \Reals^2$ are pairwise disjoint closed topological disks, then $\Reals^2 \setminus (D_1 \cup \dots \cup D_i)$ is open and connected.
\end{lemma}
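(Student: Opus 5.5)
We argue openness and connectedness separately, and reduce connectedness to the case of a single disk via the Jordan--Schoenflies theorem.

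\emph{Openness.} Each $D_j$ is compact and hence closed, so the finite union $D_1 \cup \dots \cup D_i$ is closed and its complement is open.

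\emph{Connectedness.} We induct on $i$. For $i = 0$ the complement is all of $\Reals^2$. For $i = 1$, the boundary $\partial D_1$ is a Jordan curve. By the Jordan--Schoenflies theorem there is a homeomorphism $h$ of $\Reals^2$ carrying $D_1$ onto the closed unit disk. The complement of the closed unit disk is path-connected: radial projection joins any point to the circle of radius $2$, and that circle is itself path-connected. Pulling back along $h$ shows that $\Reals^2 \setminus D_1$ is path-connected.

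For the inductive step, let $U = \Reals^2 \setminus (D_1 \cup \dots \cup D_{i-1})$, which is open and path-connected by hypothesis, and take $x, y \in U \setminus D_i$. Choose a path $\gamma$ in $U$ from $x$ to $y$.
\begin{itemize}
\item If $\gamma$ misses $D_i$, we are done.
\item Otherwise, apply $h$, the Schoenflies homeomorphism for $D_i$, so that $D_i$ becomes the closed unit disk $\overline{B}$.
\item The compact set $D_1 \cup \dots \cup D_{i-1}$ is disjoint from $D_i$, so it has positive distance from it. Hence, in the image, there is $\varepsilon > 0$ such that the closed annulus $1 \le |z| \le 1 + \varepsilon$ misses the other disks.
\item Let $a$ be the first time $\gamma$ enters the closed $(1+\varepsilon/2)$-disk, and $b$ the last time it leaves it. Both $\gamma(a)$ and $\gamma(b)$ lie on the circle $|z| = 1 + \varepsilon/2$.
\item Replace $\gamma|_{[a,b]}$ by an arc of that circle joining $\gamma(a)$ to $\gamma(b)$. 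The arc lies in the annulus, so it avoids every disk.
\end{itemize}
The modified path runs from $x$ to $y$ and avoids all the disks. Since $x$ and $y$ are themselves outside $\overline{B}$, they lie outside the smaller closed disk wherever $\gamma$ passes them, so the endpoints are unaffected (if $x$ lies inside the $(1+\varepsilon/2)$-disk, first shrink $\varepsilon$ until it does not).

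The main obstacle is this detour step. It relies on two ingredients. The first is the Schoenflies theorem, which supplies a clean collar around $D_i$; for an arbitrary closed topological disk in the plane, the boundary is a Jordan curve, so the theorem applies. The second is the positive separation between disjoint compact sets, which guarantees that the collar avoids the other disks.
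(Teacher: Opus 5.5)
Your proof is correct, but it handles the inductive step differently from the paper. The paper agrees with you on openness. It uses Jordan--Schoenflies only to conclude that each single complement $\Reals^2 \setminus D_j$ is connected, and then inducts by citing Janiszewski's theorem: if $A$ and $B$ are disjoint compact sets with connected complements, then $\Reals^2 \setminus (A \cup B)$ is connected.

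You instead prove the needed special case of Janiszewski by hand. The Schoenflies homeomorphism $h$ sends $D_i$ to $\overline{B}$, and compactness gives a collar annulus $1 \le |z| \le 1+\varepsilon$ avoiding the images of the other disks. Any path in the old complement is then rerouted along the circle $|z| = 1+\varepsilon/2$ between its first entry into and last exit from the closed $(1+\varepsilon/2)$-disk.

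Your route is more self-contained and yields path-connectedness explicitly. It costs the full-plane homeomorphism for every newly added disk, in order to manufacture the collar. The paper's route is shorter and more general: Janiszewski's theorem needs only that the pieces be disjoint compacta that individually do not separate the plane, so the disk structure enters only through the one-disk case.

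Two small points of hygiene. First, take the positive separation between $h(D_i) = \overline{B}$ and $h(D_1 \cup \dots \cup D_{i-1})$, which are disjoint compact sets in the image, rather than between the original sets, since $h$ is not an isometry. Second, your separate base case $i = 1$ is already covered by the inductive step with $U = \Reals^2$.
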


\begin{proof}
The set $D_1 \cup \dots \cup D_i$ is closed, so its complement is open. By the Jordan--Schoenflies theorem~\cite{Newman1951}, each $D_j$ is the image of a round disk under a homeomorphism of $\Reals^2$, so $\Reals^2 \setminus D_j$ is connected. Connectedness of the complement of the union then follows by induction on $i$ from Janiszewski's theorem~\cite{Newman1951}: if $A, B \subset \Reals^2$ are disjoint compact sets whose complements are connected, then the complement of $A \cup B$ is connected.
\end{proof}

\begin{prop}
\label{prop:jordan_curve}
Using pairwise disjoint Jordan curves as cuts, $k-1$ curves are necessary and sufficient.
\end{prop}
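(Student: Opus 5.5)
We prove the two directions separately: a region-counting argument for necessity and an explicit construction of nested-free disks for sufficiency.

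\emph{Necessity.} For any $k \ge 2$, take any instance in which every color class has exactly $k$ points, e.g.\ $m = k$ points of a single color (or $k$ points of each of the $n$ colors). Each agent must receive at least one point, so the cuts must produce at least $k$ regions containing points, and these must be pairwise separated. Next we count faces: $j$ pairwise disjoint Jordan curves divide the plane into exactly $j+1$ connected components of the complement. This follows by induction on $j$ from the Jordan curve theorem. Adding a curve disjoint from the previous ones places it inside a single existing face; it splits that face into two, since its interior and exterior parts are separated by it, and it leaves the other faces untouched. Hence $j+1 \ge k$, so $j \ge k-1$. The induction step is the only mildly delicate point; a clean formulation is via Alexander duality or the inductive Jordan argument.

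\emph{Sufficiency.} Since input points are distinct, we can choose the $k$ allocations arbitrarily. We fix any fair allocation: split each color class into $k$ equal parts and give part $t$ to agent $t$. We then draw $k-1$ pairwise disjoint Jordan curves $\gamma_1,\dots,\gamma_{k-1}$ such that the closed disk $D_t$ bounded by $\gamma_t$ contains exactly the points of agent $t$, and agent $k$ receives the complement.

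The construction proceeds inductively. Suppose $D_1,\dots,D_{t-1}$ are already built, pairwise disjoint, with no input point on their boundaries, and avoiding the points of agents $t,\dots,k$. By Lemma~\ref{lem:disks}, the set $U = \Reals^2 \setminus (D_1\cup\dots\cup D_{t-1})$ is open and connected, so it is path-connected. It contains all the points of agent $t$, say $p_1,\dots,p_s$, and all remaining points.

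Within $U$ we choose a simple polygonal path $\pi$ through $p_1,\dots,p_s$ that avoids every other input point. Such a path exists: connect $p_i$ to $p_{i+1}$ by a polygonal path in the open connected set $U$ minus the finitely many other points, which is still connected, and remove self-intersections by shortcutting. Take $D_t$ to be a sufficiently thin closed regular neighborhood of $\pi$, which is a closed topological disk because $\pi$ is a tree, indeed an arc. Making it thin enough keeps it inside $U$ and away from the other points.

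After $k-1$ steps the disks are pairwise disjoint, so their boundary curves are pairwise disjoint Jordan curves, and the remaining points all lie in the complement, which goes to agent $k$. The one step needing care is the claim that a thin neighborhood of a polygonal arc is a closed disk lying in $U$. This is standard, since such a neighborhood is a union of thin rectangles and small round caps deformable to a disk, and compactness of $\pi$ gives a positive distance to $\partial U$ and to the other points.
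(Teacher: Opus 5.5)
Your overall plan is the paper's: necessity by counting the $j+1$ faces of $j$ disjoint Jordan curves, and sufficiency by inductively thickening a polygonal connector inside the connected open set supplied by Lemma~\ref{lem:disks}. However, one step fails as written: the construction of the simple arc $\pi$.

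Suppose you concatenate polygonal paths $\sigma_i$ from $p_i$ to $p_{i+1}$ and then shortcut self-intersections. You can lose points this way. If $\sigma_i$ crosses $\sigma_{i-1}$ at some $x \neq p_i$, the concatenated walk goes along $\sigma_{i-1}$ through $x$ to $p_i$, then along $\sigma_i$ back through $x$. Shortcutting at $x$ deletes exactly the loop containing $p_i$. The resulting simple path then lies at positive distance from $p_i$, so a thin neighborhood $D_t$ of it misses $p_i$, and agent $t$ receives the wrong share. Shortcutting only within each $\sigma_i$ does not help either, because distinct pieces may still cross and $\pi$ would not be simple.

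The fix is what the paper does: build a tree rather than an arc. Add agent $t$'s points one at a time. Join each new point to the current tree by a polygonal path in $U$ minus the other input points, and stop it at its first contact with the tree. Shortcutting self-intersections of this single connecting path is harmless, because both of its endpoints are fixed and it meets the tree only at its last point. A sufficiently small closed neighborhood of a polygonal tree is still a closed topological disk. The rest of your argument then goes through unchanged: positive distance from $\Reals^2 \setminus U$ and from the other points, and pairwise disjoint boundary curves.

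You could keep an arc instead, for example by always extending from the current endpoint through the complement of the arc built so far. That requires its own argument that this complement is connected and reachable from the endpoint.
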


\begin{proof}
(Lower bound.) By induction on the number of curves, using the Jordan curve theorem, $i-1$ pairwise disjoint Jordan curves partition the plane into $i$ regions. $k$ agents require at least $k$ regions, so $k-1$ Jordan curves are necessary.

(Upper bound.) Let $m_j$ be the number of points of color $j$. We construct pairwise disjoint closed topological disks $D_1, \dots, D_{k-1}$, where $D_i$ contains exactly $\frac{m_j}{k}$ points of each color $j$. Suppose $D_1, \dots, D_{i-1}$ have been constructed. Select $\frac{m_j}{k}$ points of each color $j$ outside $D_1 \cup \dots \cup D_{i-1}$, and let $P_i$ be the set of remaining unselected points outside these disks. Let
\[U_i = \Reals^2 \setminus (D_1 \cup \dots \cup D_{i-1} \cup P_i)\,.\]
By Lemma~\ref{lem:disks}, and since removing finitely many points preserves connectedness of open sets in the plane, $U_i$ is open and connected. Hence the selected points can be joined by a tree $T_i \subset U_i$ whose edges are polygonal paths meeting only at common endpoints (add the points one at a time, connecting each to the current tree by a simple polygonal path in $U_i$ stopped at its first contact with the tree). Since $T_i$ is compact and $U_i$ is open, $T_i$ has positive distance from $\Reals^2 \setminus U_i$. Let $D_i$ be the closed $\varepsilon$-neighborhood of $T_i$. Then for $\varepsilon$ sufficiently small, $D_i \subset U_i$ and
$D_i$ is a closed topological disk~\cite{Rourke1972}. Finally, allocate the connected region $\Reals^2 \setminus (D_1 \cup \dots \cup D_{k-1})$ to the $k$th agent. The boundaries $\partial D_1, \dots, \partial D_{k-1}$ are $k-1$ pairwise disjoint Jordan curves.
\end{proof}

A visualization of the construction from Proposition~\ref{prop:jordan_curve} is shown in Figure~\ref{fig:jordan_curve}.

\begin{figure}[bhpt]
\centering
\includegraphics[page=12]{tikz}
\caption{The construction of Proposition~\ref{prop:jordan_curve} with $n=3$ colors and $k=3$ agents. $\frac{m_j}{k}$ points of each color are selected and joined by a polygonal tree~$T_1$; $D_1$ is its closed $\varepsilon$-neighborhood. The tree~$T_2$ is built in the complement of $D_1$ and the unselected points, giving~$D_2$. The remaining connected region is allocated to the third agent. The boundaries $\partial D_1$ and $\partial D_2$ are the two Jordan curves.}
\label{fig:jordan_curve}
\end{figure}

\section{Angular Ordering}
\label{sec:angular_ordering}
In this section, we discuss the angular ordering problem. Given a set of input points in general position, fix a viewing point and record the angle at which each input point lies with respect to the viewing point. Sort the input points by their angles; this is their angular ordering. We want to determine the number of angular orderings over all viewing points.
This problem is well studied: D\'{i}az-B\'{a}\~{n}ez et al.~\cite{DiazBanez2014} prove an $O\bigl(m^4\bigr)$ upper bound and Kyn\v{c}l and Soukup~\cite{Kyncl2025} a matching $\Omega\bigl(m^4\bigr)$ lower bound for points in general position, but to the best of our knowledge all existing results are asymptotic, not exact; we present more precise results here.

\begin{remark}
The angular orderings studied here treat the input points as labeled. Each angular ordering corresponds to a cell of the ray arrangement $\mathcal{A}_{\mathrm{ray}}$ of Section~\ref{sec:star_approach}, and hence to one execution of the fairness test, Stage (i), of the star approach; the realization stage (ii) is instead indexed by cells of the finer arrangement $\mathcal{A}_{\mathrm{line}}$. Moreover, for Stage (i) only the induced circular sequence of colors matters, so the relevant quantity is the number of \emph{colored orderings}~\cite{DiazBanez2014}, in which points of the same color are indistinguishable. Every colored ordering is induced by at least one angular ordering, so our counts are upper bounds on the number of distinct necklaces the star approach must test; obtaining exact counts of colored orderings is an interesting direction for future work.
\end{remark}

\subsection{Bounds on number of orderings}
To describe an angular ordering, fix an arbitrary input point as the initial one and describe all angular orderings with it placed first. All rotations of this ordering are considered equivalent, but the one with the initial point placed first is the canonical representation. Thus, there are $(m-1)!$ possible angular orderings (but not all can be geometrically realized).

Draw a line through each pair of input points and then remove the line segments between each pair (leaving only the rays). Within each cell of the resulting partition, all center points map to the same angular ordering. So we only need to try one center point in each cell. The corresponding map of cells to angular orderings is neither injective nor surjective. The size of the range is (combinatorially) bounded above by the number of possible angular orderings: $(m-1)!$.

The size of the range is also (geometrically) bounded by the size of the domain (the set of cells). For $m \geq 7$, this gives us a tighter bound. If we used full lines, we would have $\frac{1}{8} m^4 - \frac{3}{4} m^3 + \frac{23}{8} m^2 - \frac{13}{4} m + 1$ cells (OEIS sequence A055503). Since we are only using rays, we have fewer. If we have convex input, we can compute the exact number as follows: take the number of cells created by the full lines (A055503), subtract the number of cells created by the just the line segments (A006522), add $1$ to account for the resulting central cell, and subtract $m$ to account for the merging of the cell adjacent to each side of the polygon induced by the input points with the central cell. This gives us exactly $\frac{1}{12} m^4 - \frac{1}{2} m^3 + \frac{23}{12} m^2 - \frac{5}{2} m + 1$ cells.

We can also compute the number of cells for nonconvex input as follows. Let $I_{\text{in}}$ (internal) be the number of intersections of line segments (if they weren't removed). Let $I_{\text{ex}}$ (external) be the number of intersections of rays. Add a point at infinity where all the rays end. We can then compute the number of cells using Euler's formula. We have $m + I_{\text{ex}} + 1$ vertices and $m(m-1) + 2I_{\text{ex}}$ edges (each intersection splits two rays into four pieces), so we have $2 - (m + I_{\text{ex}} + 1) + m(m-1) + 2I_{\text{ex}} = m(m-2) + 1 + I_{\text{ex}}$ faces/cells.

Every intersection is associated with a subset of 4 input points (two lines, two input points per line), so by considering all subsets of size 4, we can understand the value of $I_{\text{ex}}$. For any convex subset of size 4, if we draw full lines through each pair of points, there is one internal intersection and two external intersections. For nonconvex subsets, there are no intersections. Thus, if we sum over all subsets of size 4, we have $I_{\text{ex}} = 2I_{\text{in}}$. So we can equivalently write the number of cells as $m(m-2) + 1 + 2I_{\text{in}}$. For convex input, we have $I_{\text{in}} = \binom{m}{4}$, and we recover the original formula for the number of cells for convex input. For nonconvex input, there are always fewer internal intersections, so the number of cells is always smaller. This suggests that the number of angular orderings is likely maximized with convex input.

\subsection{Tightening bounds with equivalent cells}
We now show how we can further tighten the bound on the number of angular orderings by identifying \emph{equivalent} cells --- those that map to the same angular ordering. Each pair of equivalent cells found tightens the bound by one angular ordering; it is known that there can be no set of three cells that map to the same angular ordering~\cite{Kyncl2025}, so we search only for pairs.

\begin{prop}
\label{prop:angular_orderings}
For nearly regular input on a circle (perturbed to ensure general position), when $m \geq 5$, there are at most $\frac{1}{12} m^4 - \frac{1}{2} m^3 + \frac{23}{12} m^2 - \frac{7}{2} m + 1$ angular orderings.
\end{prop}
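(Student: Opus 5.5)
The bound differs from the convex cell count $\frac{1}{12} m^4 - \frac{1}{2} m^3 + \frac{23}{12} m^2 - \frac{5}{2} m + 1$ by exactly $m$. So the plan is to start from that count and exhibit $m$ pairwise disjoint pairs of equivalent cells, meaning pairs of distinct cells of $\mathcal{A}_{\mathrm{ray}}$ that induce the same angular ordering. Since no three cells share an ordering~\cite{Kyncl2025}, each such pair lowers the number of distinct orderings by exactly one. The nearly regular perturbation keeps the input in strictly convex position, so the convex formula from the previous subsection applies verbatim.

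\emph{Candidate pairs.} Label the points $q_0, \dots, q_{m-1}$ in cyclic order around the circle. For each $i$, look at the two unbounded cells lying beyond the edge $q_i q_{i+1}$, on either side of the ray through $q_i$ and $q_{i+1}$. I would argue that they belong to different cells but give the same ordering. A viewing point very far away in a direction roughly perpendicular to the chord $q_iq_{i+1}$ sees all points inside a small angular window, so the induced circular order is essentially the order of the projections onto a line. Two such far cells separated only by the ray through $q_i q_{i+1}$ differ in whether $q_i$ or $q_{i+1}$ comes first. In a circular order, however, the transposition of two points that sit at the extreme ends of the occupied arc, adjacent across the complementary gap, does not change the cyclic sequence. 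So the candidate rays are the ones where the swapped pair is \emph{cyclically} adjacent through the empty side. These are the rays of the lines $q_iq_{i+1}$ at the position where they bound the cell from which the whole configuration is seen inside a half-plane with $q_i,q_{i+1}$ as the extremes. Near-regularity is used here to pin down exactly which of the outer cells have this property, giving one pair per edge $i \in \Integers_m$, hence $m$ pairs.

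\emph{Verification.} For each $i$, I would first check that the two cells really are distinct: they are separated by a ray of $\mathcal{A}_{\mathrm{ray}}$ and not by a removed segment. I would then check that crossing that ray swaps only $q_i$ and $q_{i+1}$, and that this pair is adjacent across the complementary gap, so the circular order is unchanged. Finally, the $m$ pairs must be disjoint, so that no cell is counted twice. This follows from the fact that different $i$ give cells in different angular sectors at infinity.

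The main obstacle is the verification step, in particular the precise combinatorics of the unbounded cells of the ray arrangement for a perturbed regular polygon. The difficulty is confirming that each chosen ray-crossing leaves the ordering unchanged, and that the perturbation does not merge or split these cells. I would first carry this out explicitly for $m=5$ and $m=6$, and then use the rotational near-symmetry to transfer the argument to general $m$. The hypothesis $m \geq 5$ presumably guarantees that these outer cells are distinct from one another and from the central cell.
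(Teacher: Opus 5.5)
There is a genuine gap: the pairs you propose can never be equivalent. Two cells of $\mathcal{A}_{\mathrm{ray}}$ separated by a piece of a ray of $\overleftrightarrow{q_iq_{i+1}}$ always induce different angular orderings. On that ray, $q_i$ and $q_{i+1}$ lie in the same direction from the viewing point. Just before and just after the crossing they are therefore consecutive in the cyclic order, and they trade places. For $m \geq 3$ labeled points, the cyclic sequences $(\dots, q_i, q_{i+1}, \dots)$ and $(\dots, q_{i+1}, q_i, \dots)$ differ, because $q_{i+1}$ is the successor of $q_i$ in the first but not in the second.

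Your escape clause fails on both counts. Combinatorially, swapping two cyclically adjacent elements always changes a cyclic sequence of length at least three: swapping $a$ and $z$ in $(a, b, \dots, y, z)$ gives $(a, z, b, \dots, y)$. Geometrically, at a ray crossing the two points share an angle. They are therefore adjacent inside the occupied arc, not across the complementary gap, which exceeds $\pi$ because the viewing point is outside the hull. Relatedly, a far viewing point near $\overleftrightarrow{q_iq_{i+1}}$ lies in the direction parallel to the chord, not perpendicular to it. Since that line supports the polygon, $q_i$ and $q_{i+1}$ sit together at one end of the window. What you have in mind, two points at opposite ends of a half-circle with the cyclic order preserved, is crossing the open segment $q_iq_{i+1}$, where the two points become antipodal. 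That is exactly why segments are deleted from $\mathcal{A}_{\mathrm{ray}}$: the crossing creates no wall, so it yields no pair of distinct cells.

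The missing idea is that equivalent cells are far apart and linked through the central cell. For convex input, the central cell is the hull interior merged with $m$ triangular caps. The cap beyond side $q_{i+1}q_{i+2}$ is bounded by the ray of $\overleftrightarrow{q_iq_{i+1}}$ beyond $q_{i+1}$ and the ray of $\overleftrightarrow{q_{i+3}q_{i+2}}$ beyond $q_{i+2}$. These meet because $m \geq 5$, so the central cell is an $m$-pointed star. Now cross $\overleftrightarrow{q_iq_{i+1}}$ twice:
\begin{itemize}
\item from the cap beyond $q_{i+1}q_{i+2}$, through the ray beyond $q_{i+1}$;
\item from the cap beyond $q_{i-1}q_i$, through the ray beyond $q_i$.
\end{itemize}
Both crossings apply the same transposition $(q_i\,q_{i+1})$ to the central ordering, so the two cells reached realize the same ordering. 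They are nonetheless distinct cells. The first lies in the convex wedge with apex $q_{i+1}$ bounded by the rays of $\overleftrightarrow{q_iq_{i+1}}$ and $\overleftrightarrow{q_{i-1}q_{i+1}}$ beyond $q_{i+1}$, a region whose boundary consists of walls of $\mathcal{A}_{\mathrm{ray}}$; the second does not. The $m$ orderings obtained this way (the central order with one cyclically consecutive pair swapped) are pairwise distinct, and each is realized by at least two cells. This removes $m$ from the convex cell count, and for this upper bound you do not even need the Kyn\v{c}l--Soukup result.
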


\begin{proof}
When $m \geq 5$, for each sequence of four consecutive input points $p_i, \dots, p_{i+3}$, the rays $\overrightarrow{p_ip_{i+1}}$ and $\overrightarrow{p_{i+3}p_{i+2}}$ intersect\footnote{When $m < 5$, at least one pair of rays only meets at infinity.}. Each of the $m$ such pairs of rays contributes a tip to an $m$-pointed star polygon that constitutes the central cell of the arrangement. Starting from the central cell, crossing the line $\overleftrightarrow{p_{i+1}p_{i+2}}$ on each side of the removed line segment $\overline{p_{i+1}p_{i+2}}$ leads to a distinct cell that maps to the same angular ordering: the one corresponding to the transposition $(p_{i+1}p_{i+2})$. The $m$ distinct sequences of four consecutive input points correspond to $m$ distinct transpositions, hence $m$ distinct angular orderings.

A set of points on a circle is convex, so there are $\frac{1}{12} m^4 - \frac{1}{2} m^3 + \frac{23}{12} m^2 - \frac{5}{2} m + 1$ cells. We have identified $m$ pairs of equivalent cells, so there are at most $\frac{1}{12} m^4 - \frac{1}{2} m^3 + \frac{23}{12} m^2 - \frac{7}{2} m + 1$ angular orderings.
\end{proof}

For nearly regular input on a circle, both $m=5$ and $6$ have an extra pair of equivalent cells. We computationally determined~\cite{Advani2026b} that there are no extra pairs for $m \in \set{7, \dots, 16}$; it is not known if there are extra pairs for $m \geq 17$.

In an experiment, for $m \leq 9$, we randomly generated input point sets and compared the angular ordering counts to that achieved by nearly regular input on a circle. The only case in which nearly regular input was surpassed was $m=6$: a convex arrangement with 49 corresponding angular orderings was found. It does not, however, improve upon our bound, as $m=6$ has an extra pair of equivalent cells with nearly regular input; achieving 49 simply brings $m=6$ up to par with the other values of $m$.

\begin{conj}
Proposition~\ref{prop:angular_orderings} holds for arbitrary input (when $m \geq 5$).
\end{conj}

\begin{remark}
If two cells map to the same angular ordering, there is a path between them over which the accumulated transpositions compose to the identity. Thus, we are looking for a geometrically realizable set of distinct transpositions whose product is the identity permutation. The $m$ sets we have identified are each of the following form \[\set{(12), (13), \dots, (1(m-1)), (02), (03), \dots, (0(m-1))}\]
Furthermore, any path between the two cells must correspond to an identity permutation, so it is sufficient to analyze the straight-line path between two representative points.
\end{remark}

\section{Time Complexity Calculations}
\label{sec:calculations}
Recall that $k = 2^{e_1} + 2^{e_2} + \dots + 2^{e_\ell}$ with $e_1 > e_2 > \dots > e_\ell$.

\begin{lemma}
If $\ell > 1$, then $\Theta\Bigl(\sum_{j=1}^{\ell-1} \Bigl(\frac{m}{k} \sum_{i=j}^\ell 2^{e_i}\Bigr) \log\Bigl(\sum_{i=j}^\ell 2^{e_i}\Bigr)\Bigr) = \Theta(m \log k)$.
\end{lemma}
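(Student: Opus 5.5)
We prove matching upper and lower bounds by direct manipulation of the sum. Write $S_j \coloneqq \sum_{i=j}^{\ell} 2^{e_i}$ for the tail sums, so the quantity is $\frac{m}{k}\sum_{j=1}^{\ell-1} S_j \log S_j$. The key structural fact is that the exponents are strictly decreasing. Hence $2^{e_j} \le S_j < 2^{e_j+1}$, since the tail after position $j$ is a sum of distinct smaller powers of two, bounded by $2^{e_j}-1$. Consequently $S_j = \Theta(2^{e_j})$ and $\log S_j = \Theta(e_j + 1)$, with absolute constants. We write $\log$ in a way that makes $\log S_j \geq 1$ once $S_j \ge 2$; this holds for $j \le \ell-1$, where $S_j \ge 2^{e_{\ell-1}}+2^{e_\ell} \ge 3$.

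\textbf{Lower bound.} Keep only the $j=1$ term, which is allowed because $\ell>1$. We have $S_1 = k$, so that term is $\frac{m}{k}\cdot k \log k = m\log k$. All terms are nonnegative, so the sum is at least $m \log k$.

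\textbf{Upper bound.} Bound each term by $S_j \log S_j \le 2^{e_j+1}\log k$, using $S_j \le k$. Then use $\sum_{j=1}^{\ell-1} 2^{e_j+1} \le 2k$. This gives a total of at most $\frac{m}{k}\cdot 2k\log k = 2m\log k$. Combining the two directions yields $\Theta(m\log k)$.

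The only point requiring care is the degenerate behavior of the logarithm for tiny tails. This never arises here, because the sum stops at $j=\ell-1$, so every $S_j \ge 3$, and because $k \ge 3$ when $\ell>1$. So there is no real obstacle. The crude bound $\log S_j \le \log k$ already suffices; the geometric decay of the $S_j$ is not even needed, since the $S_j$ with $j\geq 2$ sum to at most $k$. I therefore expect a short two-line argument.
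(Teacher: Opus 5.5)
Your proof is correct and matches the paper's argument essentially step for step. The lower bound keeps only the $j=1$ term, where $S_1 = k$. The upper bound combines $S_j < 2^{e_j+1}$ (from the strictly decreasing exponents) with $\log S_j \le \log k$ and $\sum_{j=1}^{\ell-1} 2^{e_j+1} < 2k$ to get at most $2m\log k$.
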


\begin{proof}
We have the following.
\begin{align*}
    \sum_{j=1}^{\ell-1} \Bigl(\frac{m}{k} \sum_{i=j}^\ell 2^{e_i}\Bigr) \log\Bigl(\sum_{i=j}^\ell 2^{e_i}\Bigr) &= \frac{m}{k} \sum_{j=1}^{\ell-1} \Bigl(\sum_{i=j}^\ell 2^{e_i}\Bigr) \log\Bigl(\sum_{i=j}^\ell 2^{e_i}\Bigr) \\
    &\geq \frac{m}{k} \Bigl(\sum_{i=1}^\ell 2^{e_i}\Bigr) \log\Bigl(\sum_{i=1}^\ell 2^{e_i}\Bigr) \\
    &= \frac{m}{k} \cdot k \log k \\
    &= m \log k \\
    \intertext{Similarly:}
    \sum_{j=1}^{\ell-1} \Bigl(\frac{m}{k} \sum_{i=j}^\ell 2^{e_i}\Bigr) \log\Bigl(\sum_{i=j}^\ell 2^{e_i}\Bigr) &= \frac{m}{k} \sum_{j=1}^{\ell-1} \Bigl(\sum_{i=j}^\ell 2^{e_i}\Bigr) \log\Bigl(\sum_{i=j}^\ell 2^{e_i}\Bigr) \\
    &\leq \frac{m}{k} \sum_{j=1}^{\ell-1} \Bigl(\sum_{i=0}^{e_j} 2^i\Bigr) \log\Bigl(\sum_{i=1}^\ell 2^{e_i}\Bigr) \\
    &< \frac{m}{k} \sum_{j=1}^{\ell-1} \Bigl(2^{e_j + 1}\Bigr) \log k \\
    &= \frac{2m}{k} \sum_{j=1}^{\ell-1} \Bigl(2^{e_j}\Bigr) \log k \\
    &< \frac{2m}{k} \sum_{j=1}^\ell \Bigl(2^{e_j}\Bigr) \log k \\
    &= \frac{2m}{k} \cdot k \log k \\
    &= 2m \log k
\end{align*}
\end{proof}

\begin{lemma}
$\Theta\bigl(\frac{m}{k} \sum_{i=1}^\ell e_i 2^{e_i}\bigr) = \Theta(m \log k)$.
\end{lemma}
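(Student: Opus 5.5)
We will establish the two inequalities separately, treating $\frac{m}{k}$ as a common factor. In both directions we use only $k = \sum_{i=1}^\ell 2^{e_i}$ with $e_1 > \dots > e_\ell \ge 0$, together with the facts $2^{e_1} \le k < 2^{e_1+1}$ and hence $e_1 = \lfloor \log_2 k \rfloor$. It therefore suffices to show that $S \coloneqq \sum_{i=1}^\ell e_i 2^{e_i} = \Theta(k \log k)$.

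For the upper bound, bound each exponent by the largest one: $e_i \le e_1 \le \log_2 k$. This gives $S \le \log_2 k \sum_i 2^{e_i} = k \log_2 k$, and multiplying by $\frac{m}{k}$ yields $O(m \log k)$.

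For the lower bound, keep only the leading term: $S \ge e_1 2^{e_1}$. Since $2^{e_1} > k/2$ and $e_1 > \log_2 k - 1$, we get $S > \frac{k}{2}(\log_2 k - 1)$, which is $\Omega(k\log k)$ once $k \ge 4$, say.

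The only delicate point is small $k$. For $k \in \{2,3\}$ we have $e_1 = 1$, so $S \ge 2$, while $k \log k$ is bounded by a constant. The statement is asymptotic, so these finitely many cases are absorbed into the constants; equivalently, we can use $e_1 \ge \tfrac{1}{2}\log_2 k$ for $k\ge 2$, which gives $S \ge \frac{k}{4}\log_2 k$ uniformly. I expect no real obstacle here; the main thing to watch is stating the lower bound so that it holds for all $k \ge 2$, and not only eventually, to match how the lemma is invoked in Section~\ref{sec:main_alg}.
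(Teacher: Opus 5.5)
Your proof is correct and follows the paper's argument. For the upper bound you replace every $e_i$ by $e_1 \le \log_2 k$, and for the lower bound you keep only the leading term $e_1 2^{e_1}$ and use $2^{e_1} > k/2$. Your observation that $e_1 \ge \frac{1}{2}\log_2 k$ for all $k \ge 2$ is a small refinement that makes the lower bound uniform in $k$. The paper's final expression $\frac{m}{2}\log_2\frac{k}{2}$ vanishes at $k=2$, although its intermediate bound $\frac{m e_1}{2}$ does not.
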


\begin{proof}
We have
\[\frac{m}{k} \sum_{i=1}^\ell e_i 2^{e_i} \geq \frac{m}{k} \cdot e_1 2^{e_1} > \frac{m}{k} \cdot e_1 \frac{k}{2} = \frac{me_1}{2} > \frac{m}{2} \log_2\frac{k}{2} \,,\]
and
\[\frac{m}{k} \sum_{i=1}^\ell e_i 2^{e_i} \leq \frac{me_1}{k} \sum_{i=1}^\ell 2^{e_i} = \frac{me_1}{k} k = me_1 \leq m \log_2 k \,.\]
\end{proof}

\clearpage
\section{Necklace Diagrams}
\label{sec:necklaces}
\begin{figure}[bhpt]
\centering
\includegraphics[width=\linewidth]{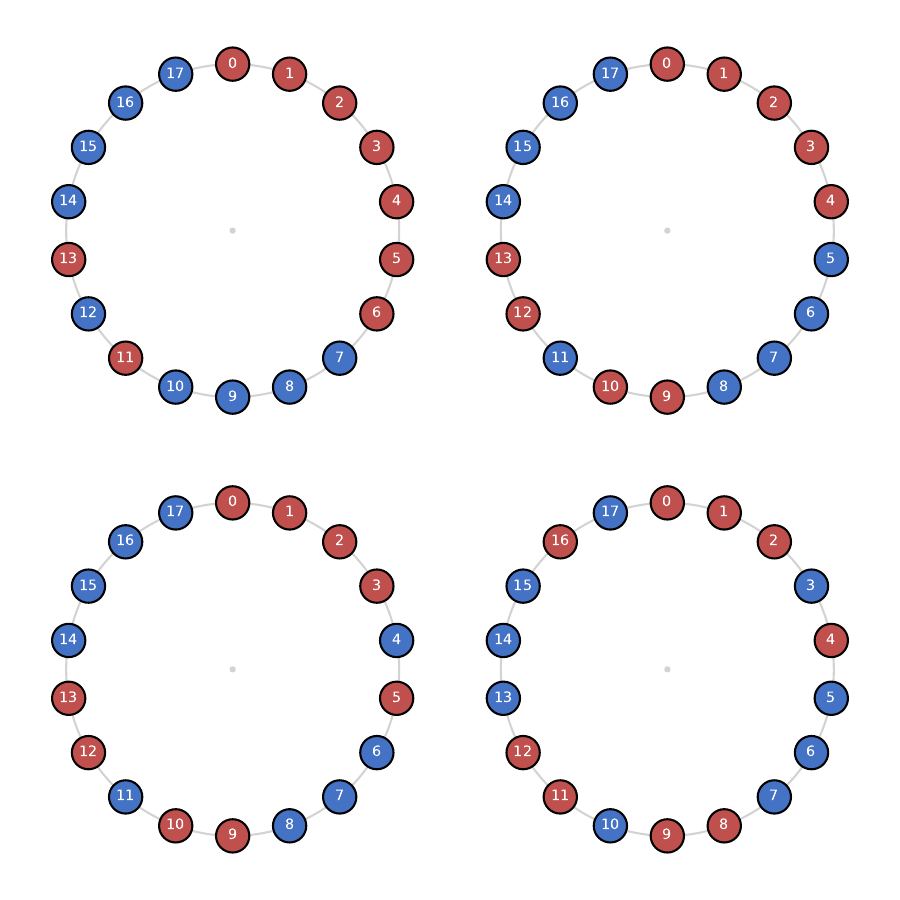}
\caption{All evenly spaced necklaces with $m=18$ beads that cannot be fairly split using fewer than $2(k-1)$ generalized cuts for $k=3$.}
\label{fig:necklaces_k3m18r9}
\end{figure}

\begin{figure}[bhpt]
\centering
\includegraphics[width=\linewidth]{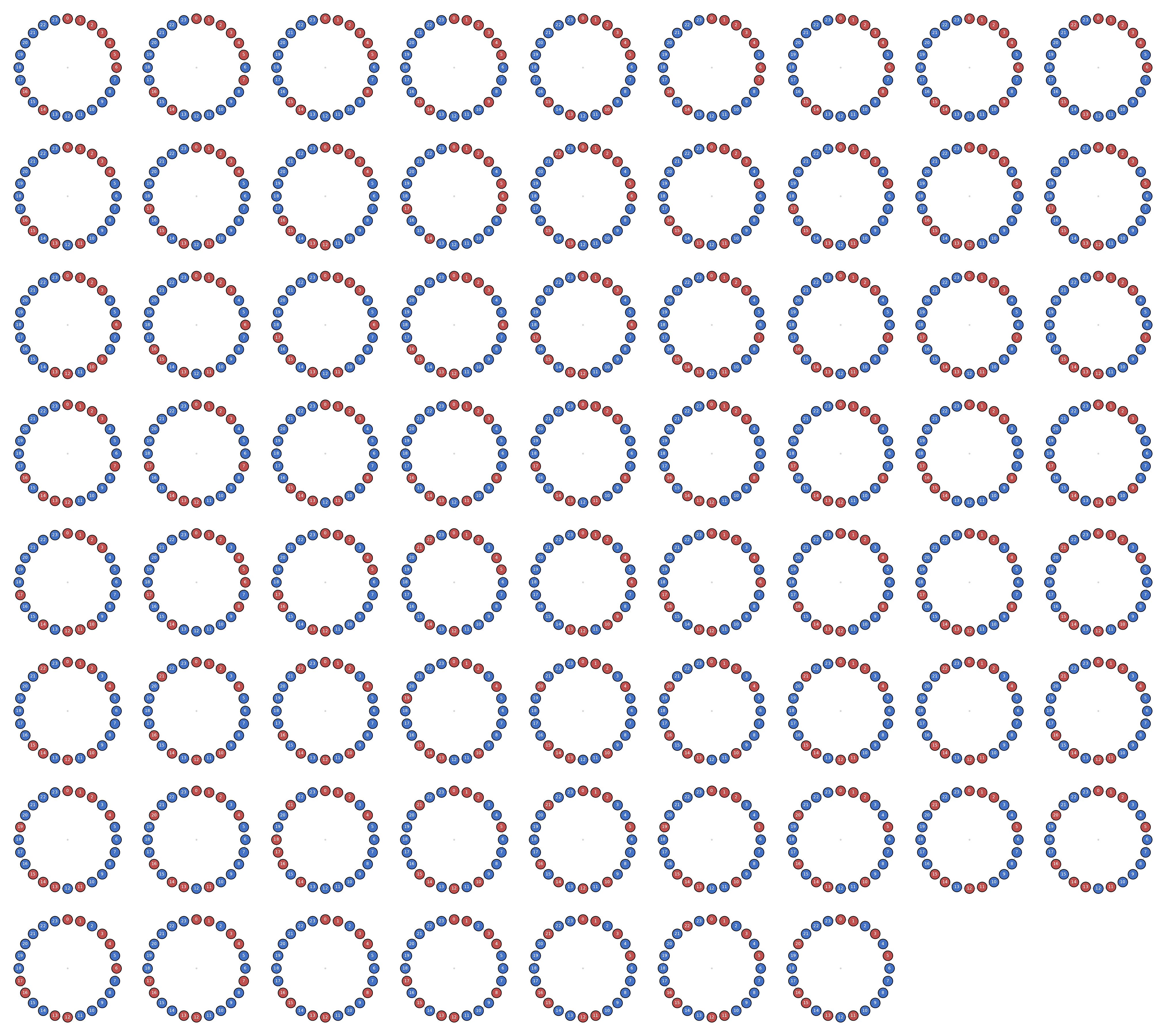}
\caption{All evenly spaced necklaces with $m=24$ beads (and $r=9$) that cannot be fairly split using fewer than $2(k-1)$ generalized cuts for $k=3$.}
\label{fig:necklaces_k3m24r9}
\end{figure}

\begin{figure}[bhpt]
\centering
\includegraphics[width=\linewidth]{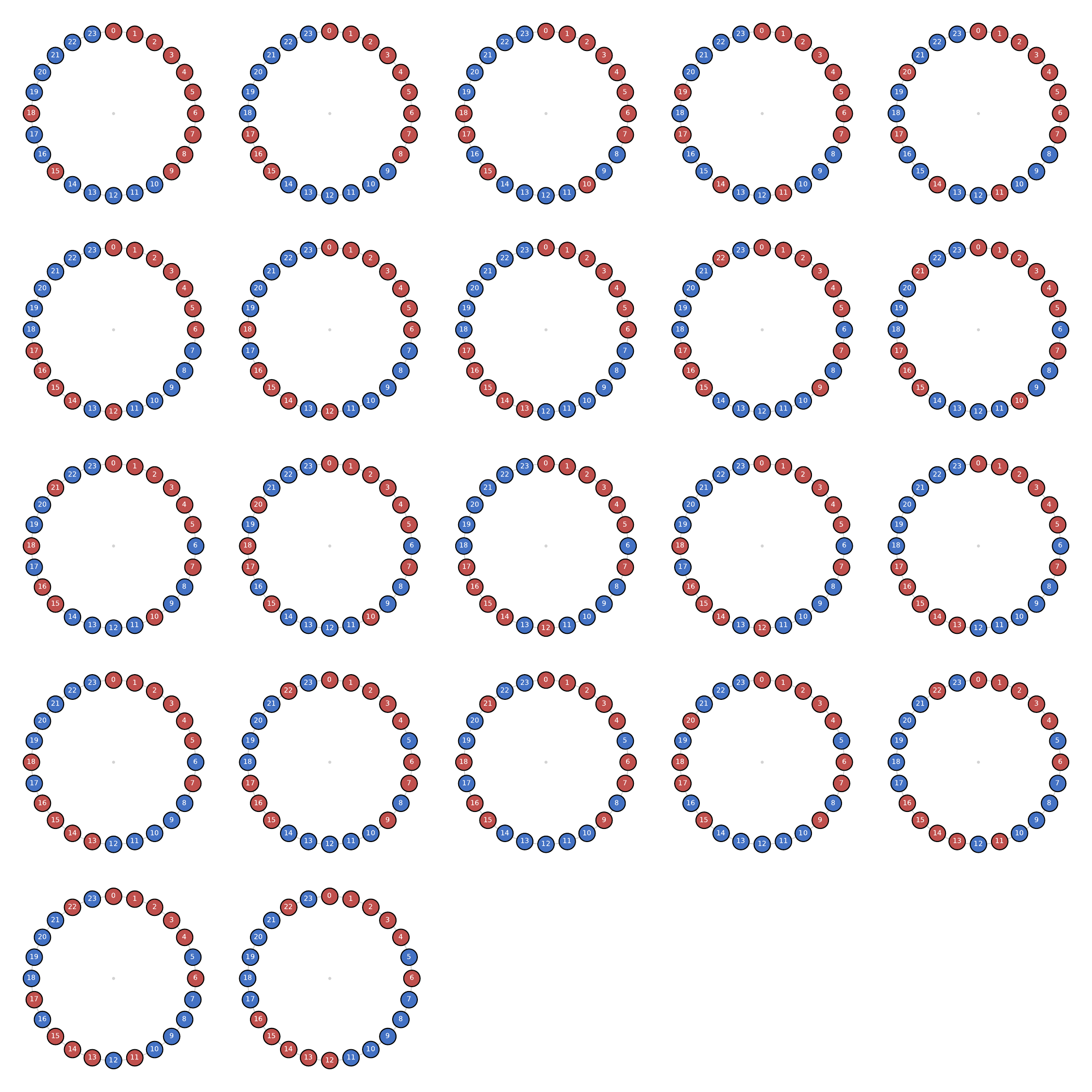}
\caption{All evenly spaced necklaces with $m=24$ beads (and $r=12$) that cannot be fairly split using fewer than $2(k-1)$ generalized cuts for $k=3$.}
\label{fig:necklaces_k3m24r12}
\end{figure}

\end{document}